\DeclareFontFamily{U}{mathx}{\hyphenchar\font45}
\DeclareFontShape{U}{mathx}{m}{n}{
      <5> <6> <7> <8> <9> <10>
      <10.95> <12> <14.4> <17.28> <20.74> <24.88>
      mathx10
      }{}
\DeclareSymbolFont{mathx}{U}{mathx}{m}{n}
\DeclareMathAccent{\widecheck}{0}{mathx}{"71}
\DeclareMathAlphabet{\pazocal}{OMS}{zplm}{m}{n}  
\def\orcidID#1{\href{http://orcid.org/#1}{\protect\raisebox{-1.25pt}{\protect\includegraphics[height=8pt]{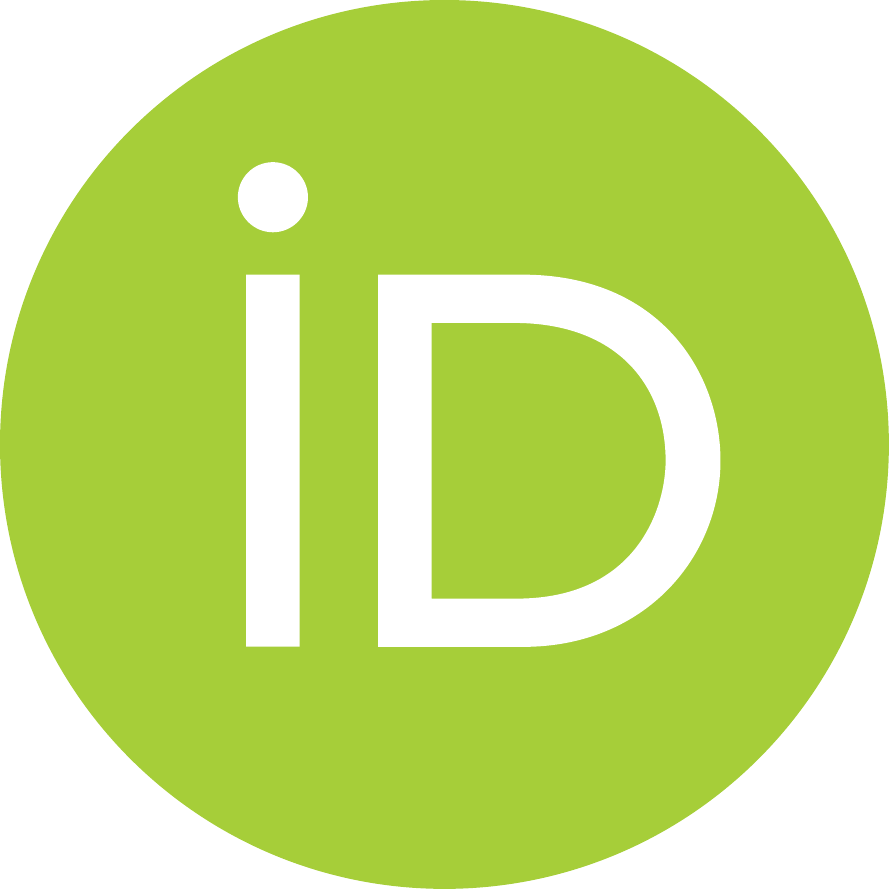}}}}
\newtheorem*{rep@theorem}{\rep@title}
\newcommand{\newreptheorem}[2]{%
\newenvironment{rep#1}[1]{%
 \def\rep@title{#2 \ref{##1}}%
 \begin{rep@theorem}}%
 {\end{rep@theorem}}}
\theoremstyle{plain}
\newtheorem{theorem}{Theorem}[section]
\newtheorem{lemma}[theorem]{Lemma}
\newtheorem{corollary}[theorem]{Corollary}
\newtheorem*{question}{Question}
\theoremstyle{definition}
\newtheorem{definition}[theorem]{Definition}
\newtheorem{example}[theorem]{Example}
\newtheorem*{answer}{Answer}
\Crefname{figure}{Fig.}{Figs.}
\crefname{figure}{figure}{figures}
\Crefname{tabular}{Table}{Tables}
\Crefname{algorithm}{Algo.}{Algos.}
\crefname{algorithm}{algorithm}{algorithms}
\Crefname{definition}{Def.}{Defs.}
\crefname{definition}{definition}{definitions}
\Crefname{theorem}{Theo.}{Theorems}
\crefname{theorem}{theorem}{theorems}
\Crefname{lemma}{Lemma}{Lemmas}
\Crefname{corollary}{Cor.}{Corollaries}
\crefname{corollary}{corollary}{corollaries}
\Crefname{proposition}{Prop.}{Propositions}
\crefname{proposition}{proposition}{propositions}
\Crefname{example}{Example}{Examples}
\Crefname{section}{Sec.}{Secs.}
\crefname{section}{section}{sections}
\Crefname{equation}{eq.}{eqs.}
\crefname{equation}{equation}{equations}
\renewcommand{\paragraph}{\@startsection{paragraph}{5}{0em}%
  {.7ex plus .2ex minus .1ex}%
  {-.5em}%
  {\bfseries}}
\newcommand{\hlbox}[1]{%
  \smallskip\begin{center}
  \fboxrule1pt\fboxsep3pt\fcolorbox{black!45}{black!8}{%
  \begin{minipage}{.96\linewidth}#1\end{minipage}}
  \end{center}\smallskip}
\def\THICKhrulefill{\leavevmode \leaders \hrule height 5pt\hfill \kern \z@}
\def\getfirst#1#2\relax{\tctestifnum{\count@stringtoks{#1}>1}{ERROR}{#1}}
\newcommand{\colorpar}[3]{\colorbox{#1}{\parbox{#2}{#3}}}
\newcommand{\marginremark}[3]{%
  \ifthenelse{\boolean{tosubmit}}{}{
	\marginnote{\colorpar{#2}{1.3\linewidth}{\raggedright\color{#1}#3}}
}}
\newcommand{\highlightedremark}[4]{%
  \ifthenelse{\boolean{tosubmit}}{}{
	\begin{center}\fcolorbox{#1}{#2}{%
	\begin{minipage}{.98\linewidth}\color{#1}%
	\textbf{\THICKhrulefill[ #3 ]\THICKhrulefill}%
	\par\noindent#4\end{minipage}}\end{center}%
}}
\newcommand{\hey}[4]{%
  \ifthenelse{\boolean{tosubmit}}{}{
  \leavevmode\marginnote{\sffamily\Large\color{#1}@\getfirst#3\relax\relax}
  \colorbox{#2}{\sffamily\bfseries{@#3:}}~{\sffamily\color{#1}#4}}}
\newcommand{\todo}[1]{%
  \ifthenelse{\boolean{tosubmit}}{}{
  \noindent\textsf{\color{Red}\textbf{TODO:} #1}%
  \begingroup\renewcommand*{\marginnotevadjust}{-3ex}%
  \marginnote{\textsf{\color{red}\bfseries TODO}}\endgroup}}
\newcommand{\tocite}[1][??]{%
  \ifthenelse{\boolean{tosubmit}}{}{
  \noindent\textbf{\sffamily\textcolor{blue!85}{[#1]}}%
  \begingroup\renewcommand*{\marginnotevadjust}{-3ex}%
  \marginnote{\textsf{\color{blue}\bfseries CITE!}}\endgroup}}
\colorlet{MS-fg}{WildStrawberry}
\colorlet{MS-bg}{Plum!6}
\colorlet{CEB-fg}{TealBlue!75!green!75!black}
\colorlet{CEB-bg}{Aquamarine!8}
\colorlet{MLZ-fg}{orange}
\colorlet{MLZ-bg}{orange!8}
\let\oldtop\top
\let\oldbot\bot
\renewcommand{\top}{\mathtt{{1}}}
\renewcommand{\bot}{\mathtt{{0}}}
\renewcommand{\emptyset}{\varnothing}
\renewcommand{\vec}{\mathaccent "017E\relax}  
\newcommand\restr[2]{{
  \left.\kern-\nulldelimiterspace #1 \vphantom{\big|} \right|_{#2}}}
\newcommand{\BB}{\ensuremath{\mathbb{B}}\xspace}  
\newcommand{\NN}{\ensuremath{\mathbb{N}}\xspace}  
\newcommand{\QQ}{\ensuremath{\mathbb{Q}}\xspace}  
\newcommand{\RR}{\ensuremath{\mathbb{R}}\xspace}  
\newcommand{\from}{\colon}
\newcommand{\card}[1]{\ensuremath{\vert{#1}\vert}\xspace}
\newcommand{\acronym}[1]{\ensuremath{\text{\uppercase{#1}}}\xspace}
\newcommand{\mathobject}[1]{\ensuremath{\text{\scalebox{.92}{$#1$}}}}
\newcommand{\Vars}{\ensuremath{\mathit{Vars}}\xspace}
\DeclareMathOperator{\bddOp}{\mathobject{B}}
\DeclareMathOperator{\BDDlab}{\mathit{Lab}}
\DeclareMathOperator{\low}{\mathit{Low}}
\DeclareMathOperator{\high}{\mathit{High}}
\newcommand{\Root}{\mathobject{R}}
\newcommand{\BDDnodes}{\mathobject{W}\xspace}
\newcommand{\BDDnodesN}{\ensuremath{\BDDnodes_{\mkern-5mu n \mkern 1mu}}\xspace}
\newcommand{\BDDnodesT}{\ensuremath{\BDDnodes_{\mkern-4mu t \mkern 1mu}}\xspace}
\newcommand{\BDDroot}[1][\bddOp]{\ensuremath{\Root_{#1}}\xspace}
\newcommand{\precneq}{\mathrel{\text{\prec@eq}}}
\newcommand{\prec@eq}{%
  \oalign{%
    \hidewidth$\m@th\prec$\hidewidth\cr
    \noalign{\nointerlineskip\kern1ex}%
    $\m@th\smash{\raisebox{0.65ex}{\rotatebox{90}{%
      \scalebox{1.1}[-1.1]{$\nshortmid$}}}}$\cr
    \noalign{\nointerlineskip\kern-.5ex}%
}}
\newcommand{\LINTIME}{\acronym{lintime}}     
\newcommand{\EXPTIME}{\acronym{exptime}}     
\newcommand{\PSPACE}{\acronym{pspace}}
\newcommand{\AT}{\acronym{at}}               
\newcommand{\ATs}{\acronym{at}{s}\xspace}
\newcommand{\SAT}{\acronym{sat}}             
\newcommand{\SATs}{\acronym{sat}{s}\xspace}
\newcommand{\DAT}{\acronym{dat}}             
\newcommand{\DATs}{\acronym{dat}{s}\xspace}
\newcommand{\DAG}{\acronym{dag}}             
\newcommand{\DAGs}{\acronym{dag}{s}\xspace}
\newcommand{\BDD}{\acronym{bdd}}             
\newcommand{\BDDs}{\acronym{bdd}{s}\xspace}
\DeclareMathOperator{\pos}{\mathit{pos}}
\newcommand{\BU}[1][]{\ensuremath{\mathtt{BU%
  \ifthenelse{\isempty{#1}}{}{_{\mkern1mu#1}}}}\xspace}
\newcommand{\BUSAT}{\BU[SAT]}
\newcommand{\BUDAT}{\BU[DAT]}
\newcommand{\BUBDD}{\ensuremath{\mathtt{BDD_{DAG}}}\xspace} 
\DeclareMathOperator{\kshortest}{\mathtt{shortest\_paths}}
\newcommand{\nodeType}[1]{\ensuremath{\mathtt{#1}}\xspace}
\newcommand{\tBAS}{\nodeType{BAS}}
\newcommand{\tOR}{\nodeType{OR}}
\newcommand{\tAND}{\nodeType{AND}}
\newcommand{\tSAND}{\nodeType{SAND}}
\newcommand{\allATs}{\ensuremath{\mathcal{T}}\xspace}
\DeclareMathOperator{\typOp}{\mathit{t}}       
\DeclareMathOperator{\chOp}{\mathit{ch}}       
\newcommand{\type}[1]{\ensuremath{\typOp({#1})}\xspace}
\newcommand{\child}[1]{\ensuremath{\chOp({#1})}\xspace}
\DeclareMathOperator{\desc}{\BAS}
\newcommand{\Suc}[1]{\operatorname{Suc}_{#1}}  
\DeclareMathOperator{\sfun}{\mathit{f}}        
\newcommand{\sfunT}[1][\T]{\ensuremath{\sfun_{\!#1}}\xspace}
\newcommand{\sem}[1]{\ensuremath{
  \llbracket{#1}\rrbracket}\xspace}
\newcommand{\semin}[1]{\ensuremath{
  \llfloor{#1}\rrfloor}\xspace}
\newcommand{\ssem}[1]{\sem{#1}\xspace}         
\newcommand{\dsem}[1]{\ssem{#1}}               
\newcommand{\dsemin}[1]{\semin{#1}}            
\newcommand{\before}[1][]{\mathbin{%
  \ifthenelse{\isempty{#1}}{%
    \tikz[baseline=-.6ex]{\draw[->,thin,x=1ex,y=1ex](0,0)--(1.7,0);}%
  }{%
    \tikz[baseline=-.48ex]{\draw[->,thin,x=1ex,y=1ex](0,0)--(1.7,0);%
    \node[x=1ex,y=1ex,inner sep=0pt]at(.75,.75){$\scriptscriptstyle{#1}$};}}}
}
\newcommand{\attack}[1][A]{\mathobject{#1}\xspace}
\newcommand{\suite}[1][S]{\ensuremath{\text{\relsize{-.5}$\pazocal{#1}$}}\xspace}
\newcommand{\allAttacks}{\ensuremath{\mathcal{A}}\xspace}
\newcommand{\allSuites}{\ensuremath{\rotatebox[origin=c]{-15}{$\mathscr{S}$\!}}\xspace}
\newcommandtwoopt{\poset}[2][\attack][\prec]{%
  \ensuremath{\langle{#1},{#2}\rangle}\xspace}
\newcommandtwoopt{\Hasse}[2][\attack][\prec]{%
  \ensuremath{\mathobject{H}_{\mkern-2mu#1}^{#2}}\xspace}
\DeclareMathOperator{\MC}{\mathrm{MC}}
\newcommand{\ccomp}{\mathobject{C}\xspace}
\DeclareMathOperator{\attrOp}{\alpha}                     
\DeclareMathOperator{\metrSOp}{
  \mkern-1.2mu\vec{\mkern1.2mu\attrOp\mkern-1.2mu}\mkern1.2mu}
\DeclareMathOperator{\metrAOp}{\widehat{\attrOp}}         
\DeclareMathOperator{\metrOp}{\widecheck{\attrOp}}        
\newcommand{\attr}[1]{\ensuremath{\attrOp(#1)}\xspace}    
\newcommand{\metrA}[1]{\ensuremath{\metrAOp(#1)}\xspace}  
\newcommand{\metr}[1]{\ensuremath{\metrOp(#1)}\xspace}    
\newcommand{\Vdom}{\mathobject{V}\xspace}
\newcommand{\domain}{\mathobject{D}\xspace}
\newcommand{\ndomain}{\ensuremath{\domain_{\mkern-1mu\star}}\xspace}
\newcommand{\operOR}{\mathbin{\triangledown}}
\newcommand{\operAND}{\mathbin{\vartriangle}}
\newcommand{\operSAND}{\mathbin{\vartriangleright}}
\DeclareMathOperator*{\bigoperOR}{\bigtriangledown}
\DeclareMathOperator*{\bigoperAND}{\bigtriangleup}
\DeclareMathOperator*{\bigoperSAND}{\mathbin{\text{\raisebox{-.7ex}{\rotatebox{90}{$\bigtriangledown$}}}}\vphantom{\bigtriangledown}}
\newcommand{\neutral}[1]{\ensuremath{1_{\!#1}}\xspace}
\newcommand{\ntOR}{\neutral{\operOR}}
\newcommand{\ntAND}{\neutral{\operAND}}
\newcommand{\ATnodes}{\mathobject{N}\xspace}
\newcommand{\ATroot}[1][\T]{\ensuremath{\Root_{#1}}\xspace}
\newcommand{\TLA}{\acronym{tla}}             
\newcommand{\OR}{\acronym{or}}               
\newcommand{\AND}{\acronym{and}}             
\newcommand{\SAND}{\acronym{sand}}           
\newcommand{\BAS}{\acronym{bas}}             
\newcommand{\BASs}{\acronym{bas}{es}\xspace}
\newcommand{\BASes}{\BASs}
\newcommand{\T}[1][]{\ensuremath{\mathobject{T}\ifthenelse{\isempty{#1}}{}{_{\hspace{-2pt}#1}}}\xspace}
\newcommand{\sampleTs}{\ensuremath{\T[\text{\larger[.5]{$s$}}]}\xspace}
\newcommand{\sampleTd}{\ensuremath{\T[\text{\larger[.5]{$d$}}]}\xspace}
\newcommand{\ff}{\ensuremath{\mathit{ff}}\xspace}
\newcommand{\ww}{\ensuremath{\mathit{w}}\xspace}
\newcommand{\cc}{\ensuremath{\mathit{c\mkern-3mu c}}\xspace}
\DeclareMathOperator{\logicformula}{\mathobject{L}}
\newcommand{\LT}[1][\T]{\ensuremath{\logicformula_{#1}}\xspace}
\newcommand{\bddT}[1][\T]{\ensuremath{\bddOp_{#1}}\xspace}
\newcommand{\bddf}{\bddT[\!f]}
\newcommand{\AC}[0]{\operatorname{AC}}
\newcommand*{\ldb}{\{\mskip-5mu\{}
\newcommand*{\rdb}{\}\mskip-5mu\}}
\def\TITLE{Efficient and Generic Algorithms for\\ Quantitative Attack Tree Analysis}
\begin{document}
\title{%
	\TITLE
	\thanks{%
		We thank Sebastiaan Joosten for his help with the proof of
		\Cref{theo:NP_hard}, and Lars Kuijpers for collaborations that
		led to \Cref{alg:shortest_path_BDD}.
	This work was partially supported by ERC Consolidator Grant 864075
	(\href{https://www.utwente.nl/en/eemcs/fmt/research/projects/caesar/}{\emph{CAESAR}}).
	Funded by the European Union under GA n.101067199-ProSVED. Views and
	opinions expressed are those of the author(s) only and do not
	necessarily reflect those of the European Union or The European Research
	Executive Agency. Neither the European Union nor the granting authority
	can be held responsible for them.
	}
}
\author{
	\IEEEauthorblockN{%
		Milan Lopuha\"a-Zwakenberg\IEEEauthorrefmark{2}\orcidID{0000-0001-5687-854X}
		\qquad
		Carlos E.\ Budde\IEEEauthorrefmark{1}\orcidID{0000-0001-8807-1548}
		\qquad
		Mari\"elle Stoelinga\IEEEauthorrefmark{2}\IEEEauthorrefmark{3}\orcidID{0000-0001-6793-8165}
	}
	
	\IEEEauthorblockA{%
		\IEEEauthorrefmark{1}{University of Trento,
			CyberSecurity, Trento, Italy.}\\
		\IEEEauthorrefmark{2}{University of Twente,
			Formal Methods and Tools, Enschede, the Netherlands.}\\
		\IEEEauthorrefmark{3}{Radboud University,
			Department of Software Science, Nijmegen, the Netherlands.}\\
		{\texttt{\small carlosesteban.budde@unitn.it~~\{m.a.lopuhaa,m.i.a.stoelinga\}@utwente.nl}}
	}
}
\maketitle
\thispagestyle{plain}  
\pagestyle{plain}

\begin{abstract}
Numerous analysis methods for quantitative attack tree analysis have been proposed.
These algorithms compute relevant security metrics, i.e.\ performance indicators that quantify how good the security of a system is;
typical metrics being the most likely attack, the cheapest, or the most damaging one. However, existing methods are only geared towards specific metrics or do not work on general attack trees. This paper classifies attack trees in two dimensions: proper trees vs.\ directed acyclic graphs (i.e.\ with shared subtrees); and static vs.\ dynamic gates.
For three out of these four classes, we propose novel algorithms that work over a generic attribute domain, encompassing a large number of concrete security metrics defined on the attack tree semantics; dynamic attack trees with directed acyclic graph structure are left as an open problem.
We also analyse the computational complexity of our methods.
\end{abstract}

\begin{IEEEkeywords}
	Attack trees,
	security metrics,
	BDD algorithms,
	computational complexity,
	formal methods.
\end{IEEEkeywords}

\section{Introduction}
\label{sec:intro}

Attack trees (\ATs) are important tools to analyse the security of complex systems.
Their intuitive definition and general applicability makes them widely studied in academia and used in industry.
\ATs are part of many system engineering frameworks, e.g.\ \emph{UMLsec} \cite{Jur02,RA15} and \emph{SysMLsec} \cite{AR13}, and supported by industrial tools such as Isograph's \emph{AttackTree}~\cite{IsographAT}. 

An \AT is a hierarchical diagram that describes potential attacks on a system.
Its root represents the attacker's goal, and the leaves represent basic attack steps: indivisible actions of the attacker.
Intermediate nodes are labeled with gates, that determine how their children activate them.
The most basic notions of \ATs have \OR and \AND gates only; many extensions exist to model more elaborate attacks.

Attack trees are often studied via \emph{quantitative analysis}, where \ATs are assigned a wide range of security metrics.
Typical examples of such metrics are the minimal time \cite{KRS15,AHPS14,KSR+18,LS2021}, minimal cost \cite{AGKS15}, or maximal probability \cite{JKM+15} of a successful attack, as well as Pareto analyses that study trade offs among attributes \cite{KRS15,FW19}.
Calculating such metrics is essential when comparing alternatives or making trade offs.
This leads to the following research problem:

\begin{question}
	How can AT metrics be calculated efficiently?
\end{question}

\begin{figure}[t]
	\vspace{-.7ex}
	\hspace*{.7em}%
	\begin{minipage}[b]{.6\linewidth}
		\includegraphics[width=\linewidth]{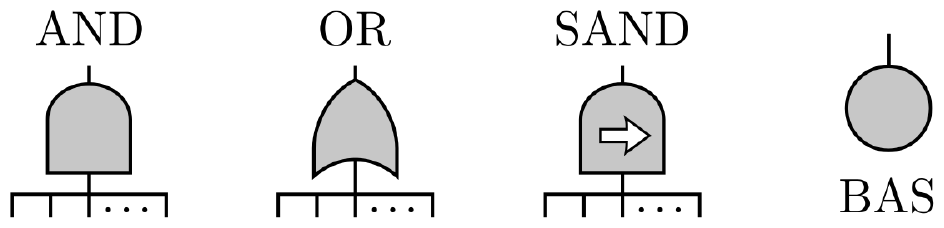}
	\end{minipage}
	\quad~
	\begin{minipage}[b]{.28\linewidth}
		\vfill~\\
		\captionof{figure}{\raggedright Nodes in an attack tree.}
		\label{fig:AT_nodes}
	\end{minipage}
	\vspace{-2ex}
\end{figure}

Numerous algorithms have been proposed to address this question. Such algorithms exploit a plethora of techniques, for instance Petri nets \cite{DMCR06}, model checking \cite{AGKS15}, and Bayesian networks \cite{GIM15}.
While these algorithms provide good ways to compute metrics, they also suffer from several drawbacks:
\begin{enumerate*}[label=(\arabic*)]
\item	Many of them are geared to specific attributes, such as attack time
		or probability, and the procedure may not extend to other metrics;
\item	Several algorithms do not exploit the acyclic structure of the \AT,
		especially approaches based on model checking;
\item	Since their application is mostly illustrated on small examples,
		it is unclear how these approaches scale to larger case studies.
\end{enumerate*}


\begin{table*}
	\centering


\begingroup
\arrayrulecolor{gray}
\colorlet{rowcolA1}{white}
\colorlet{rowcolB1}{gray!28}
\colorlet{rowcolA2}{rowcolA1}
\colorlet{rowcolB2}{rowcolB1}
\rowcolors{2}{rowcolA1}{rowcolB1}
\extrarowheight=\aboverulesep
\addtolength{\extrarowheight}{\belowrulesep}
\aboverulesep=0pt
\belowrulesep=0pt
\def\metric#1{\parbox[c][5ex]{7.7em}{\bfseries #1}}
\def\algo#1#2{\parbox[c]{#1}{\centering{#2}}}
\def\OPEN{\textsf{\scshape\bfseries open~problem}}
\def\BU{\acronym{bu}}
\def\BUKordy{$\pazocal{C}$-\BU\xspace}
\def\head#1{\larger[.5]\sffamily\color{white}{#1}}
\begin{tabular}{>{~}l|>{}c|>{}c|>{}c>{}c|>{}c}
	\rowcolor{black!75}
		  \head{Metric}
		& \head{Static tree}
		& \head{Dynamic tree}
		& \multicolumn{2}{>{}c|}{\head{Static \DAG}}
		& \head{Dynamic \DAG}
	\\[.5ex]\midrule
	\metric{min cost}
		& \algo{8em}{\BU \cite{MO06,Wei91,Sch99}}
		& \algo{6em}{\BU \cite{JKM+15}}
		& \algo{7em}{\acronym{mtbdd} \cite{BET13}}
		& \algo{5em}{\BUKordy \cite{KW18}}
		& \algo{5em}{\acronym{pta} \cite{KRS15}}
	\\
	\metric{min time}
		& \algo{7em}{~\BU \cite{MO06,HAF+09}}
		& \algo{10em}{\acronym{aph} \cite{AHPS14} ~~ \BU \cite{JKM+15}}
		& \multicolumn{2}{>{}c|}{%
		  \algo{7em}{Petri nets \cite{DMCR06}}}
		& \algo{5em}{\acronym{milp} \cite{LS2021}}
	\\
	\metric{min skill}
		& \algo{7em}{~\BU \cite{MO06,BFM04}}
		& \algo{6em}{\BU \cite{JKM+15}}
		& \multicolumn{2}{>{}c|}{%
		  \algo{8em}{\BUKordy \cite{KW18}}}
		& ---
	\\
	\metric{max damage}
		& \algo{8em}{~\BU \cite{MO06,HAF+09,BFM04}}
		& \algo{6em}{\BU \cite{JKM+15}}
		& \algo{7em}{\acronym{mtbdd} \cite{BET13}}
		& \algo{5em}{\acronym{dpll} \cite{JW08}}
		& \algo{5em}{\acronym{pta} \cite{KRS15}}
	\\
	\metric{probability}
		& \algo{7em}{\BU \cite{BLP+06,HAF+09}}
		& \algo{7em}{\acronym{aph} \cite{AHPS14}}
		& \algo{7em}{\BDD \cite{Rau93}}
		& \algo{5em}{\acronym{dpll} \cite{JW08}}
		& \algo{7em}{\acronym{i/o-imc} \cite{AGKS15}}
	\\
	\metric{Pareto fronts}
		& \algo{7em}{\BU \cite{AN15,HAF+09}}
		& \algo{7.1em}{\bfseries\Cref{lemma:antichain}}
		& 		  \algo{7em}{\BUKordy \cite{FW19}}
		& \algo{5em}{\bfseries\Cref{lemma:antichain}}
		& \algo{6em}{\acronym{pta} \cite{KRS15}}
	\\
	\rowcolor{rowcolB2}
	\metric{Any of the above}
		& \algo{7em}{\bfseries\Cref{alg:bottom_up_SAT}:~$\mathtt{BU_{SAT}}$}
		& \algo{7em}{\bfseries\Cref{alg:bottom_up_DAT}:~$\mathtt{BU_{DAT}}$}
		& \multicolumn{2}{>{}c|}{%
		  \algo{9em}{\bfseries\Cref{alg:bottom_up_BDD}:~$\mathtt{BDD_{DAG}}$}}
		& \algo{7.5em}{\OPEN\footnotemark[1]}
	\\
	\rowcolor{rowcolA2}
	\metric{$\boldsymbol{k}$-top metrics}
		& \algo{8.3em}{\BU-projection \cite{MO06}}
		& \algo{10em}{\BU \cite{GRK+16} ~~ \bfseries\Cref{lemma:topk}}
		& \multicolumn{2}{>{}c|}{%
		  \algo{13.5em}{\bfseries\Cref{alg:shortest_path_BDD}:~$\mathtt{BDD\_shortest\_paths}$}}
		& \algo{7.5em}{\OPEN\footnotemark[2]}
	\tikz[overlay]{\draw[black,very thick] (-15.4,-.25) rectangle ++(15.6,1.94);}
	\\
\end{tabular}
\endgroup

	\vspace{1ex}
	\caption{Efficient algorithms to compute security metrics on different \AT
		classes~(details and abbreviations are in \Cref{sec:related_work}).
		${}^1$~\Cref{alg:satmod} reduces runtime for any found method;
		${}^2$ \Cref{lemma:topk} reduces $k$-top calculation to metric calculation.}
	\label{tab:all_algos_intro}
	\vspace{-2ex}
\end{table*}

The aim of this work is to answer the question above for a general class of metrics. The answer hinges on two factors:
\begin{enumerate}[leftmargin=3ex]
\item	\emph{Static vs.\ dynamic ATs:}
Apart from the standard \ATs, which we call \emph{static}, an important extension are \emph{dynamic} \ATs. These allow for sequential-\AND (\SAND) gates, which require their children to succeed in left-to-right order \cite{JKM+15,AGKS15}. A formal approach to analyse dynamic \ATs requires more rich semantics than for static \ATs.
\item	\emph{Tree vs.\ DAG structure:}
Their name notwithstanding, \ATs can be directed acyclic graphs (\DAGs), in which a node may have multiple parents. The additional condition that an \AT is tree-structured allows for considerably faster computation of metrics.
\end{enumerate}
Thus, our contribution answers the question as follows:

\begin{answer}
	We provide efficient and generic algorithms to compute \AT metrics,
	by tailoring them to 
	our 2-dimensional categorisation: 
	static vs.\ dynamic \ATs, and tree-structured vs.\ \DAG-structured \ATs.
\end{answer}

On \cpageref{sec:intro:contributions} we give a precise description of our contributions, presented in accordance to the categorisation above.
Our algorithmic results are summarised in \Cref{tab:all_algos_intro}, and \Cref{sec:related_work} provides an elaborate comparison with related work.

There are various naming conventions in the \AT literature.
We follow the terminology of works like \cite{MO06,BET13,BS21,BKS21}, akin to fault tree standards on which attack trees were inspired~\cite{VSD+02}.
Two points to highlight in this respect are our use of \emph{attack tree} to refer also to \DAG-like structures, and our use of \emph{dynamic} to mean \ATs with sequential-\AND gates.
These and all other terms used in this work are formalised in \Cref{sec:AT:syntax,sec:SAT:semantics,sec:SAT:metrics}, and \Crefrange{sec:DAT:wellformed}{sec:DAT:metrics}.
In \Cref{sec:related_work} we compare our terminology to alternatives in the literature.

\paragraph{Static trees}
The simplest category of Attack Trees are tree-structured static \ATs.
As shown in a seminal paper by Mauw \& Oosdijk \cite{MO06}, metrics can be computed for these \ATs in a bottom-up fashion, using appropriate operators $\operOR$ and $\operAND$ on a set $V$, resp.\ for the \OR and \AND gates in the tree.
We show this as \Cref{alg:bottom_up_SAT}.
A key insight in \cite{MO06} is that this procedure works whenever the algebraic structure $(\Vdom,\operAND,\operOR)$ constitutes a semiring, i.e.\ $\operAND$ must distribute over $\operOR$.
\\[.3ex]
We provide an alternative proof of correctness for this result: while \cite{MO06} deploys rewriting rules for attack trees, we show in \Cref{sec:mod} that it directly follows from the validity of \emph{modular analysis} on DAG-structured ATs.
Furthermore, we propose new categories of attribute domains, which extend the application of the bottom-up algorithm to compute popular security metrics. Ordered semiring domains constitute an important category of metrics, for which we show that derived metrics such as Pareto fronts and $k$-top values also fall within the semiring attribute domain framework.

\paragraph{Static DAGs}
It is well-known that static \ATs with \DAG structure cannot be studied with bottom-up procedures \cite{Rau93,BK18}.
Many algorithms exist to tackle such \ATs, mostly geared to specific metrics \cite{AGKS15,BET13,DMCR06,JW08,KW18}.
\Cref{sec:SAT_DAGs} presents a generic algorithm that works for any semiring attribute domain $(\Vdom,\operOR,\operAND)$ that is absorbing, i.e.\ $x\operOR(x\operAND y)=x$, and has neutral elements \ntOR and \ntAND for operators $\operOR$ and $\operAND$ resp.

Concretely, we exploit a binary decision diagram representation (\BDD) of the attack tree.
Our algorithm visits each \BDD node once and is thus linear in its size.
The caveat is that \BDDs can be of exponential size in the number of basic attack steps (\BASes), but one cannot hope for faster algorithms: as we show, computing a minimal attack is an NP-hard problem. However, in practice the \BDD approach still yields computational gains: we show that we can split up the calculation according to the \emph{modules} of the \AT, i.e. sub\DAGs only connected to the rest of the \AT by their root.
This speeds up calculations considerably.
Moreover, \BDDs are known to be compact in practice \cite{Bry86}, and allow parallel traversals \cite{ODP17}, making them an overall efficient choice.
Furthermore, we show that the bottom-up algorithm---of linear runtime on any static tree---works also when applied to \DAG \ATs, if the operators $\operOR$ and $\operAND$ are idempotent, i.e.\ $x \operOR x = x \operAND x = x$.

\paragraph{Dynamic trees}
%
Metrics for dynamic attack trees (DATs) are usually decoupled from semantics, and defined either on the syntactic \AT structure, or ad hoc for the selected computation method \cite{JKM+15,KRS15,ANP16,KW18}.
The main obstacle to a semantics-based approach is to choose semantics for \DATs in a way that supports a proper definition of metric, i.e.\ that is compatible with the notion of metric of static \ATs, and that is as generic as the attribute domains from \cite{MO06}.
In particular, the interaction among multiple \SAND gates is nontrivial, because they may impose conflicting execution orders on the \BAS of the tree.
%
%

We follow \cite{BS21,LS2021} in giving semantics to \DATs via partially ordered sets (\emph{posets}).
Each poset \poset represents an attack scenario, where \attack collects all attacks steps to be performed, and $a\prec b$ indicates that step $a$ must be completed before step $b$ starts.
%
%
%
This set up enables us to define a notion of metric for \DATs based on their semantics.
\emph{A key contribution is defining general metrics for dynamic attack trees, and showing that tree-structured DATs are analysable by extending the bottom-up algorithm with an additional operator (see \Cref{alg:bottom_up_DAT})}.
Concretely, we use attribute domains with three operators: $\operOR$, $\operAND$, $\operSAND$, where $\operSAND$ distributes over $\operOR$ and $\operAND$, and $\operAND$ distributes over $\operOR$.
We prove this \namecref{alg:bottom_up_DAT} correct in our formal semantics.
This is relevant and non-trivial because
(a) earlier algorithms do not provide explicit correctness results in terms of semantics, and
(b) the metrics are formally defined on the poset semantics of a \DAT, while the algorithm works on its syntactic \AT structure. 

\paragraph{Dynamic DAGs}
Efficient computation of metrics for \DAG-structured \DATs is left as future research challenge.
A na\"ive, inefficient algorithm would enumerate all posets in the semantics.
Instead, one could extend \BDD-algorithms for static \DAGs to dynamic \ATs.
This is non-trivial, as \BDDs ignore the order of attack steps.
Thus, efficient analysis of \DAG-structured DATs is an important open problem.
Yet we do show how modular analysis works here as well, and can be used to speed up any algorithm.


\paragraph{Paper structure}
We list our contributions next, and give minimal background in \Cref{sec:AT}\textbf{.}\;
\Crefrange{sec:SAT}{sec:SAT_DAGs} study static  attack trees, and
\Crefrange{sec:DAT}{sec:DAT_DAGs} study dynamic attack trees.
\Cref{sec:mod} presents modular analysis, and \Cref{sec:order} studies Pareto fronts and $k$-top metrics.
The paper discuses related work and concludes in \Cref{sec:related_work,sec:conclu}.

\hlbox{%
\paragraph{Contributions of \cite{BS21}}
\label{sec:intro:contributions}
An earlier version of this paper was published in \cite{BS21}, whose key contributions are:
\begin{enumerate}[label=\textbf{\arabic*.},leftmargin=1.6em]
\item	An efficient and generic \BDD-based algorithm for \DAG-\SATs,
		working for absorbing semiring attribute domains with neutral elements (\Cref{sec:SAT_DAGs});
\item	A theorem proving that computing a minimal successful attack
		is NP-hard (\Cref{sec:SAT_DAGs:complexity});
\item	The adaptation of semiring attribute domains to \DATs,
		defining general metrics on \DATs (\Cref{sec:DAT});
\item	A bottom-up algorithm for tree-structured \DATs (\Cref{sec:DAT_trees});
\item	A \BDD-based algorithm to compute $k$-top metrics on \SATs
		(\Cref{sec:ktop});
\item	Future directions to analyse \DAG-\DATs efficiently
		(identified as an open problem in \Cref{sec:DAT_DAGs}).
\end{enumerate}}

\vspace{-1ex}

\hlbox{%
\paragraph{New contributions in this version}
Contributions of the current paper over \cite{BS21} are:
%
\begin{enumerate}[label=\textbf{\arabic*.},leftmargin=1.6em]
\setcounter{enumi}{6}
\item   The result that for idempotent semiring domains, bot\-tom-\-up algorithms---of linear runtime---work even for \DAG-structured \SATs (\Cref{sec:idempotent});
\item   A unified method to calculate
		multiple metrics sim\-ul\-tan\-eously---as needed by Pareto fronts, $k$-top metrics, and uncertainty sets---exploiting ordered attribute domains (\Cref{sec:order});
\item   General semantics for \DATs
		that cover the entire universe of models, also less restrictive than \cite{BS21}
		for the well-formedness criterion
		(\Cref{sec:DAT:semantics});
\item   A formalised modular analysis technique to improve the runtime of DAG algorithms, by combining analysis results for independent subtrees (\Cref{sec:mod});
\item   An improved description of the \BDD-based algorithm, illuminating the intuition behind the method (\Cref{alg:bottom_up_BDD} in \Cref{sec:SAT_DAGs:BDDs}).
\end{enumerate}
\smallskip
We place ourselves in the literature in \Cref{tab:all_algos_intro} and \Cref{sec:related_work}.
}

\section{Attack Trees}
\label{sec:AT}

\subsection{Attack tree models}
\label{sec:AT:models}

Syntactically, an attack tree is a rooted \DAG that models an undesired event caused by a malicious party, e.g.\ a theft.
\ATs show a top-down decomposition of a top-level attack---the root of the \DAG---into simpler steps.
The leaves are basic steps carried out by the attacker.
The nodes between the basic steps and the root are intermediate attacks, and are labelled with gates to indicate how its input nodes (children) combine to make the intermediate attack succeed.

\paragraph{Basic Attack Steps}
The leaves of the \AT represent indivisible actions carried out by the attacker, e.g.\ smash a window,
decrypt a file, 
etc.
These \BAS nodes can be enriched with attributes, such as its execution time, the cost incurred, and the probability with which the \BAS occurs.
We model attributes via an attribution function $\attrOp\from\BAS\to\Vdom$.

\paragraph{Gates}
Non-leaf nodes serve to model intermediate steps that lead to the top-level goal.
Each has a logical \emph{gate} that describes how its children combine to make it succeed:
an \OR gate means that the intermediate attack will succeed if any of its child nodes succeeds;
an \AND gate indicates that all children must succeed, in any order or possibly in parallel;
a \SAND gate (viz.\ sequential-\AND) needs all children to succeed sequentially in a left-to-right order.



\begin{figure}
	\centering
	\def\HEIGHT{23ex}
	\begin{subfigure}[b]{.48\linewidth}
		\centering\hspace*{-2em}%
		\includegraphics[height=\HEIGHT]{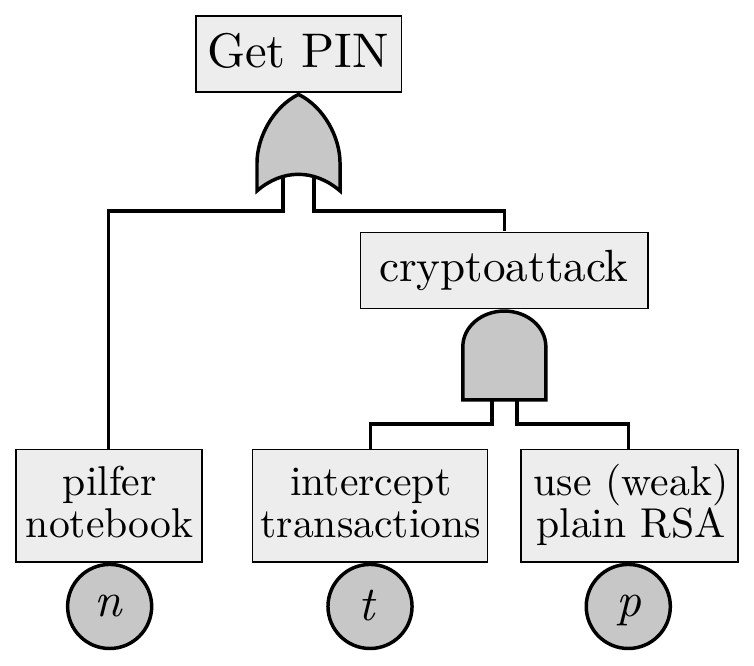}
		\caption{A static-tree \AT: \sampleTs\hspace*{2em}}
		\label{fig:AT:example:static}
	\end{subfigure}
	\hspace{-1em}
	\begin{subfigure}[b]{.48\linewidth}
		\centering
		\includegraphics[height=\HEIGHT]{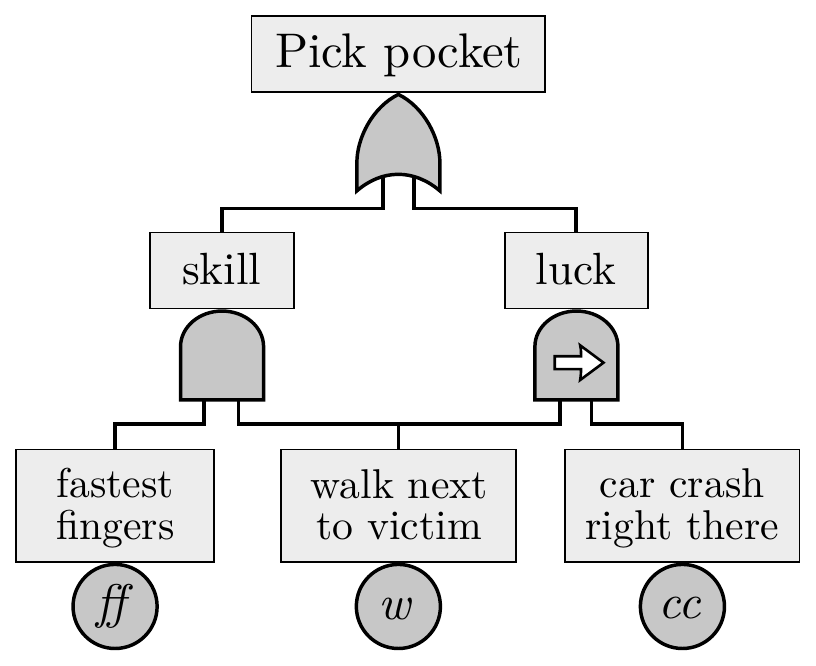}
		\caption{A dynamic-\DAG \AT: \sampleTd}
		\label{fig:AT:example:dynamic}
	\end{subfigure}
	\caption{Attack tree models}
	\label{fig:AT}
	\vspace{-2ex}
\end{figure}

\begin{example}
	\label{ex:running_examples}
	\Cref{fig:AT:example:static} shows a static attack tree, \sampleTs, that models how a \acronym{pin} code can be obtained by either pilfering a notebook, or via a cryptographic attack.
	The pilfering is considered atomic, while the cryptoattack consists of two steps which must both succeed: intercepting transactions, and abusing weak \acronym{rsa} encryption. 
	Note that \sampleTs has a plain tree structure.
	Instead, \Cref{fig:AT:example:dynamic} shows a dynamic attack tree, \sampleTd, with a \DAG structure.
	Its \TLA is to pick a pocket, which is achieved either by having ``skill'' or ``luck.''
	In both cases the attacker must walk next to the victim, so these gates share the \BAS child $w$, making \sampleTd not a tree.
	In the case of ``luck'' the order of events matters: if the attacker first walks next to the victim and then a traffic accident happens, the pick-pocket succeeds.
	Thus, this intermediate attack is modelled with a \SAND gate.
	Instead, ``fastest fingers'' is an inherent attacker flair that is always present.
	It is thus meaningless to speak of an order w.r.t.\ the attacker-victim encounter, so an \AND gate is used.
\end{example}

\paragraph{Security metrics}
A key goal in quantitative security analysis is to compute relevant \emph{security metrics}, which quantify how well a system performs in terms of security.
Typical examples are the cost of the cheapest attack, the probability of the most likely one, the damage produced by the most harmful one, and combinations thereof.
Security metrics for \ATs are typically obtained by combining the attribute values $\attr{a}\in\Vdom$ assigned to each $a\in\BAS$.
For instance, the cheapest attack is that for which the sum of the cost of its BASes is minimal.
\emph{The topic of this paper is how to compute large classes of security metrics in generic and efficient ways}.

\subsection{Attack tree syntax}
\label{sec:AT:syntax}

\ATs are rooted \DAG{s} with typed nodes: we consider types $\mathbb{T}=\{\tBAS,\tOR,\tAND,\tSAND\}$.
The edges of an \AT are given by a function $\chOp$ that assigns to each node its (possibly empty) sequence of children.
We use set notation for sequences, e.g.\ $e\in(e_1,\ldots,e_m)$ means $\exists i.\,e_i=e$, and we denote the empty sequence by $\varepsilon$.

\begin{definition}
	\label{def:AT:syntax}
	An \emph{attack tree} is a tuple
	\mbox{$\T=\left(\ATnodes,\typOp,\chOp\right)$} where:
	\begin{itemize}[topsep=.5ex,parsep=.1ex,itemsep=0pt]
	\item	$\ATnodes$ is a finite set of \emph{nodes};
	\item	$\typOp\from \ATnodes\to\mathbb{T}$
			gives the \emph{type} of each node;
	\item	$\chOp\from \ATnodes\to\ATnodes^\ast$
			gives the sequence of \emph{children} of a node.
	\end{itemize}
	Moreover, \T satisfies the following constraints:
	\begin{itemize}[topsep=.5ex,parsep=.1ex,itemsep=0pt]
	\item	$(\ATnodes,\mathobject{E})$ is a connected \DAG, where\\
			\phantom{.}\hfill%
			$\mathobject{E}=\left\{(v,u)\in\ATnodes^2\mid u\in\child{v}\right\}$;
	\item	\T has a unique root, denoted \ATroot:\\
			\phantom{.}\hfill%
			$\exists!\,\ATroot\in \ATnodes.~%
			\forall v\in \ATnodes.~\ATroot\not\in\child{v}$;
	\item	{$\normalfont\BAS_{\T}$} nodes are the leaves of \T:\\
			\phantom{.}\hfill%
			$\forall v\in \ATnodes.~%
			{\type{v}=\tBAS} \Leftrightarrow {\child{v}=\varepsilon}$.
	\end{itemize}
\end{definition} 

We omit the subindex \T if no ambiguity arises, e.g.\ an attack tree $\T=(\ATnodes,\typOp,\chOp)$ defines a set $\BAS\subseteq\ATnodes$ of basic attack steps.
If \mbox{$u\in\child{v}$} then $u$ is called a \emph{child} of $v$, and $v$ is a \emph{parent} of $u$.
We write $v=\AND(v_1,\ldots,v_n)$ if ${\type{v}=\tAND}$ and ${\child{v}=(v_1,\ldots,v_n)}$, and analogously for $\OR$ and $\SAND$, denoting $\child{v}_i=v_i$.
Moreover we denote the universe of \ATs by \allATs and call $\T\in\allATs$ \emph{tree-structured} if
${\forall v,u\in\ATnodes. \child{v}\cap\child{u}=\varepsilon}$;
else we say that \T is \emph{DAG-structured}.
If $\tSAND \notin \type{N}$ we say that \T is a \emph{static attack tree} (\SAT); else it is a \emph{dynamic attack tree} (\DAT).



\section{Static Attack Trees}
\label{sec:SAT}


In the absence of \SAND gates the order of execution of the BASes is irrelevant.
This allows for simple semantics given in terms of a Boolean function called \emph{structure function}.
The computation of security metrics, however, crucially depends on whether the AT is tree- or DAG-structured.

\subsection{Semantics for static attack trees}
\label{sec:SAT:semantics}

The semantics of a \SAT is defined by its successful attack scenarios, in turn given by its structure function.
First, we define the notions of attack and attack suite.

\begin{definition}
	\label{def:SAT:attack}
	An \emph{attack scenario}, or shortly an \emph{attack},
	of a static \AT \T is a subset of its basic attack steps:
	\mbox{$\attack\subseteq\BAS_T$}.
	An \emph{attack suite} is a set of attacks
	\mbox{$\suite\subseteq2^{\BAS_T}$}.
	We denote by 
	$\allAttacks_T={2^{\BAS_T}}$
	the universe of attacks of \T, and by 
	$\allSuites_T=2^{2^{\BAS_T}}$
	the universe of attack suites of \T.
	We omit the subscripts when there is no confusion.
\end{definition}

Intuitively, an attack suite $\suite\in\allSuites$ represents different ways in which the system can be compromised.
From those, one is interested in attacks $\attack\in\suite$ that actually represent a threat.
For instance for \sampleTs in \Cref{ex:running_examples} one such attack is $\{t,p\}$.
In contrast, $\{t\}$ is an attack that does not succeed, i.e.\ it does not reach the top-level goal.
The structure function $\sfunT(v,\attack)$ indicates whether the attack $\attack\in\allAttacks$ succeeds at node $v\in\ATnodes$ of \T.
For Booleans we use $\BB=\{\top,\bot\}$.

\begin{definition}
	\label{def:SAT:sfun}
	The \emph{structure function} $\sfunT\from\ATnodes\times\allAttacks\to\BB$
	of a static attack tree \T is given by:
	\begin{align*}
	  \sfunT(v,\attack) =&
	  \begin{cases}
		\top  & \parbox{55pt}{if~$\type{v}=\tOR$}~\text{and}~%
				\exists u\in\child{v}.\sfunT(u,\attack)=\top,\\
		\top  & \parbox{55pt}{if~$\type{v}=\tAND$}~\text{and}~%
				\forall u\in\child{v}.\sfunT(u,\attack)=\top,\\
		\top  & \parbox{55pt}{if~$\type{v}=\tBAS$}~\text{and}~%
				v\in\attack,\\
		\bot  & \text{otherwise}.
	  \end{cases}
	\end{align*}
\end{definition}



An attack \attack is said to \emph{reach} a node $v$ if $\sfunT(v,\attack) = \top$, i.e. it makes $v$ succeed. If no proper subset of \attack reaches $v$, then \attack is a \emph{minimal attack on $v$}. The set of attacks reaching $v$ is denoted $\Suc{v}$, and the set of minimal attacks on $v$ is denoted $\dsem{v}$. We define $\sfunT(\attack) \doteq \sfunT(\ATroot,\attack)$, and attacks that reach $\ATroot$ are called \emph{succesful}. We write $\dsem{\T} \doteq \dsem{\ATroot}$ and $\Suc{\T} \doteq \Suc{\ATroot}$. Furthermore, the minimal attacks on \ATroot (i.e.\ the minimal succesful attacks) are called \emph{minimal attacks}. The minimal attacks relate to the structure function in the sense that
\begin{equation*}
	\sfunT(v,\attack) =
		\bigvee_{\attack' \in \dsem{v}}
		\bigwedge_{a \in \attack'}
		\delta_{a}(\attack),
\end{equation*}
where $\delta_{a}(\attack) = \top$ iff $a \in \attack$. This can also be applied the other way around: we can represent \T by a propositional formula $L_T$ by replacing each $a\in\BAS$ with the atom $\delta_a$, and each node with its corresponding logical connector. 
Then the conjunctions in the minimal disjunctive normal form of $L_T$ correspond to the minimal attacks on \T.
 
\begin{definition}
	\label{def:propform}
	For $v \in N$, the propositional formula $L_{T,v}$ with atoms in $\{\delta_a \mid a \in \BAS_{\T}\}$ is given by
	\begin{equation*}
	L_{T,v} = \begin{cases}
		\bigvee_{w \in \child{v}} L_{T,w}  & \parbox{55pt}{if~$\type{v}=\tOR$},\\
		\bigwedge_{w \in \child{v}} L_{T,w}  & \parbox{55pt}{if~$\type{v}=\tAND$},\\
		\delta_v  & \parbox{55pt}{if~$\type{v}=\tBAS$}.
	  \end{cases}
	\end{equation*}
	Furthermore, we define $L_{T} := L_{\T,\ATroot}$\,.
\end{definition}


\SATs are \emph{coherent} \cite{BP75}, meaning that adding attack steps preserves success: if \attack is successful then so is $\attack\cup\{a\}$ for any $a\in\BAS$.
Thus, the suite of successful attacks of an \AT is characterised by its minimal attacks.
This was first formalised in \cite{MO06}, and is called multiset semantics in \cite{WAFP19}:
\begin{definition}
	\label{def:SAT:semantics}
	The \emph{semantics of a \SAT} \T is its suite of minimal attacks $\dsem{\T}$.
\end{definition}

\begin{example}
	\label{ex:SAT:semantics}
	The SAT in \Cref{ex:running_examples}, \sampleTs
	(\Cref{fig:AT:example:static}), has three successful attacks:
	$\{n\}$, $\{t,p\}$, and $\{n,t,p\}$. The first two are minimal,
	so we have: $\ssem{\sampleTs}=\{\{n\},\{t,p\}\}$.
\end{example}

An alternative characterisation of this semantics for tree-structured \SATs is shown as \Cref{lemma:ssem}, which also provides the key argument for correctness of the bottom-up procedure (\Cref{alg:bottom_up_SAT} in \Cref{sec:SAT_trees}). 
For tree-structured \SATs, \Cref{lemma:ssem} can be used to compute the semantics of \Cref{def:SAT:semantics} by recursively applying cases \textit{\ref{lemma:ssem:BAS})--\ref{lemma:ssem:AND})} to $\ssem{\ATroot} = \ssem{\T}$.
However, \BDD representations provide more compact encodings of this semantics (see \Cref{sec:SAT_DAGs}).

We formulate \Cref{lemma:ssem} for binary \ATs; its extension to arbitrary trees is straightforward but notationally cumbersome.
The proof of \Cref{lemma:ssem} is given in \Cref{app:semantics}.

\begin{lemma}
	\label{lemma:ssem}
	Consider a \SAT with nodes $a\in\BAS, v_1, v_2\in\ATnodes$. Then:
	\begin{enumerate}
	\def\REF#1{\textit{\ref{#1})}}
	\item	$\ssem{a} = \{ \{a\}\}$;
			\label{lemma:ssem:BAS}
	\item	$\ssem{\OR(v_1,v_2)} \subseteq \ssem{v_1} \cup \ssem{v_2}$;
			\label{lemma:ssem:OR}
	\item	$\ssem{\AND(v_1,v_2)} \subseteq \{ \attack_1\cup \attack_2 \mid
				\attack_1\in\ssem{v_1} \land \attack_2\in\ssem{v_2} \}$;
			\label{lemma:ssem:AND}
	\item All RHS of cases \ref{lemma:ssem:BAS}--\ref{lemma:ssem:AND} consist of succesful attacks.
	\end{enumerate}
	If $T$ is tree-structured then furthermore:
	\begin{enumerate}
	 \setcounter{enumi}{4}
	\item In cases \ref{lemma:dsem:OR} and \ref{lemma:dsem:AND}
			the \ssem{v_i} are disjoint, and in case \ref{lemma:dsem:AND}
			moreover the $A_i$ are disjoint;
			\label{lemma:ssem:disjoint}
	\item Equality holds in cases \ref{lemma:dsem:OR} and \ref{lemma:dsem:AND}.
	\end{enumerate}
\end{lemma}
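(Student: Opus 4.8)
The plan is to reduce all six items to a handful of monotonicity/support properties of the structure function, proved by induction on the \DAG, and then read off each item essentially by inspection. Write $B_v\subseteq\BAS$ for the set of \BAS nodes reachable from a node $v$. I would first establish, for an arbitrary static \AT (tree‑ or \DAG‑structured), by induction along a topological order of $\ATnodes$ and case analysis on $\type v$ in \Cref{def:SAT:sfun}: \textbf{(i)} \emph{monotonicity}, i.e.\ $\attack\subseteq\attack'$ implies $\sfunT(v,\attack)=\top\Rightarrow\sfunT(v,\attack')=\top$; \textbf{(ii)} \emph{locality}, $\sfunT(v,\attack)=\sfunT(v,\attack\cap B_v)$; \textbf{(iii)} \emph{non‑triviality}, $\sfunT(v,\emptyset)=\bot$, using that every \AT node has a \BAS descendant. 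The consequence that drives the rest is: every $\attack\in\ssem v$ is non‑empty and a subset of $B_v$ (apply locality to $\attack$ itself, then minimality; and (iii) for non‑emptiness).

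With these in hand, items \ref{lemma:ssem:BAS}–\ref{lemma:ssem:AND} and the ``successful attacks'' item are short and need no further induction. Item \ref{lemma:ssem:BAS} is just the \tBAS clause of \Cref{def:SAT:sfun}. For item \ref{lemma:ssem:OR}: an element of $\ssem{\OR(v_1,v_2)}$ reaches some $v_i$, and any proper subset of it reaching $v_i$ would also reach $\OR(v_1,v_2)$, so by minimality it already lies in $\ssem{v_i}$. For item \ref{lemma:ssem:AND}: given $\attack\in\ssem{\AND(v_1,v_2)}$, for each $i$ choose a minimal $\attack_i\subseteq\attack$ reaching $v_i$; such $\attack_i$ is automatically in $\ssem{v_i}$ since its proper subsets are subsets of $\attack$, and by monotonicity $\attack_1\cup\attack_2$ reaches $\AND(v_1,v_2)$, so minimality of $\attack$ forces $\attack=\attack_1\cup\attack_2$. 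The ``successful attacks'' item is then immediate from monotonicity: $\{a\}$ reaches $a$; any element of $\ssem{v_i}$ reaches $\OR(v_1,v_2)$; and $\attack_1\cup\attack_2\supseteq\attack_i$ reaches each $v_i$, hence $\AND(v_1,v_2)$.

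For the tree‑structured items the extra input is that distinct children $v_1,v_2$ of a common node satisfy $B_{v_1}\cap B_{v_2}=\emptyset$; this is where tree‑structuredness (no node is a child of two distinct nodes) enters, via the fact that in a rooted tree each node's set of ancestors is a chain, so the incomparable siblings $v_1,v_2$ cannot have a common descendant. Granting this, item \ref{lemma:ssem:disjoint} is immediate from the consequence of step~1: members of $\ssem{v_1}$ and $\ssem{v_2}$ are non‑empty subsets of the disjoint sets $B_{v_1},B_{v_2}$, so no attack lies in both, and in the \AND case $\attack_1\subseteq B_{v_1}$ and $\attack_2\subseteq B_{v_2}$ are disjoint. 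For the equality item it suffices, given items \ref{lemma:ssem:OR}–\ref{lemma:ssem:AND}, to prove $\supseteq$. \OR case: $\attack\in\ssem{v_1}$ reaches $\OR(v_1,v_2)$; if $\attack'\subsetneq\attack$ also did, then $\sfunT(v_j,\attack')=\top$ for some $j$ — $j=1$ contradicts minimality of $\attack$ on $v_1$, while $j=2$ gives by locality $\sfunT(v_2,\attack'\cap B_{v_2})=\top$, impossible since $\attack'\subseteq\attack\subseteq B_{v_1}$ is disjoint from $B_{v_2}$ and $\sfunT(v_2,\emptyset)=\bot$. \AND case: for $\attack=\attack_1\cup\attack_2$ with $\attack_i\in\ssem{v_i}$, if $\attack'\subsetneq\attack$ reaches $\AND(v_1,v_2)$ then by locality $\sfunT(v_i,\attack'\cap B_{v_i})=\top$, and since $B_{v_1},B_{v_2}$ are disjoint with $\attack\subseteq B_{v_1}\cup B_{v_2}$ we get $\attack'\cap B_{v_i}\subseteq\attack_i$, so minimality of $\attack_i$ gives $\attack'\cap B_{v_i}=\attack_i$ and hence $\attack'\supseteq\attack$, a contradiction. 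I expect the main obstacle to be precisely this last (\AND, equality) argument together with pinning down sibling‑disjointness from the literal definition of tree‑structuredness: once monotonicity, locality and non‑triviality are set up everything else is careful bookkeeping with the sets $B_v$, but that bookkeeping is exactly where a naive argument goes wrong.
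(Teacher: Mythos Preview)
Your argument is correct. The three auxiliary facts (monotonicity, locality, non-triviality) are exactly the tools needed, and your case analyses for items \ref{lemma:ssem:OR}--\ref{lemma:ssem:AND} and the two tree-structured items are clean and complete; in particular the \AND-equality step, which you flag as the delicate point, goes through as written once $B_{v_1}\cap B_{v_2}=\emptyset$ is available.

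The paper takes a genuinely different route: rather than prove \Cref{lemma:ssem} directly, it first establishes the analogous statement for \emph{dynamic} attack trees in the poset semantics (\Cref{lemma:dsem:app}), and then derives \Cref{lemma:ssem} by embedding a \SAT as a \DAT without \SAND-gates (\Cref{lemma:satisdat}), under which minimal static attacks $A$ correspond to minimal dynamic attacks $\poset[A][\varnothing]$. The per-item reasoning inside \Cref{lemma:dsem:app} is in spirit the same as yours---choose minimal sub-attacks on each child, use that in a tree the $\BAS_{v_i}$ are disjoint---but carried out for posets rather than sets. What the paper's detour buys is economy: \Cref{lemma:dsem:app} is needed anyway for the dynamic theory (\Cref{lemma:dsem}, \Cref{theo:bottom_up_DAT}), so the static lemma becomes a free corollary. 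What your approach buys is self-containedness: it avoids the poset machinery entirely and makes the role of monotonicity and locality explicit, which the paper uses only implicitly.
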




\subsection{Security metrics for static attack trees}
\label{sec:SAT:metrics}

\Cref{lemma:ssem} allows for \emph{qualitative analyses}, i.e.\ finding the successful attacks of minimal size.
To enable \emph{quantitative analyses}, i.e.\ computing security metrics such as the minimal time and cost among all attacks, all \BASes are enriched with attributes.
We define security metrics in three steps:
first an attribution $\attrOp$ assigns a value to each \BAS;
then a security metric $\metrAOp$ assigns a value to each attack scenario;
and finally the metric $\metrOp$ assigns a value to each attack suite.


\begin{definition}
	\label{def:metric}
	Given an \AT \T and a set \Vdom of values:
	\begin{enumerate}
	\item	an \emph{attribution} $\attrOp\from\BAS_{\T}\to\Vdom$
			assigns an \emph{attribute value} \attr{a},
			or shortly an \emph{attribute}, to each basic attack step $a$;
	\item	a \emph{security metric} refers both to a function
			$\metrAOp\from\allAttacks_{\T}\to\Vdom$ that assigns a value
			$\metrA{\attack}$ to each attack $\attack$;
			\newline
			and to a function $\metrOp\from\allSuites_{\T}\to\Vdom$ that
			assigns a value $\metr{\suite}$ to each attack suite $\suite$.
	\end{enumerate}
	We write $\metr{\T}$ for $\metr{\ssem{\T}}$, setting the metric of an \AT to the metric of its suite of minimal attacks. 
\end{definition}

\begin{example}
	\label{ex:metric}
	Let $\Vdom=\NN$ denote time, so that \attr{a} gives the time required
	to perform the basic attack step $a$.
	Then the time needed to complete an attack \attack can be given by
	$\metrA{\attack} = \sum_{a\in\attack} \attr{a}$,
	and the time of the fastest attack in a suite \suite is
	$\metr{\suite} = \min_{\attack\in\suite} \metrA{\attack}$.
	If instead $\Vdom=[0,1]\subseteq\RR$ denotes probability,
	then the probability of an attack is given by
	$\metrA{\attack} = \prod_{a\in\attack} \attr{a}$,
	and the probability of the likeliest attack in a suite is
	$\metr{\suite} = \max_{\attack\in\suite} \metrA{\attack}$.
\end{example}

\Cref{def:metric} gives a very general notion of metric.
For a more operational definition---which enables computation for static \ATs, but does not depend on their tree/\DAG-structure---one must resort to the semantics.
For this we follow an approach along the lines of Mauw and Oostdijk \cite{MO06}.
Namely, we define a \emph{metric function} $\metrOp\from \allATs \to\Vdom$ that yields a value for each \SAT based on its semantics, an attribution, and two binary operators $\operOR$ and $\operAND$.


\begin{definition}
	\label{def:SAT:metric}
	\def\VH{\vphantom{\bigoperAND_{\ssem{\T}}}}
	Let \Vdom be a set:
	\begin{enumerate}
	\item an \emph{attribute domain} over \Vdom is a tuple $\domain=(\Vdom,\operOR,\operAND)$,
	whose \emph{disjunctive operator} \mbox{$\operOR\from\Vdom^2\to\Vdom$},
	and \emph{conjunctive operator} \mbox{$\operAND\from\Vdom^2\to\Vdom$},
	are associative and commutative;
	\item the attribute domain is a \emph{semiring}\footnote{Since we require $\operAND$ to be commutative, $\domain$ is in fact a commutative semiring. Rings often include a neutral element for disjunction and an absorbing element for conjunction, but these are not needed in \Cref{def:SAT:metric}.}
	if $\operAND$ distributes over $\operOR$, i.e.
	$\forall\,x,y,z\in\Vdom.\;%
		x\operAND(y\operOR z)=(x\operAND y)\operOR(x\operAND z)$;
	\item
	let \T be a static \AT and $\attrOp$ an attribution on \Vdom.
	The \emph{metric for \suite} associated to $\attrOp$ and $\domain$ is given by:\footnote{This expression motivates our notation $\widehat{\alpha}$ and $\widecheck{\alpha}$, which are similar to $\operAND$ and $\operOR$, respectively. These notations, in turn, were chosen in \cite{MO06} to be similar to $\wedge$ and $\vee$, respectively; the connection between these operators and the logical connectors is expressed in \Cref{theo:bottom_up_SAT}.}
	\begin{align*}
		\metr{\suite} &~=
			\underbrace{\VH\bigoperOR_{\attack\in\suite}\:}_{\mathlarger\metrOp}\,
			\underbrace{\VH~\bigoperAND_{a\in\attack}~}_{\mathlarger\metrAOp}
			\attr{a}.
	\end{align*}
	\item The metric for $T$ associated to $\attrOp$ and $\domain$ is given by $\metr{\T} \doteq \metr{\dsem{T}}$.
	\end{enumerate}
\end{definition}



\begin{example}
	\label{ex:SAT:metric}
	\setlength{\abovedisplayskip}{1.5ex}  
	The fastest attack time metric from \Cref{ex:metric} comes from the semiring attribute domain $(\mathbb{N},\min,+)$; indeed, the time for an attack is the sum of the attack times for all constituting BASes ($\operAND = +$), and the attack time for the \AT is the time of the fastest attack ($\operOR= \min$). Similarly, the highest attack probability comes from the semiring attribute domain $([0,1],\max,\cdot)$. Consider the SAT $\sampleTs=\OR\big(n,\AND(t,p)\big)$ from \Cref{fig:AT:example:static}. These two metrics can be calculated as follows.
	\begin{enumerate}[leftmargin=1.3em]
	\item \emph{Fastest attack}: Recall that $\ssem{\sampleTs}=\{\{n\},\{t,p\}\}$, and consider an attribution ${\attrOp}=\{{n\mapsto1}, {t\mapsto100}, {p\mapsto0}\}$. Then:
	\begin{align*}
	\metr{\sampleTs}
		&=~ \metrA{\{n\}} \operOR \metrA{\{t,p\}} \\
		&=~ \attr{n}\operOR\big(\attr{t}\operAND\attr{p}\big) \\
		&=~ \min(1,100+0) ~=~ 1.
	\end{align*}
	\item \emph{Most probable attack:} Now consider an attribution ${\attrOp'}=\{{n\mapsto0.07},{p\mapsto0.01},$ $ {t\mapsto0.95}\}$ for the same tree:
	\begin{align*}
	\metrOp'(\sampleTs)
		&= \attrOp'(n) \operOR' \big(\attrOp'(t) \operAND' \attrOp'(p)\big)\\
		&= \max(0.07,0.95 \cdot 0.01) ~=~ 0.07.
	\end{align*}
	\end{enumerate}

\end{example}


\section{Computations for tree-structured SATs}
\label{sec:SAT_trees}

\Cref{ex:SAT:metric} illustrates how to compute metrics for SATs using \Cref{def:SAT:metric}.
However, this method requires to first compute the semantics of the attack tree, which is \emph{exponential} in the number of nodes \card{\ATnodes}---see \Cref{theo:NP_hard} or \cite{KW18}.

A key result in  \cite{MO06} is that metrics defined on attribute domains $(\Vdom,\operOR,\operAND)$ that are semirings, can be computed via a bottom-up algorithm that is \emph{linear} in $\card{\ATnodes}+\card{E}$ (assuming constant time complexity of $\operOR$ and $\operAND$) as long as the static \AT has a proper tree structure.
We repeat this result here, giving a more direct proof of correctness. We extend the result to dynamic attack trees in \Cref{sec:SAT_DAGs}.

\subsection{Bottom-up algorithm}
\label{sec:SAT_trees:algorithm}

First we formulate the procedure as \Cref{alg:bottom_up_SAT}, which propagates the attribute values from the leaves of the \SAT to its root, interpreting \OR gates as $\operOR$ and \AND{s} as $\operAND$.
This algorithm is linear in $|N|+|E|$ since each node $v$ in the tree \T is visited once, and at $v$ we have $\deg(v)-1$ computation steps.
%
%
\Cref{alg:bottom_up_SAT} can be called on any node of \T: to compute the metric \metr{\T} it must be called on its root node \ATroot.


\begin{algorithm}
	\KwIn{Static attack tree $\T=(\ATnodes,\typOp,\chOp)$,\newline
	      node $v\in\ATnodes$,\newline
	      attribution $\attrOp$,\newline
	      semiring attribute domain $\domain=(\Vdom,\operOR,\operAND)$.}
	\KwOut{Metric value $\metr{\dsem{v}}\in\Vdom$.}
	\BlankLine
	\uIf{$\type{v}=\tOR$}{%
		\Return{$\bigoperOR_{u\in\child{v}}
		         \BUSAT(\T,u,\attrOp,\domain)$}
	} \uElseIf{$\type{v}=\tAND$}{%
		\Return{$\bigoperAND_{u\in\child{v}}
		         \BUSAT(\T,u,\attrOp,\domain)$}
	} \Else(\tcp*[h]{$\type{v}=\tBAS$}) {%
		\Return{\attr{v}}
	}
	\caption{\BUSAT for a tree-structured \SAT \T}
	\label{alg:bottom_up_SAT}
\end{algorithm}

We state the correctness of \Cref{alg:bottom_up_SAT} in \Cref{theo:bottom_up_SAT}, which we prove in \Cref{app:metrics}.
This result was proven in \cite{MO06} via rewriting rules for \ATs with a slightly different structure denoted ``bundles''.
Our result concerns attack trees in the syntax from \Cref{def:AT:syntax}, which is more conforming to the broad literature \cite{Wei91,Sch99,BLP+06,JW08,KMRS11,BET13,KRS15}.

\begin{theorem}
	\label{theo:bottom_up_SAT}
	Let \T be a static \AT with tree structure,
	$\attrOp$ an attribution on \Vdom,
	and $\domain=(\Vdom,\operOR,\operAND)$ a semiring attribute domain.
	Then $\metr{\T} = \BUSAT(\T,\ATroot,\attrOp,\domain)$.
\end{theorem}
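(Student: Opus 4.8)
The plan is to proceed by structural induction on the tree-structured \SAT \T, proving the slightly more general statement that for every node $v \in \ATnodes$ we have $\metr{\dsem{v}} = \BUSAT(\T,v,\attrOp,\domain)$; the theorem is then the special case $v = \ATroot$. The induction is driven entirely by \Cref{lemma:ssem}, which for tree-structured \SATs gives exact equalities together with the disjointness properties needed to push the semiring operators through. Throughout, I will freely use associativity and commutativity of $\operOR$ and $\operAND$ (so that the iterated operators $\bigoperOR$ and $\bigoperAND$ are well defined on finite sets), and the semiring distributivity law when combining \AND-children.

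First I would do the base case: if $\type{v} = \tBAS$, then by \Cref{lemma:ssem}\textit{.\ref{lemma:ssem:BAS})} we have $\dsem{v} = \{\{v\}\}$, so $\metr{\dsem{v}} = \bigoperAND_{a \in \{v\}} \attr{a} = \attr{v}$, which is exactly what \Cref{alg:bottom_up_SAT} returns on a \tBAS node. For the inductive step, suppose the claim holds for the children of $v$ (and, for notational simplicity, take $v$ binary as in \Cref{lemma:ssem}; the general-arity case is identical up to bookkeeping). If $\type{v} = \tOR$ with $\child{v} = (v_1,v_2)$, then by \Cref{lemma:ssem}\textit{.\ref{lemma:ssem:OR})} and \textit{.\ref{lemma:ssem:disjoint})}–\textit{.\ref{lemma:ssem:disjoint}}6 we have $\dsem{v} = \dsem{v_1} \sqcup \dsem{v_2}$, a disjoint union. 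Since $\metrOp$ is defined by applying $\bigoperOR$ over the suite, the disjoint union splits it:
\begin{equation*}
	\metr{\dsem{v}} = \Bigl(\bigoperOR_{\attack \in \dsem{v_1}} \bigoperAND_{a \in \attack} \attr{a}\Bigr) \operOR \Bigl(\bigoperOR_{\attack \in \dsem{v_2}} \bigoperAND_{a \in \attack} \attr{a}\Bigr) = \metr{\dsem{v_1}} \operOR \metr{\dsem{v_2}},
\end{equation*}
and by the induction hypothesis this equals $\BUSAT(\T,v_1,\attrOp,\domain) \operOR \BUSAT(\T,v_2,\attrOp,\domain)$, which is the value returned by \Cref{alg:bottom_up_SAT} at an \tOR node. (Associativity/commutativity of $\operOR$ handles arbitrary arity.)

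The \tAND case is where the real work lies, and I expect it to be the main obstacle. With $\type{v} = \tAND$ and $\child{v} = (v_1,v_2)$, \Cref{lemma:ssem}\textit{.\ref{lemma:ssem:AND})} and \textit{.\ref{lemma:ssem:disjoint})} give $\dsem{v} = \{\attack_1 \cup \attack_2 \mid \attack_1 \in \dsem{v_1},\ \attack_2 \in \dsem{v_2}\}$, where the map $(\attack_1,\attack_2) \mapsto \attack_1 \cup \attack_2$ is a bijection onto $\dsem{v}$ (injectivity uses that $\BAS_{v_1}$ and $\BAS_{v_2}$ are disjoint in a tree) and each $\attack_1, \attack_2$ are themselves disjoint. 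Then, reindexing the outer $\bigoperOR$ over pairs and using that $a \in \attack_1 \cup \attack_2$ ranges over $\attack_1$ and $\attack_2$ with no repetition,
\begin{equation*}
	\metr{\dsem{v}} = \bigoperOR_{\attack_1 \in \dsem{v_1}}\ \bigoperOR_{\attack_2 \in \dsem{v_2}}\ \Bigl(\bigoperAND_{a \in \attack_1}\attr{a}\Bigr) \operAND \Bigl(\bigoperAND_{a \in \attack_2}\attr{a}\Bigr).
\end{equation*}
The crux is now to factor the two $\bigoperOR$ sums out past the $\operAND$. This is precisely where the semiring axiom is used: distributivity of $\operAND$ over $\operOR$ (in both arguments, by commutativity) lets me pull $\bigl(\bigoperAND_{a\in\attack_1}\attr{a}\bigr)$ out of the inner $\bigoperOR$ and the resulting $\bigoperOR_{\attack_2}$ out of the outer one, giving $\metr{\dsem{v}} = \metr{\dsem{v_1}} \operAND \metr{\dsem{v_2}}$; I would justify this last move by a small side-induction on $|\dsem{v_1}|$ and $|\dsem{v_2}|$ (the general distributive law $x \operAND \bigoperOR_i y_i = \bigoperOR_i (x \operAND y_i)$ from the two-element axiom). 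The induction hypothesis then finishes the \tAND case, and $\metr{\T} = \metr{\dsem{\ATroot}} = \BUSAT(\T,\ATroot,\attrOp,\domain)$ follows by taking $v = \ATroot$. The only subtlety worth flagging in the writeup is that $\dsem{\cdot}$ must be nonempty for the iterated operators to be defined — which holds because every node of a \SAT has at least one minimal successful attack (its \BAS descendants suffice), and \Cref{lemma:ssem}\textit{.\ref{lemma:ssem}}4 guarantees the right-hand sides consist of genuine successful attacks so no spurious empty suites appear.
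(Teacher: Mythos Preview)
Your argument is correct and is the natural direct proof: structural induction driven by \Cref{lemma:ssem}, using disjointness (item~\ref{lemma:ssem:disjoint}) to split the $\bigoperOR$ at \OR gates and to reindex bijectively over pairs at \AND gates, and then semiring distributivity to factor. All the delicate points (bijectivity of $(\attack_1,\attack_2)\mapsto\attack_1\cup\attack_2$, nonemptiness of $\dsem{v}$) you identify correctly.

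The paper, however, takes a genuinely different route. It first observes (\Cref{lemma:satisdat}) that any \SAT can be read as a \DAT with no \SAND gates and any semiring attribute domain embeds into a semiring dynamic one, so \Cref{theo:bottom_up_SAT} is a corollary of \Cref{theo:bottom_up_DAT}. The latter is then proved not from \Cref{lemma:ssem}/\Cref{lemma:dsem} directly, but from the modular-analysis theorem (\Cref{thm:mod}): in a tree every inner node is a module, so at each step one replaces the children $v_1,v_2$ by fresh \BASes carrying the values $\metr{\T[v_1]},\metr{\T[v_2]}$, and a tiny lemma (\Cref{lemma:indstep}) computes the metric of the resulting two-\BAS tree as $\attr{\tilde v_1}\mathbin{\bullet}\attr{\tilde v_2}$ for the appropriate gate operator. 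What your approach buys is elementarity and self-containment: you use only \Cref{lemma:ssem} and the semiring axioms. What the paper's approach buys is unification: the distributivity bookkeeping is done once and for all inside the proof of \Cref{thm:mod}, after which both the static and dynamic bottom-up theorems fall out as immediate consequences, and the same machinery simultaneously justifies \Cref{alg:satmod} for \DAGs.
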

\subsection{Metrics as semiring attribute domains}
\label{sec:SAT_trees:metrics}

Many relevant metrics for security analyses on \SATs can be formulated as semiring attribute domains.
\Cref{tab:SAT:metric} shows examples, where $\NN_\infty=\NN\cup\{\infty\}$ includes $0$ and $\infty$.
For instance ``min cost'' can be formulated in terms of ${(\NN_\infty,\min,+)}$, which is a semiring attribute domain because $+$ distributes over $\min$, i.e.\ $a+\min(b,c) =\min(a+b,a+c)$ for all ${a,b,c\in\NN_\infty}$.
Also, attribute domains can handle \SAND gates if the execution order is irrelevant for the metric, which happens e.g.\ for min skill and max damage.

Besides the metrics of \Cref{tab:SAT:metric}, one can also be interested in derived concepts, such as the Pareto front of multiple metrics, or uncertainty sets when the attribute values of \BASes are unknown. We show that such concepts fit into the semiring attribute domain framework in \Cref{sec:order}.

\paragraph{Non-semiring metrics}
However, some meaningful metrics do fall outside this category. 
For instance and as noted in~\cite{MO06}, the cost to defend against all attacks is represented by $(\NN_\infty,+,\min)$; but \Cref{alg:bottom_up_SAT} cannot compute this metric because $\min$ does not distribute over $+$.
Less well-known is that total attack probability---i.e.\ $\metr{\T}=\sum_{\attack\in\Suc{\T}} \metrA{\attack}$ where $\metrA{\attack}=\big(\prod_{a\in \attack} \attr{a}\big) \cdot \big(\prod_{a\not\in \attack} (1-\attr{a})\big)$---can neither be formulated as an attribute domain.
The problem is that $\metrA{\attack}$ does not have the shape $\bigoperAND_{a\in\attack} \attr{a}$, and that the sum is taken over all succesful attacks rather than just the minimal ones.
Interestingly though, this probability can still be computed via a bottom-up procedure by taking $\metr{\AND(v_1,v_2)} = \metr{\Suc{v_1}} \cdot \metr{\Suc{v_2}}$ and $\metr{\OR(v_1,v_2)} = \metr{\Suc{v_1}} + \metr{\Suc{v_2}} - \metr{\Suc{v_1}\cap\Suc{v_2}}$.


\begin{table}
  \centering
  \begin{tabular}{l>{}c>{\quad}c>{\quad}c}
	\toprule
	\scshape Metric       & $\Vdom$           & $\operOR$ & $\operAND$ \\
	\midrule
	min cost              & $\NN_\infty$      & $\min$ & $+$    \\
	min time (sequential) & $\NN_\infty$      & $\min$ & $+$    \\
	min time (parallel)   & $\NN_\infty$      & $\min$ & $\max$ \\
	min skill             & $\NN_\infty$      & $\min$ & $\max$ \\
	max challenge         & $\NN_\infty$      & $\max$ & $\max$ \\
	max damage            & $\NN_\infty$      & $\max$ & $+$    \\ 
	discrete prob.        & $[0,1]$       & $\max$ & $\cdot$ \\
	continuous prob.      & $\RR\to[0,1]$ & $\max$ & $\cdot$ \\
	\bottomrule
  \end{tabular}
  \caption{\SAT metrics with semiring attribute domains}
  \label{tab:SAT:metric}
\end{table}

\paragraph{Stochastic analyses}
Semirings form a bicomplete category, so finite and infinite products exist \cite{Mac71}.
This allows to propagate not only tuples of attribute values, but also functions over them. 
In particular, \emph{cumulative density functions} that assign a probability $t\mapsto P[X\leqslant t]$ constitute a semiring \cite{AHPS14}.
Such functions serve e.g.\ to consider attack probabilities (and cost, and damage) as functions that evolve on time.

\paragraph{Absorbing semirings}
Although in this paper we calculate metrics by considering all \emph{minimal} attacks, one could also simply consider all attacks.
For many metrics this does not make a difference: for example, the successful attack with minimal cost will always be a minimal attack, since adding \BASes can only increase the cost.
Therefore, in the calculation of min cost we may as well take the minimum over all successful attacks, rather than just minimal attacks.
On the other hand, when calculating max damage one will get a different answer when taking all attacks into account, as the full attack of all \BASes will do more damage than a smaller minimal attack.
The difference between these metrics can be described mathematically as follows.
We call a semiring attribute domain $\domain = (\Vdom,\operOR,\operAND)$ \emph{absorbing} if $x \operOR (x \operAND y) = x$ for all $x,y \in \Vdom$.
If \domain is absorbing and $\attrOp$ is an attribution into \domain, then $\metrA{A} \operOR \metrA{A'} = \metrA{A}$ for any two attacks with $A \subseteq A'$. It follows that for absorbing semiring attribute domains one has $\metr{\T} = \metr{\Suc{T}}$.
Note that all metrics in \Cref{tab:SAT:metric} are absorbing except for max challenge and max damage.


\section{Computations for DAG-structured SATs}
\label{sec:SAT_DAGs}

Attack trees with shared subtrees cannot be analysed via a bottom-up procedure on its (\DAG) structure, as we illustrate next in \Cref{ex:bottom_up_DAG}.
This is a classical result from fault tree analysis \cite{LGTL85}, rediscovered for attack trees e.g.\ in \cite{KW18}.

There are many methods to analyse \DAG-structured \ATs: see \Cref{tab:all_algos_intro} for contributions over the last 15 years, including \cite{AGKS15,BET13,DMCR06,JW08,KW18}.
These methods are often geared to specific metrics, e.g.\ cost, time, or probability \cite{BLP+06,JW08,AHPS14}.
Others use general-purpose techniques of high complexity and low efficiency, such as model checking \cite{DMCR06,KRS15}. 

We present a novel algorithm based on a binary decision diagram (\BDD) representation of the structure function of the \SAT.
\BDDs offer a very compact encoding of Boolean functions \cite{Bry86}, and are heavily used in model checking \cite{HS99,BK08,KP12}, as well as for probabilistic fault tree analysis \cite{Rau93,RS15b}.

Our \BDD-based approach works for absorbing semiring attribute domains, with neutral elements for operators $\operOR$ and $\operAND$, regardless of the \AT structure.
It thus extends the generic and efficient result of \cite{MO06}---that works for tree-structure \SATs only---to \DAG-structured \SATs as well.

Our algorithm traverses the \BDD bottom-up and it is linear in its size.
However, \BDDs can be exponential in the tree size \cite{RD97}: but no asymptotically-faster algorithms exist, since the problem of computing metrics is NP-hard, as we show below.
Moreover, \BDDs are among the most efficient approaches in terms of performance of practical computation on Boolean formulae \cite{Bry86,BET13}.

\subsection{Computational complexity}
\label{sec:SAT_DAGs:complexity}

We first show why the bottom-up procedure cannot compute metrics for \ATs that have shared subtrees. 


\begin{example}
	\label{ex:bottom_up_DAG}
	\Cref{fig:bottom_up_DAG} shows how the bottom-up approach can fail when applied to \DAG-structured attack trees.
	Intuitively, the problem is that a visit to node $v$ in \Cref{alg:bottom_up_SAT}---or any bottom-up procedure that operates on the \AT structure---can only aggregate information on its descendants.
	So, the recursive call for $v$ cannot determine whether a sibling node (i.e.\ any node $v'$ which is not an ancestor nor a descendant of $v$) shares a \BAS descendant with $v$.
	As a result, recursive computations for both $v$ and $v'$ may select a shared descendant $b\in\BAS$, and use \attr{b} in (both) their local computations.
	This causes the miscomputation in \Cref{fig:bottom_up_DAG}.
\end{example}

\begin{figure}
  \vspace{-2ex}
  \centering
  \def\caja#1{\parbox[b]{.5em}{\centering$#1$}}

  \begin{subfigure}[b]{.27\linewidth}
	\includegraphics[width=.95\linewidth]{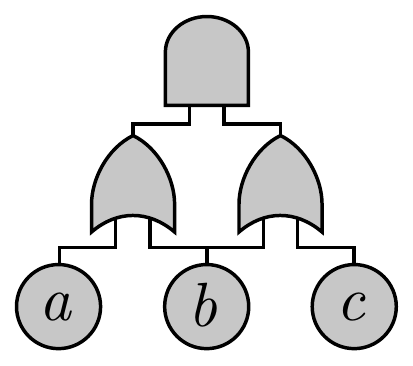}
  \end{subfigure}
  ~
  \begin{subfigure}[b]{.66\linewidth}
	\def\mp#1#2{\begin{minipage}[b]{#1\linewidth}{{#2}}\end{minipage}}
	\mp{.08}{%
		Let:\vspace{6.5ex}
	}
	\mp{.80}{%
	  \begin{align*}
		\attr{\caja{a}} &= 3  & \Vdom    &= \NN_\infty  \\[-.5ex]
		\attr{\caja{b}} &= 2  & \operOR  &= \min        \\[-.5ex]
		\attr{\caja{c}} &= 4  & \operAND &= {+}
	  \end{align*}
	}
	\mp{.08}{\hspace*{-2em}
	\tikz[baseline=-.4ex]{\draw[x=1ex,y=1ex,decorate,decoration={%
		                        brace,mirror,amplitude=5pt}] (0,0)--(0,8.1)
	                      node [midway,xshift=1.3em] {\domain};}}
	
	The cheapest attack is $\{b\}$: $\metrA{\{b\}}=2$.
	\vspace{0.3ex}
  \end{subfigure}
  
  \caption{Metrics cannot be computed bottom-up on \ATs with \DAG structure.
           For min cost in this \AT, \Cref{alg:bottom_up_SAT} yields:
           $\BUSAT(\T,\ATroot,\attrOp,\domain) = \min(3,2) + \min(2,4) = 4 \neq 2 = \metr{\T}$.
           The miscomputation stems from counting \attr{b} twice.}
  \label{fig:bottom_up_DAG}
\end{figure}


Known workarounds to this issue are keeping track of the \BAS selected by the metric at each step \cite{KW18,BK18}, and operating on the \AT semantics \cite{MO06}.
In all cases the worst-case scenario has exponential complexity on the number of nodes of the attack tree:
for \cite{KW18} this is in the algorithm input, i.e.\ determining the sets of necessary and optional clones;
instead for \cite{MO06} and our \Cref{def:SAT:metric} the complexity lies in the computation of the semantics.

In general, one cannot hope for faster algorithms: \Cref{theo:NP_hard} shows that the core problem---computing minimal attacks of \DAG-structured attack trees---is NP-hard even in the simplest structure: plain attack trees with \AND/\OR gates.
The proof (in \Cref{app:hard}) reduces the satisfiability of logic formulae in conjunctive normal form, to the computation of minimal attacks in general \SATs.

\begin{theorem}
	\label{theo:NP_hard}
	Given a DAG-structured static AT, the problem of computing any successful attack
	of minimal size is NP-hard.
\end{theorem}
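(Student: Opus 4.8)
The plan is to exhibit a polynomial-time reduction from a known NP-hard problem to the task of computing a minimum-size successful attack on a DAG-structured static AT. The natural source problem is the satisfiability of a Boolean formula in conjunctive normal form; in fact, to get a clean size-minimisation target, I would reduce from \textsc{min-ones-sat} or, equivalently, work directly with \textsc{cnf-sat} and argue that deciding whether a successful attack of size at most $k$ exists is already NP-hard. Given a CNF formula $\varphi = C_1 \wedge \cdots \wedge C_m$ over propositional variables $x_1,\dots,x_n$, I would build an AT $\T_\varphi$ whose \BAS set contains one node for each literal $x_i$ and one for each literal $\lnot x_i$ (so $2n$ leaves), encode each clause $C_j$ as an \OR-gate over the \BAS nodes corresponding to the literals occurring in $C_j$, and make the root an \AND-gate over these $m$ clause-gates. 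Because a leaf $x_i$ and a leaf $\lnot x_i$ may both appear in several different clauses, the clause-gates genuinely share leaf descendants, so $\T_\varphi$ is DAG-structured; its size is linear in the size of $\varphi$, so the construction is polynomial.

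Next I would relate attacks on $\T_\varphi$ to truth assignments. A successful attack is a set $A$ of literal-leaves that hits every clause-gate, i.e.\ contains at least one literal from each clause; this is exactly the condition that the partial assignment ``set every literal in $A$ to true'' satisfies $\varphi$ — \emph{provided} $A$ is consistent, meaning it does not contain both $x_i$ and $\lnot x_i$ for any $i$. The subtlety is that a successful attack need not be consistent: an attack could select both polarities of some variable. To rule this out at the level of \emph{minimal} or \emph{minimum-size} attacks, I would pad the construction. One clean way: duplicate each variable's obligation by adding, for every $i$, a dedicated clause-gate $\OR(x_i,\lnot x_i)$ that must be satisfied (it always is), which does not help; instead the right trick is a \emph{counting} padding — replace each leaf by a bundle of $t$ parallel copies (via \AND-gates) for a suitably large polynomial $t$, or, more simply, observe that from any successful attack $A$ one can extract a consistent successful attack $A'\subseteq A$ of no larger size by, for each variable chosen in both polarities, dropping one of the two and checking the remaining literals still cover all clauses. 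This last claim needs care; the cleanest fix in the literature is to add for each variable $i$ two fresh ``enforcer'' \BAS leaves and clause-gates arranged so that picking both $x_i$ and $\lnot x_i$ strictly increases the attack size, making any minimum-size attack automatically consistent. I would pick whichever padding makes the bijection cleanest and note that it only blows up the size polynomially.

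With a size-preserving (or size-monotone) correspondence between minimum-size successful attacks of $\T_\varphi$ and satisfying assignments of $\varphi$ with a minimum number of true literals, the reduction is complete: an oracle that computes a minimum-size successful attack (or even just decides whether one of size $\le k$ exists) decides \textsc{cnf-sat}. Since the structure function of $\T_\varphi$ uses only \AND and \OR gates, the hardness already holds for the simplest class of \SATs, as claimed. The main obstacle I anticipate is exactly the consistency issue — ensuring that \emph{minimum-size} attacks cannot cheat by taking contradictory literals — and getting the padding gadget right so that the size bound transfers without introducing spurious small attacks; everything else (polynomiality of the construction, the clause-covering $\Leftrightarrow$ satisfaction equivalence) is routine. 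I would also remark that NP-hardness of the decision version immediately yields NP-hardness of the optimisation/search version stated in \Cref{theo:NP_hard}.
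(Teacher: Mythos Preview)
Your overall reduction matches the paper's: encode a CNF formula as a static \AT with one \BAS per literal, an \OR-gate per clause, and the root an \AND over all clause gates. You also correctly isolate the one real obstacle, namely that a minimum-size successful attack might contain both polarities of some variable and so fail to yield a truth assignment.

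The gap is that you then dismiss the fix that actually works. You write that adding, for every $i$, a clause-gate $\OR(x_i,\lnot x_i)$ ``does not help'' because it ``always is'' satisfied. This conflates attacks with truth assignments: in the \AT, the leaves $x_i$ and $\lnot x_i$ are independent \BASes, and the gate $\OR(x_i,\lnot x_i)$ is reached by an attack $A$ only if $x_i\in A$ or $\lnot x_i\in A$. These $n$ extra gates therefore force every successful attack to contain at least one literal per variable, so the minimum attack size is at least $n$. The minimum equals $n$ iff some attack selects \emph{exactly one} of $x_i,\lnot x_i$ for each $i$ while still covering every original clause---i.e.\ iff $\varphi$ is satisfiable. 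Hence an oracle for minimum-size successful attacks decides \textsc{cnf-sat} simply by comparing the output size to $n$. This is precisely the paper's argument; no size-preserving bijection with \textsc{min-ones-sat} and no elaborate padding are needed.

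Your proposed alternatives do not close the gap: dropping one polarity from a minimum attack can break clause coverage (two different clauses may be covered only by $x_i$ and only by $\lnot x_i$, respectively), and the ``enforcer'' gadget is never specified.
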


From this we can show that calculating metrics on \DAG-structured \SATs is NP-hard. To do this, we define the following attribute domain: for a \SAT $T$, take $\Vdom = \mathscr{M}(\BAS_T)$, the set of multisets on $\BAS_T$. Let $\operAND = \uplus$, i.e.\ multiset union. We identify $\mathscr{M}(\BAS_T) \cong \NN^{\BAS_T}$, and $\uplus$ becomes $+$ under this identification. We furthermore choose an enumeration $\BAS_T = \{a_1,\ldots,a_n\}$, so that we may identify $\Vdom \cong \NN^n$. We then define a map $\mu\colon \Vdom \rightarrow \NN$ by
\begin{equation*}
\mu(c) = \prod_{i=1}^n p_i^{c_i},
\end{equation*}
where $p_i$ is the $i$-th prime; here an element $c \in \Vdom = \NN^n$ is determined by its coefficients $c_i \in \NN$. Define a linear order $\preceq$ on $V$ by $c \preceq c'$ iff either $\sum_i c_i < \sum_i c'_i, \textrm{ or } \sum_i c_i = \sum_i c'_i \textrm{ and } \mu(c) \leq \mu(c')$. Let $\operOR = \min$ w.r.t. $\preceq$. One can then prove (see Appendix 
\ref{app:metrcomp}) the following:

\begin{lemma} \label{lem:metrcomp}
$(V,\operOR,\operAND)$ is a semiring attribute domain. Furthermore, let $\alpha\colon \BAS_T \rightarrow \mathscr{M}(\BAS_T)$ be given by $\alpha(a) = \ldb a \rdb$. Then the multiset $\metr{T}$ is a set, and it is the succesful attack of minimal size.
\end{lemma}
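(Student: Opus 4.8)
The plan is to verify the two assertions separately. First, to show that $(\Vdom,\operOR,\operAND)$ is a semiring attribute domain, I would check the axioms from \Cref{def:SAT:metric}: associativity and commutativity of $\operAND = \uplus$ are immediate from the corresponding properties of multiset union (equivalently, of $+$ on $\NN^n$); associativity and commutativity of $\operOR = \min_{\preceq}$ follow once $\preceq$ is known to be a linear (total) order, since $\min$ with respect to any total order is associative and commutative. So the first sub-step is to confirm that $\preceq$ is a total order: antisymmetry and totality are clear because $\mu$ is injective (distinct exponent vectors give distinct prime factorisations, by unique factorisation), so the tie-break $\mu(c) \le \mu(c')$ always decides; transitivity is a routine case split on whether the coordinate sums are equal or strictly ordered. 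The remaining sub-step is the semiring distributivity law $x \operAND (y \operOR z) = (x \operAND y) \operOR (x \operAND z)$, i.e.\ $x + \min_{\preceq}(y,z) = \min_{\preceq}(x+y, x+z)$; this holds because $\preceq$ is \emph{translation-invariant}: adding a fixed $x$ preserves coordinate sums up to the constant shift $\sum_i x_i$, and multiplies $\mu$ by the constant factor $\mu(x) > 0$, so both the primary criterion and the tie-break are preserved, hence $c \preceq c' \iff x+c \preceq x+c'$.

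Second, for the claim about $\metr{\T}$, recall from \Cref{def:SAT:metric} that $\metr{\T} = \metr{\dsem{\T}} = \bigoperOR_{A \in \dsem{\T}} \bigoperAND_{a \in A} \alpha(a)$. With $\alpha(a) = \ldb a \rdb$ and $\operAND = \uplus$, the inner aggregation $\bigoperAND_{a \in A} \ldb a \rdb$ is the multiset $\sum_{a \in A} \ldb a \rdb$, which for a \emph{set} $A$ (every minimal attack is a subset of $\BAS_\T$, so no repeats) equals the indicator vector of $A$ in $\NN^n$; in particular it is a genuine set, and its coordinate sum is $\card{A}$. Then $\metr{\T} = \min_{\preceq}\{\,\mathbf{1}_A \mid A \in \dsem{\T}\,\}$. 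By the definition of $\preceq$, this minimum is attained at an $A$ of smallest cardinality $\card{A}$ (the primary criterion is exactly the coordinate sum $= \card{A}$), with ties broken by $\mu$; so $\metr{\T}$ is the indicator of some minimal attack of minimum size, hence a set. Finally, a minimal attack of minimum size over $\dsem{\T}$ is a successful attack of minimum size over all successful attacks, because \SATs are coherent (\Cref{sec:SAT:semantics}): any successful attack contains a minimal one, of size no larger. This gives the stated conclusion.

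The combination of \Cref{theo:NP_hard} and \Cref{lem:metrcomp} then yields NP-hardness of metric computation on \DAG-structured \SATs: an oracle computing $\metr{\T}$ for this particular semiring attribute domain returns (the indicator of) a minimal-size successful attack, which \Cref{theo:NP_hard} identifies as an NP-hard task; the attribute domain, the attribution $\alpha$, and the decoding of the output multiset back into a subset of $\BAS_\T$ are all computable in polynomial time in the size of \T.

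I expect the only mildly delicate point to be the translation-invariance argument underpinning distributivity — one has to notice that comparing $\mu(x+y)$ and $\mu(x+z)$ reduces, via $\mu(x+y) = \mu(x)\mu(y)$, to comparing $\mu(y)$ and $\mu(z)$, so the tie-break is genuinely preserved and not just the primary criterion. Everything else (the order axioms, unfolding \Cref{def:SAT:metric}, and the coherence argument) is routine.
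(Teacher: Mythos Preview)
Your proposal is correct and follows essentially the same approach as the paper: both establish distributivity via translation-invariance of $\preceq$ (using that coordinate sums shift by a constant and $\mu$ is multiplicative), and both unfold $\metr{\T}$ to the $\preceq$-minimum of the indicator multisets of minimal attacks. You are slightly more thorough than the paper in explicitly verifying that $\preceq$ is a total order and in invoking coherence to pass from ``minimal attack of minimum size'' to ``successful attack of minimum size'', but these are refinements of the same argument rather than a different route.
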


\smallskip\noindent%
With \Cref{theo:NP_hard} this yields the following corollary:


\begin{corollary}
	\label{coro:NP_hard}
	Computing a metric for a semiring attribute domain
	in a \DAG-structured \SAT is NP-hard.
\end{corollary}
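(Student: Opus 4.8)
The plan is to obtain \Cref{coro:NP_hard} from \Cref{theo:NP_hard} by a polynomial-time Turing reduction, using \Cref{lem:metrcomp} as the bridge: computing a minimal successful attack of a \DAG-structured \SAT reduces to computing a \emph{single} metric value, namely the one for the semiring attribute domain constructed in \Cref{lem:metrcomp}. Since the source problem is NP-hard by \Cref{theo:NP_hard}, so is the target.

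Concretely, I would proceed as follows. Take an arbitrary \DAG-structured \SAT $\T=(\ATnodes,\typOp,\chOp)$; this is the input to the minimal-attack problem. Fix an enumeration $\BAS_{\T}=\{a_1,\dots,a_n\}$ and build the triple $(\T,\alpha,\domain)$ with $\domain=(\Vdom,\operOR,\operAND)$ and $\alpha$ exactly as in \Cref{lem:metrcomp}, i.e.\ $\Vdom=\mathscr{M}(\BAS_{\T})\cong\NN^{n}$, $\operAND=\uplus$, $\operOR=\min$ for the total order $\preceq$, and $\alpha(a)=\ldb a\rdb$. By the first part of \Cref{lem:metrcomp} this $\domain$ is a semiring attribute domain, so $(\T,\alpha,\domain)$ is a legitimate instance of the metric-computation problem. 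Now feed $(\T,\alpha,\domain)$ to any procedure that computes $\metr{\T}$ (the well-defined value of \Cref{def:SAT:metric}). By the second part of \Cref{lem:metrcomp}, the returned multiset $\metr{\T}$ is in fact a subset of $\BAS_{\T}$ and is a successful attack of $\T$ of minimal size. Thus a metric oracle, called once, solves the minimal-attack problem, establishing its NP-hardness via the NP-hardness asserted by \Cref{theo:NP_hard}.

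The one point needing care is that this reduction is genuinely polynomial-time and that $\domain$ is effectively presented. Constructing $\alpha$ is immediate, and $\operAND=\uplus$ is computed coordinatewise in time polynomial in $n$. Every attribute value that occurs while evaluating $\metr{\T}$ is of the form $\bigoperAND_{a\in A}\alpha(a)$ for an attack $A$, hence a multiset supported on $\BAS_{\T}$, so it is representable in size polynomial in $\card{\ATnodes}$ and never blows up. For $\operOR=\min$, a comparison under $\preceq$ reduces to comparing coordinate sums and then the coordinate vectors directly; the integer $\mu(c)$ of \Cref{lem:metrcomp} is only a device to exhibit a translation-invariant total order and need not be materialised. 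I expect this bookkeeping---verifying that the domain operations are polynomial-time and the intermediate values stay small---to be the main, and quite mild, obstacle; granting \Cref{lem:metrcomp} and \Cref{theo:NP_hard}, the rest is immediate.
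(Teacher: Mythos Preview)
Your proposal is correct and follows exactly the route the paper intends: the paper states the corollary immediately after \Cref{lem:metrcomp} with the remark ``With \Cref{theo:NP_hard} this yields the following corollary,'' and your argument is precisely the natural unpacking of that sentence---a polynomial-time reduction from the minimal-attack problem of \Cref{theo:NP_hard} to a single metric evaluation via the domain and attribution of \Cref{lem:metrcomp}. Your additional care about effective presentation of the domain and polynomial-size outputs is appropriate, though note that for the reduction itself you only need the \emph{instance} $(\T,\alpha,\domain)$ to be constructible in polynomial time and the \emph{output} $\metr{\T}$ to be polynomially bounded; you need not argue about intermediate values inside the oracle.
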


\subsection{Idempotent semiring attribute domains} \label{sec:idempotent}

While in general computing metrics on DAG-structured \SATs is hard, for some metrics the bottom-up algorithm still works. In \Cref{ex:bottom_up_DAG} it is seen that \BUSAT fails because some \BASes may be counted twice. However, when the operators $\operOR$, $\operAND$ are such that multiple occurences of a BAS in a formula does not impact the calculation, then this is not a problem. One can express this formally by the following definition:

\begin{definition}[Idempotent domain]
	\label{def:idempotent}
	A binary operator $\star$ on a set $X$ is called \emph{idempotent}
	if $x \star x = x$ for all $x \in X$.
	A semiring attribute domain $\domain = (\Vdom,\operOR,\operAND)$ is called
	idempotent if both operators $\operOR$ and $\operAND$ are idempotent.
\end{definition}

Idempotency of the domain is not enough for \BUSAT to work for DAG-structured \SATs; we also need \domain to be absorbing.
The reason for this is that in a \DAG, there might be \BASes that are not an element of any minimal attack, and hence are not present in the expression of $\metr{\T}$.
Nevertheless, these \BASes may still be used in the calculation of \BUSAT, as in the following example.
\begin{wrapfigure}[6]{r}{.18\linewidth}
	\vspace{1.5ex}
	\includegraphics[width=.85\linewidth]{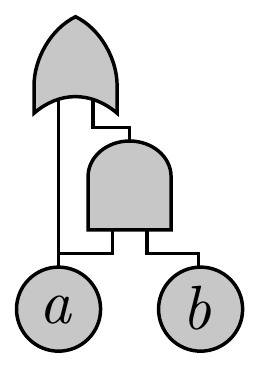}
\end{wrapfigure}

\begin{example}
    Consider the \SAT to the right, $\T = \OR(a,\AND(a,b))$, and consider the domain $\domain = (\NN_{\infty},\max,\max)$ for the max challenge metric; \domain is idempotent.
    Take $\attr{a} = 1$, $\attr{b} = 2$: since $\dsem{\T} = \{\{a\}\}$ then $\metr{\T} = \attr{a} = 1$.
    However, \Cref{alg:bottom_up_SAT} calculates $\BUSAT(\T,\ATroot,\alpha,\domain) = \max(\attr{a},\max(\attr{a},\attr{b})) = 2$.
    The miscalculation comes from the fact that $b$ is not an element of any minimal attack.
\end{example}

Only when \domain is both idempotent and absorbing, then \BUSAT calculates $\metr{\T}$ correctly.
A motivating example is the domain $\domain = (\NN_{\infty},\min,\max)$, which represents the min time and min skill metrics; another example is $(\BB,\vee,\wedge)$ from \cite{KW18}.
The fact that \BUSAT works for idempotent absorbing domains has been proven for the parallel min time metric in \cite{LS2021}; the \namecref{theo:idempotent} below extends this result to the general case.
This result is similar to \cite{KW18}, where it is proven for Attack--\-Defense trees under mildly stronger assumptions on \domain, namely the existence of identity and absorbing elements in \Vdom.

\begin{theorem}
	\label{theo:idempotent}
	%
	%
	Let \T be a static \AT,
	$\attrOp$ an attribution on \Vdom,
	and $\domain=(\Vdom,\operOR,\operAND)$ a semiring attribute domain.
	If \domain is idempotent and absorbing then
	$\metr{\T} = \BUSAT(\T,R_T,\alpha,D)$.
\end{theorem}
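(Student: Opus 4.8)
The plan is to prove, by induction on the nodes of \T, the stronger per-node identity
\[
  \BUSAT(\T,v,\attrOp,\domain) \;=\; \metr{\dsem{v}} \qquad\text{for all } v\in\ATnodes ,
\]
and then read off the theorem at $v=\ATroot$, using $\metr{\dsem{\ATroot}}=\metr{\T}$ (\Cref{def:SAT:metric}). Even though \T is a \DAG, \BUSAT is a terminating deterministic recursion whose result at a node depends only on that node (the arguments $\attrOp,\domain$ being fixed); hence it assigns a well-defined value $B(v)$ to every node, and I can induct along a topological order of \T's nodes --- equivalently, on the length of a longest path from $v$ to a leaf. A by-product of the argument is that every $\dsem{v}$ is finite and nonempty, so each $\metr{\dsem{v}}$ is a genuine nonempty iterated $\operOR$; this matters because the \namecref{theo:idempotent} assumes no neutral elements.

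Base case: if $v\in\BAS$ then \BUSAT returns $\attr{v}$, while $\dsem{v}=\{\{v\}\}$ (\Cref{lemma:ssem}), so $\metr{\dsem{v}}=\metrA{\{v\}}=\attr{v}$. For the inductive step I treat binary gates; arbitrary arity is identical up to associativity and commutativity. Let $v=\OR(v_1,v_2)$. By the induction hypothesis and the algorithm, $B(v)=\metr{\dsem{v_1}}\operOR\metr{\dsem{v_2}}=\bigoperOR_{A\in\dsem{v_1}}\metrA{A}\operOR\bigoperOR_{A\in\dsem{v_2}}\metrA{A}$, and I must match this with $\metr{\dsem{v}}=\bigoperOR_{A\in\dsem{v}}\metrA{A}$. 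Reading the left-hand side as $\operOR$ over the multiset $\dsem{v_1}\uplus\dsem{v_2}$ and using idempotency of $\operOR$, it equals $\metr{\dsem{v_1}\cup\dsem{v_2}}$ (a set union). By \Cref{lemma:ssem}, $\dsem{v}\subseteq\dsem{v_1}\cup\dsem{v_2}$ and every member of the latter is a successful attack on $v$; so each $A\in(\dsem{v_1}\cup\dsem{v_2})\setminus\dsem{v}$ is successful but not minimal on $v$, hence \emph{strictly} contains some $A'\in\dsem{v}$, and the absorbing law gives $\metrA{A'}\operOR\metrA{A}=\metrA{A'}$ (the consequence of absorption recorded in \Cref{sec:SAT:metrics}); since $\metrA{A'}$ already occurs, the summand $\metrA{A}$ is deleted. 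Rearranging with associativity and commutativity of $\operOR$, this leaves $B(v)=\metr{\dsem{v}}$.

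Let $v=\AND(v_1,v_2)$. By the induction hypothesis $B(v)=\metr{\dsem{v_1}}\operAND\metr{\dsem{v_2}}$, and distributivity of the semiring turns this into $B(v)=\bigoperOR_{A_1\in\dsem{v_1},\,A_2\in\dsem{v_2}}\bigl(\metrA{A_1}\operAND\metrA{A_2}\bigr)$. Associativity, commutativity and idempotency of $\operAND$ then give $\metrA{A_1}\operAND\metrA{A_2}=\bigoperAND_{a\in A_1\cup A_2}\attr{a}=\metrA{A_1\cup A_2}$ \emph{even when $A_1\cap A_2\neq\emptyset$} --- precisely the double counting of a shared \BAS that breaks \BUSAT in \Cref{ex:bottom_up_DAG}, now harmless. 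Hence $B(v)=\metr{\{A_1\cup A_2\mid A_1\in\dsem{v_1},\,A_2\in\dsem{v_2}\}}$, and since by \Cref{lemma:ssem} this set contains $\dsem{v}$ and consists of successful attacks on $v$, the two-step reduction from the $\OR$ case applies verbatim (idempotency of $\operOR$ to merge two child-pairs with the same union; absorption to delete unions that are non-minimal on $v$), giving $B(v)=\metr{\dsem{v}}$ and closing the induction.

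The crux --- and the reason both idempotency \emph{and} absorption are needed --- is accounting for the three ways the \DAG recursion overshoots the intended value and matching each to one algebraic law: minimal attacks shared between sibling subtrees, or distinct child-pairs with equal union, are merged by $\operOR$-idempotency; a \BAS shared between sibling subtrees, multiset-counted by the distributed conjunction, is merged by $\operAND$-idempotency; and successful attacks that cease to be minimal once subtrees are combined are dropped by the absorbing law. The one point requiring care is that the witness $A'$ above is \emph{strictly} contained, so only the absorbing law (and no identity element) is invoked there; the remaining ingredients --- every successful attack contains a minimal one, by finiteness of $\BAS$, and the iterated operators may be reshuffled using only commutativity and associativity --- are routine.
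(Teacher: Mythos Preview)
Your proof is correct and follows the same inductive skeleton as the paper's, but you choose a slightly different invariant. The paper first uses absorption once to replace $\metr{\dsem{v}}$ by $\metr{\Suc{v}}$ (the $\operOR$ over \emph{all} successful attacks on $v$), and then exploits the clean set-theoretic identities $\Suc{\OR(v_1,v_2)}=\Suc{v_1}\cup\Suc{v_2}$ and $\Suc{\AND(v_1,v_2)}=\Suc{v_1}\cap\Suc{v_2}=\{A_1\cup A_2\mid A_i\in\Suc{v_i}\}$ (the latter via upward closure of $\Suc{v_i}$). Idempotency of $\operOR$ and $\operAND$ then does exactly the same work as in your argument, but there is no per-step ``delete the non-minimal surplus'' manoeuvre: absorption is invoked only at the outset, not at every gate.

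Your route instead keeps the invariant $\metr{\dsem{v}}$ throughout, leaning on \Cref{lemma:ssem} (items 2--4) for the inclusions $\dsem{v}\subseteq(\text{RHS})$ and the fact that the RHS consists of successful attacks, and then applying absorption locally at each step to prune non-minimal survivors. This is perfectly valid and perhaps closer in spirit to how the metric is defined; the trade-off is that you repeat the absorption argument at every gate, whereas the paper's switch to $\Suc{v}$ pays that cost once and thereafter works with exact set equalities rather than inclusions-plus-cleanup.
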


\subsection{Binary decision diagrams}
\label{sec:SAT_DAGs:BDDs}

\BDDs offer a representation of Boolean functions that is often extremely compact. A \BDD is a rooted \DAG \bddf that, intuitively, represents a Boolean function $f\from \BB^n\to\BB$ over variables $\Vars=\{x_i\}_{i=1}^n$.
The terminal nodes of \bddf represent the outcomes of $f$: $\bot$ or $\top$.
A nonterminal node $w\in\BDDnodes$ represents a subfunction $f_w$ of $f$ via its Shannon expansion.
That means that $w$ is equipped with a variable $\BDDlab(w)\in\Vars$ and two children:
$\low(w)\in\BDDnodes$, representing $f_w$ in case that the variable $\BDDlab(w)$ is set to $\bot$;
and $\high(w)$, representing $f_w$ if $\BDDlab(w)$ is set to $\top$.


\begin{definition}
	\label{def:BDD}
	A \emph{BDD} is a tuple $\bddT[]=(\BDDnodes,\low,\high,\BDDlab)$
	over a set \Vars where:
	\begin{itemize}[topsep=.5ex,parsep=.1ex,itemsep=0pt]
    \item	The \emph{set of nodes} \BDDnodes is partitioned into
			terminal nodes (\BDDnodesT) and
			nonterminal nodes (\BDDnodesN);
    \item	$\low    \from \BDDnodesN \to \BDDnodes$
			maps each node to its \emph{low child};
    \item	$\high   \from \BDDnodesN \to \BDDnodes$
			maps each node to its \emph{high child};
    \item	$\BDDlab \from \BDDnodes   \to \{\bot,\top\}\cup\Vars$
			maps terminal nodes to Booleans,
			and nonterminal nodes to variables:\\[.5ex]
			${\BDDlab(w)\in\begin{cases}
			\{\bot,\top\} & \text{if}~w\in\BDDnodesT,\\
			\Vars         & \text{if}~w\in\BDDnodesN.
			\end{cases}}$
	\end{itemize}
	Moreover, \bddT[] satisfies the following constraints:
	\begin{itemize}[topsep=.5ex,parsep=.1ex,itemsep=0pt]
	\item	$(\BDDnodes,\mathobject{E})$ is a connected \DAG, where\\
			\phantom{.}\hfill%
			$\mathobject{E}=\{(w,w')\in\BDDnodes^2 \mid w'\in\{\low(w),\high(w)\}\}$;
	\item	\bddT[] has a unique root, denoted \BDDroot:\\
			\phantom{.}\hfill%
			$\exists!\,\BDDroot\in \BDDnodes.~%
			\forall w\in\BDDnodesN.~\BDDroot\not\in\{\low(w),\high(w)\}$.
	\end{itemize}
\end{definition}

Given a \BDD representing $f$, and $\boldsymbol{x}=(x_1,\ldots,x_n)\in\BB^n$, one calculates $f(\boldsymbol{x})$ by starting from the root of the \BDD, and at every node $w\in\BDDnodesN$ with $\BDDlab(w)=x_i$, proceed to $\high(w)$ if $x_i = \top$, and to $\low(w)$ if $x_i = \bot$. The terminal node one ends up in, is the value $f(\boldsymbol{x})$.

\paragraph{Reduced ordered BDDs}
We operate with \emph{reduced ordered \BDDs}, simply denoted \BDDs.
This requires a total order ${<}$ over the variables.
For \Cref{def:BDD} this means that:
\begin{itemize}[label=\textbullet]
\item	\Vars comes equipped with a total order,
		so \bddf is actually defined over a pair \poset[\Vars][{<}];
\item	the variable of a node is of lower order than its children:
		$\forall\,w\in\BDDnodesN.\,%
			\BDDlab(w)<\BDDlab(\low(w)),\BDDlab(\high(w))$;
\item	the children of nonterminal nodes are distinct nodes;
\item	all terminal nodes are distinctly labelled;
\item   nonterminal nodes are uniquely determined by their label and children:
$\forall w,w' \in W_n.\, ({\BDDlab(w) = \BDDlab(w')} \land {\low(w) = \low(w')} \land {\high(w) = \high(w')}) \Rightarrow w=w'$.
\end{itemize}
This has the following consequences in the \BDD:
\begin{itemize}[label=\textbullet]
\item	there are exactly two terminal nodes:
		$\BDDnodesT=\{\oldbot,\!\oldtop\}$,
		with ${\BDDlab(\oldbot)=\bot}$ and ${\BDDlab(\oldtop)=\top}$;
\item	the label of the root node \BDDroot has the lowest order;
\item	in any two paths from \BDDroot to $\oldbot$ or $\oldtop$,
		the order of the variables visited is (increasing and) the same.
\end{itemize}
Given the ordering $<$ on $\Vars$, there is a unique reduced ordered \BDD that represents $f$.

\paragraph{Encoding static \ATs as \BDDs}
The semantics of an \AT $T$ can be encoded by its \BDD. This is done for fault trees in \cite{Rau93}, and the method works identically for \ATs. One assumes an arbitrary order on $\BAS_T$, and creates the reduced ordered \BDD $B_T$ of the propositional formula $L_T$ of \Cref{sec:SAT:semantics}.

A Boolean vector $\boldsymbol{x}$ evaluates to $\top$ if its corresponding path in the \BDD ends up in the terminal node $\oldtop$ that is labelled $\top$.\!%
\footnote{%
To ease graphical interpretation, we identify the terminal node $\oldtop$ with its label $\top$ and thus speak of paths $\BDDroot[\bddT] \to \top$.
}
For \bddT, this means that an attack \attack is successful if and only if there is a path $p$ from \BDDroot[\bddT] to $\top$ such that the $\high$-edges traversed form a subset of \attack. This can be phrased as follows:

\begin{theorem}[\cite{Rau93}]
	\label{thm:BDDpaths}
	Let $P$ be the set of paths $\BDDroot[\bddT] \to \top$ in \bddT. Then the map
	\begin{align*}
		\pi\colon P &\rightarrow \allAttacks \\
		p &\mapsto \{\BDDlab(w) \in \BAS_T \mid (w,\high(w)) \in p\}
	\end{align*}
 satisfies $\ssem{T} \subseteq \textrm{im}(\pi)$, and $\pi(p)$ is succesful for all $p \in P$.
\end{theorem}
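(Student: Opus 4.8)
The plan is to recognise the statement as the attack-tree instance of a standard fact about reduced ordered BDDs of monotone Boolean functions, and to prove it by tracking, for a given attack $A$, the \emph{evaluation path} that the BDD computation of $L_T$ on $A$ traverses. Throughout I will lean on three facts already established in the excerpt: (i) $L_T$ represents the structure function, so that the truth value of $L_T$ under the assignment $\delta_a\mapsto(a\in A)$ equals $\sfunT(A)$ (see \Cref{sec:SAT:semantics} and \Cref{def:propform}); (ii) $\bddT$ represents $L_T$, so that walking from $\BDDroot[\bddT]$ and, at each node $w$, moving to $\high(w)$ if $\BDDlab(w)\in A$ and to $\low(w)$ otherwise, reaches the terminal whose label is exactly this truth value; and (iii) $\bddT$ is ordered, so labels strictly increase along every directed path and hence each variable labels at most one node of any path $\BDDroot[\bddT]\to\top$.

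For the first claim (every $\pi(p)$ is successful) I would fix $p\in P$, set $A=\pi(p)$, and show by induction along $p$ that the evaluation path of $L_T$ on $A$ from fact (ii) coincides with $p$. In the inductive step at a node $w$: if $p$ leaves $w$ by its high edge then $\BDDlab(w)\in A$ by the definition of $\pi$, so the evaluation path agrees; if $p$ leaves $w$ by its low edge then $\BDDlab(w)\notin A$, because otherwise the definition of $\pi$ would supply a node $w'\ne w$ on $p$ carrying the same label, contradicting fact (iii). Hence the evaluation path ends in the same terminal as $p$, namely $\top$, and by facts (i)--(ii) this says $\sfunT(A)=\top$, i.e.\ $A$ is successful.

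For the second claim ($\ssem{T}\subseteq\textrm{im}(\pi)$) I would take $A\in\ssem{T}$ — a minimal successful attack — and run the evaluation path of $L_T$ on $A$; since $\sfunT(A)=\top$ it ends in $\top$, so it is a path $p\in P$. At every node where $p$ goes high, its label lies in $A$ (that is precisely why the walk went high there) and is a $\BAS$, so $\pi(p)\subseteq A$. By the first claim $\pi(p)$ is itself a successful attack, and since $A$ is minimal this forces $\pi(p)=A$, whence $A\in\textrm{im}(\pi)$.

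I do not expect a genuine obstacle; the one place demanding care is the interaction between the chosen path $p$ and fact (iii) — a variable tested ``low'' along $p$ must not recur as a ``high'' label elsewhere on $p$, which is exactly what the variable ordering delivers, and this is what makes the induction in the first claim go through. It is also worth remarking why only the inclusion, and not equality, is asserted: for a poorly chosen ordering of $\BAS_T$ a path $\BDDroot[\bddT]\to\top$ may collect high labels forming a successful but non-minimal attack, so $\textrm{im}(\pi)$ can be strictly larger than $\ssem{T}$; the inclusion is all that later arguments rely on.
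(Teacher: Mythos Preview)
Your proof is correct. Note, however, that the paper does not supply its own proof of this statement: it is quoted as a known result from \cite{Rau93} and used as a black box, so there is no ``paper proof'' to compare against. Your argument is the standard one and goes through cleanly; the only step that needs the care you flag is indeed the low-edge case in the first claim, and the variable ordering (your fact (iii)) resolves it exactly as you say.
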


\smallskip

\begin{example}
	\label{ex:SAT:BDD}
	\opencutright  
	\renewcommand*{\windowpagestuff}{%
		\centering\includegraphics[width=.7\linewidth]{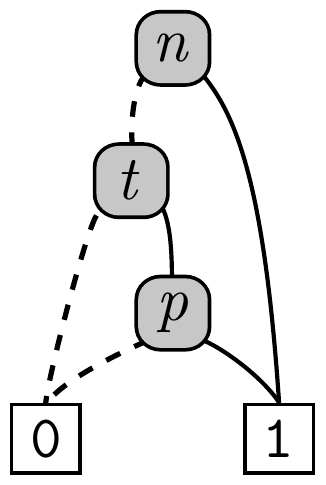}
	}
	\begin{cutout}{0}{.73\linewidth}{0pt}{8}
	Let $n<t<p$ in \sampleTs from \Cref{ex:running_examples}: the resulting \BDD $(\bddT[\sampleTs])$ is illustrated to the right.
	The children of a node appear below it (so the root is on top), and a dashed line from $w \rightarrow w'$ means $w'=\low(w)$, and a solid line means $w'=\high(w)$.
	The two paths $n \to \top$ correspond to the minimal attacks $\{n\},\{t,p\}$.
	\end{cutout}
\end{example}

An algorithm to find the (reduced ordered) \BDD of an \AT is given in \cite{Rau93}.
In the worst case, the size of the \BDD is exponential in the number of variables, i.e.\ the \BASes of the \AT \cite{Bry86}.
However, \DAG-structures that represent Boolean functions---such as fault trees and \ATs---often have small \BDD encodings \cite{BET13,RD97}.
The choice of the linear order $<$ on $\BAS_T$ impacts the size of the \BDD: finding the order that yields the smallest \BDD is NP-hard, but several heuristics exists to find a good order \cite{Val79,LWW14}.



\subsection{BDD-based algorithm for DAG-structured SATs}
\label{sec:SAT_DAGs:algorithm}

\Cref{alg:bottom_up_BDD} (on \cpageref{alg:bottom_up_BDD}) computes metrics for \DAG-structured attack trees --- a similar algorithm for fault trees and their failure probability metric was introduced in \cite{Rau93}.
Here, just like \BUSAT, \Cref{alg:bottom_up_BDD} requires $\domain = (\Vdom,\operOR,\operAND)$ to be a semiring attribute domain.
However, \Cref{alg:bottom_up_BDD} also requires the definition of neutral elements \ntOR and \ntAND for $\operOR$ and $\operAND$, i.e.\
${\forall x\in\Vdom.~x \operOR \ntOR = x \operAND \ntAND = x}$;
we write $\ndomain = (\Vdom,\operOR,\operAND,\ntOR,\ntAND)$ and call this a \emph{unital semiring}.
We furthermore require the domain to be absorbing --- see \Cref{sec:SAT_trees:metrics}.
These conditions are mild: neutral elements are common (a semiring without them can always be extended to have them) \cite{Mac71}.
Examples of neutral elements in \Cref{tab:SAT:metric} are $\ntOR=\infty$ and $\ntAND=0$ for min cost, and $\ntOR=0$ and $\ntAND=1$ for (max) discrete probability.
Moreover, most semiring metrics are absorbing, e.g.\ all in \Cref{tab:SAT:metric} are, except max challenge and max damage. 

\paragraph{The algorithm}
To explain the algorithm in more detail, we first introduce some notation.
For $w \in W_n$, let $P(w)$ be the set of paths $w \to \top$.
Furthermore, for $p \in P(w)$, define
\begin{align}
	\metrA{p} &= \bigoperAND_{\substack{w \in W_n}: (w,\high(w)) \in p} \alpha(\BDDlab(w)), \label{eq:bddalg1}\\
	\metr{w} &= \bigoperOR_{p \in P(w)} \metrA{p}. \label{eq:bddalg2}
\end{align}
By \Cref{thm:BDDpaths} each path in $P(R_{B_T})$ corresponds to a succesful attack, and $\metr{R_{B_T}}$ is calculated by performing $\operOR$ over the metric values of these attacks. These attacks include the minimal attacks by \Cref{thm:BDDpaths}, and because $D$ is absorbing all other attacks are irrelevant. Hence one has $\metr{R_{B_T}} = \metr{T}$. This value can be calculated in three steps:
\begin{enumerate*}[label=\textit{\arabic*.}]
\item	determine $P(\BDDroot[\bddT])$;
\item	for $p \in P(\BDDroot[\bddT])$, calculate $\metrA{p}$;
\item	compute $\metr{w}$.
\end{enumerate*}

Even though this approach works, it has some inefficiency built into it: there will typically be paths that share sections, and on these sections the above method calculates $\operAND$ twice. Therefore we instead use a bottom-up algorithm on the \BDD, where at every node we use $\operOR$ on the paths up to that node. The key observation is that paths $p \in P(w)$ either pass through $\low(w)$ or $\high(w)$, and from that point onwards it is an element of $P(\low(w))$ or $P(\high(w))$. Also, in the latter case, we need to add $\alpha(\BDDlab(w))$ to the \mbox{$\operAND$-ation} when calculating $\metrA{p}$, because that \BAS is included. By exploiting the distributivity of $\operOR$ over $\operAND$, one can show that
\begin{equation}
	\label{eq:BDDstep}
	\metr{w} = \metr{\low(w)} \operOR \big(\metr{\high(w)} \operAND \alpha(\BDDlab(w))\big).
\end{equation}
In \Cref{alg:bottom_up_BDD} we use a bottom-up method to calculate $\metr{\BDDroot[\bddT]}=\metr{\T}$ by repeatedly applying \eqref{eq:BDDstep} from \BDDroot[\bddT]. This is done until we reach the bottom nodes $\bot$ and $\top$, to which we assign the values $1_{\operOR}$ and $1_{\operAND}$, respectively.

\begin{algorithm}
	\KwIn{\BDD $\bddT=(\BDDnodes,\low,\high,\BDDlab)$,\newline
	      node $w\in\BDDnodes$,\newline
	      attribution $\attrOp$,\newline
	      \mbox{semiring attribute domain 
	      $\ndomain=(\Vdom,\operOR,\operAND,\ntOR,\ntAND)$.}}
	\KwOut{Metric value $\metr{\T}\in\Vdom$.}
	\BlankLine
	\uIf{$\BDDlab(w)=\bot$}{%
		\Return{\ntOR}
	} \uElseIf{$\BDDlab(w)=\top$}{%
		\Return{\ntAND}
	} \Else(\tcp*[h]{$w\in\BDDnodesN$}) {%
		\Return{${\BUBDD(\bddT,\low(w),\attrOp,\ndomain)} \operOR
			{\big(\BUBDD(\bddT,\high(w),\attrOp,\ndomain) \operAND
				\attr{\BDDlab(w)}\big)}$}
	}
	\caption{\BUBDD for a \DAG-structured \SAT \T}
	\label{alg:bottom_up_BDD}
\end{algorithm}

\medskip

\begin{example}
	\label{ex:SAT:bottom_up_BDD}
	\opencutright  
	\renewcommand*{\windowpagestuff}{%
		\centering\includegraphics[width=.7\linewidth]{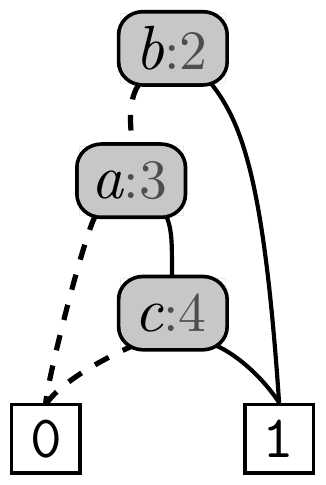}
	}
	\begin{cutout}{0}{.73\linewidth}{0pt}{7}
	For the \DAG-structured \SAT shown in \Cref{fig:bottom_up_DAG}, the order ${b<a<c}$ of its \BASes yields the \BDD to the right.
	Call it \bddT and let us use it to compute the min cost of \T like in \Cref{fig:bottom_up_DAG}, via the attribution $\attrOp=\{{a\mapsto3},{b\mapsto2},{c\mapsto4}\}$ and the domain $(\NN_\infty,\min,+)$.
	Moreover, to use \Cref{alg:bottom_up_BDD}, we use the neutral elements $\:\ntOR=\infty\:$ for $\,\min$ and $\ntAND=0$ for ${+}$, viz.\ we use the attribute domain $\,\ndomain=(\NN_\infty,\min,{+},\infty,0)\,$.
	\end{cutout}
	\noindent%
	Let the nonterminal nodes of \bddT be $\BDDnodesN=\{w_a,w_b,w_c\}$, and for $w\in\BDDnodes$ let $\BU(w)$ be shorthand for $\BUBDD(\bddT,w,\attrOp,\ndomain)$; this is equal to $\metr{w}$.
	We compute the metric:
	\begingroup
	\setlength{\abovedisplayskip}{1ex}
	\setlength{\belowdisplayskip}{1.5ex}
	\begin{align*}
	  \BU(\BDDroot[\bddT])
		&= \min\!\big(\BU(w_a) ,\, \BU(\oldtop)+\attr{b} \big)\\
		&= \min\!\big(\BU(w_a) ,\, \ntAND+2 \big)\\
		&= \min\!\big(\BU(w_a) ,\, 2 \big)\\
		&= \min\!\big(\min(\BU(\oldbot) , \BU(w_c)+\attr{a}) ,\, 2 \big)\\
		&= \min\!\big(\min(\ntOR , \BU(w_c)+3) ,\, 2 \big)\\
		&= \min\!\big(\BU(w_c)+3 ,\, 2 \big)\\
		&= \min\!\big(\min(\BU(\oldbot), \BU(\oldtop)+\attr{c})+3 ,\, 2 \big)\\
		&= \min\!\big(\min(\ntOR, \ntAND+4)+3 ,\, 2 \big)\\
		&= \min(4+3 ,\, 2) ~=~ 2.
	\end{align*}
	\endgroup
	To compute instead the (max) discrete probability we use the attribution $\attrOp'=\{{a\mapsto0.1},{b\mapsto0.05},{c\mapsto0.6}\}$ and the attribute domain $\ndomain'=({[0,1]_\QQ},\max,\cdot,0,1)$.
	Then computations are as before until the last line, which here becomes: ${\max\!\big(\attrOp'(c)\cdot\attrOp'(a),\,\attrOp'(b)\big)} = {\max(0.6\cdot0.1,0.05)} = 0.06$.
\end{example}

\Cref{theo:bottom_up_BDD} states the correctness of \Cref{alg:bottom_up_BDD}, i.e.\ that it yields the metric for a static \AT given in \Cref{def:SAT:metric} regardless of its structure. We prove \Cref{theo:bottom_up_BDD} in \Cref{app:metrics}.

\begin{theorem}
	\label{theo:bottom_up_BDD}
	\def\root{\ensuremath{\BDDroot[{\bddT}]}\xspace}
	\def\IC#1{\ensuremath{\mathrm{IC}_{#1}}\xspace}
	Let \T be a static \AT, \bddT its \BDD encoding over \poset[\BAS][{<}],
	$\attrOp$ an attribution on \Vdom,
	and $\ndomain=(\Vdom,\operOR,\operAND,\ntOR,\ntAND)$ an absorbing unital semiring attribute domain.
	Then $\metr{\T} = \BUBDD(\bddT,\root,\attrOp,\ndomain)$.
\end{theorem}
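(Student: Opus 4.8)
The plan is to prove by structural induction on the \BDD that for every node $w \in \BDDnodes$, the algorithm returns $\metr{w}$ as defined in \eqref{eq:bddalg2}, and then to invoke \Cref{thm:BDDpaths} together with absorption to conclude $\metr{\BDDroot[\bddT]} = \metr{\T}$. For the base cases, when $w = \oldbot$ there are no paths $w \to \top$, so $P(w) = \emptyset$; the empty $\bigoperOR$ should evaluate to the neutral element $\ntOR$, which is exactly what the algorithm returns. When $w = \oldtop$, the only path is the empty path, whose $\metrA{\cdot}$ value is the empty $\bigoperAND$, namely $\ntAND$; again this matches. For the inductive step at a nonterminal node $w$ with $\BDDlab(w) = a$, I would partition $P(w)$ according to whether the first edge taken is the low- or the high-edge: paths through $\low(w)$ are in bijection with $P(\low(w))$ and carry the same $\metrA{\cdot}$ value, while paths through $\high(w)$ are in bijection with $P(\high(w))$ but have $\attr{a}$ additionally inserted into the $\bigoperAND$. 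Using commutativity/associativity of $\operAND$ to factor out $\attr{a}$, and then distributivity of $\operAND$ over $\operOR$ to pull it through the $\bigoperOR$ over $P(\high(w))$, gives precisely \eqref{eq:BDDstep}: $\metr{w} = \metr{\low(w)} \operOR \big(\metr{\high(w)} \operAND \attr{a}\big)$. The induction hypothesis identifies the two recursive calls with $\metr{\low(w)}$ and $\metr{\high(w)}$, so the algorithm returns $\metr{w}$.

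With the claim established at the root, it remains to argue $\metr{\BDDroot[\bddT]} = \metr{\T}$. By \Cref{thm:BDDpaths}, the map $\pi$ sends $P(\BDDroot[\bddT])$ onto a set of successful attacks containing $\ssem{\T}$, and for each path $p$ one has $\metrA{p} = \metrAOp(\pi(p))$ by construction of $\metrA{p}$ in \eqref{eq:bddalg1}. Hence $\metr{\BDDroot[\bddT]} = \bigoperOR_{p \in P(\BDDroot[\bddT])} \metrAOp(\pi(p))$, which is $\bigoperOR_{\attack \in \mathrm{im}(\pi)} \metrAOp(\attack)$ after collapsing any duplicate attacks — here I would note that $\operOR$ being idempotent on equal arguments is not needed, since distinct paths give distinct $\high$-edge sets, hence distinct attacks; but even if not, idempotency follows from absorption with $y = \ntAND$. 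Since every attack in $\mathrm{im}(\pi) \setminus \ssem{\T}$ is a successful attack and therefore a superset of some minimal attack $\attack' \in \ssem{\T}$, absorption gives $\metrAOp(\attack') \operOR \metrAOp(\attack) = \metrAOp(\attack')$ (this is the observation from \Cref{sec:SAT_trees:metrics} that $A \subseteq A'$ implies $\metrA{A} \operOR \metrA{A'} = \metrA{A}$), so these extra terms are absorbed and $\bigoperOR_{\attack \in \mathrm{im}(\pi)} \metrAOp(\attack) = \bigoperOR_{\attack \in \ssem{\T}} \metrAOp(\attack) = \metr{\dsem{\T}} = \metr{\T}$.

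I expect the main obstacle to be the bookkeeping around the path decomposition and the neutral-element conventions for empty $\bigoperOR$ and $\bigoperAND$: one has to be careful that "the empty path at $\oldtop$" is genuinely the unique element of $P(\oldtop)$ and contributes $\ntAND$, and that $P(\oldbot) = \emptyset$ contributes $\ntOR$, so that the recursion is well-founded along the \DAG (not just a tree) — but since the \BDD is acyclic and every path terminates at $\oldbot$ or $\oldtop$, structural induction on the longest path length from $w$ is legitimate. The distributivity manipulation in deriving \eqref{eq:BDDstep} is the only place the semiring axiom is used, and it is routine; the absorption argument at the end is the only place the absorbing hypothesis is used, and it precisely handles the gap $\ssem{\T} \subseteq \mathrm{im}(\pi)$ rather than equality. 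A secondary subtlety worth a sentence: the reduced-ordered structure guarantees each nonterminal node is visited along paths with a consistent variable order, so that no variable is inserted twice into a single $\metrA{p}$ and $\pi(p)$ is a genuine set — this is what makes $\metrA{p} = \metrAOp(\pi(p))$ hold on the nose.
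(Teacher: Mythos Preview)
Your proposal is correct and follows essentially the same approach as the paper: structural induction on \BDD nodes to identify the recursive call with $\metr{w}$ via the path decomposition \eqref{eq:BDDstep}, then \Cref{thm:BDDpaths} plus absorption to pass from $\metr{\BDDroot[\bddT]}$ to $\metr{\T}$. The paper merely reverses the order of these two steps; your additional remarks on why $\pi$ is injective and why $\metrA{p}=\metrAOp(\pi(p))$ are sound but not strictly needed, since---as you note---absorption with $y=\ntAND$ already gives idempotency of $\operOR$.
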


We also note that, when actually implementing \Cref{alg:bottom_up_BDD}, one can further optimize its efficiency via dynamic programming.
That is, by storing the calculated values of nodes so that this calculation is not repeated unnecessarily for nodes with multiple parents.
We leave such considerations out of \Cref{alg:bottom_up_BDD} to highlight the simple structure of the method and its relation to \cref{eq:BDDstep}.

\Cref{alg:bottom_up_BDD} has linear complexity in the size of \bddT, so the overall complexity of calculating \AT metrics via its \BDD is mainly determined by the size of \bddT.
As mentioned in \Cref{sec:SAT_DAGs:BDDs}, this is worst-case exponential, but in practice it is usually a lot faster.
This makes the \BDD approach a suitable heuristic for calculating \AT metrics.
In \Cref{sec:mod} we show how the performance can be further improved by incorporating modular analysis.


\subsection{Computing the \texorpdfstring{$\boldsymbol{k}$}{k}-top metric values}
\label{sec:ktop}

The approach described above can be extended to efficiently compute the $k$-top values for a given metric.
This problem asks not only the min/max value of the metrics from \Cref{tab:SAT:metric}, but also the next $k-1$ min/max values, e.g.\ the cost of the $k$ cheapest attacks, or the probability of the $k$ most likely ones. 


Formally, we can express $k$-top values in the language of multisets.
For a linearly ordered set $X$ and a multiset $M \in \NN^X$, we let $\operatorname{min}^k(M)$ be the multiset of the $k$ smallest elements of $M$, with $\operatorname{min}^k(M) \doteq M$ when $\card{M} \leqslant k$.
Given an attribution $\attrOp \from \BAS \to X$, the \emph{top-$k$ metric values of \T} are defined as $\operatorname{Top}_k(\T,\attrOp) = \operatorname{min}^k(\ldb \metrA{\attack} \mid \attack \in \dsem{\T}\rdb)$ with $\ldb\cdot\rdb$ denoting a multiset.
That is, $\operatorname{Top}_k(\T,\attrOp)$ is an element of $\mathscr{M}(V)$, viz.\ a multiset of (the $k$ smallest) values: it describes the $k$-top values when $\operOR = \min$ with respect to the order on $X$.
When $\operOR = \max$, the $k$-top values are defined analogously.

The multiset $\operatorname{Top}_k(\T,\alpha)$ can be computed from the \BDD using the $k$-shortest-paths algorithm for \DAGs; this is a well-known extension of the Dijkstra (or Thorup) algorithm \cite{Dij59,Tho99}. For a \DAG $G$ with edges weighted by the matrix $Q$, $\kshortest(G,Q,s,t,k,\circ)$ returns the multiset of weights of the $k$-shortest paths from a source node $s$ of $G$, to a target node $t$, using operator $\circ$ to accumulate weight.

By \Cref{thm:BDDpaths} one has that $\operatorname{Top}_k(\T,\attrOp) = \ldb \metrA{p} \mid p \in P(\BDDroot[\bddT])\rdb$. To apply the $k$-shortest path algorithm, we interpret $\metrA{p}$ as the (weighed) length of path $p$, which is done by assigning weight $\attr{\BDDlab(w)}$ to each edge $(w,\high(w))$ in $p$, and weight $1_{\operAND}$ to each edge $(w,\low(w))$. We accumulate these values---to compute the length of $p$---with operator~$\operAND$.
This way, the $k$ shortest paths are the $k$ paths with the smallest values $\metrA{p}$; when $\operOR = \max$, we invert the weights.\!%
\footnote{%
Technically, we assume that $V$ has an order-inverting bijection $-1$. If this is not the case, then one can just invert the linear order on $V$ when $\operOR = \max$.}
This is encapsulated as \Cref{alg:shortest_path_BDD}, whose correctness is given by \Cref{thm:BDDpaths} and the $\kshortest$ algorithm.




\begin{algorithm}
	\def\matrix{\mathobject{Q}}
	\def\sgn{\ensuremath{\mathit{sgn}}}
	\KwIn{\BDD $\bddT=(\BDDnodes,\low,\high,\BDDlab)$,\newline
	      number of values to compute $k\in\NN$,\newline
	      attribute domain $\domain=(\Vdom,\operOR,\operAND)$,\newline
	      attribution $\attrOp$.}
	\KwOut{$k$-top metric values of \T for $\attrOp$ and $\domain$.}
	\BlankLine
	\DontPrintSemicolon
	$\mathobject{E} := \bigcup_{w \in \BDDnodesN}
		\{(w,\low(w)),(w,\high(w))\}$ \\[.2ex]
	\matrix\ :=  $\card{\BDDnodes}\times\card{\BDDnodes}$ matrix
		filled with $\ntAND$  \\ 
	\leIf(\tcp*[f]{$\operOR=\max$})
		{$\operOR=\min$}  
		{\sgn\ := $1$}    
		{\sgn\ := $-1$}   
	\ForEach{nonterminal node $w\in\BDDnodesN$}{%
		$\matrix[w][\high(w)]$ := $\sgn\cdot\attr{\BDDlab(w)}$\\
	}
	\Return{$\sgn\cdot\kshortest((\BDDnodes,E),\matrix,\BDDroot[\bddT],\oldtop,k,\operAND)\!\!\!\!$}
	\caption{$\mathtt{k\_top}$ metric values for a \SAT \T}
	\label{alg:shortest_path_BDD}
\end{algorithm}


\begin{example}
	\label{ex:k-top_values}
	Consider the \DAG-structured \SAT from \Cref{fig:bottom_up_DAG}, $\T={\AND\big(\OR(a,b),\OR(b,c)\big)}$.
	To compute its two cheapest attacks under the attribution $\attrOp = \{{a\mapsto3}, {b\mapsto1}, {c\mapsto4}\}$, let $b<a<c$ s.t.\ \bddT is as in \Cref{ex:SAT:bottom_up_BDD}.
	The $\low$ edge of the root $b$ (that encodes ``not performing $b$'') is labelled with cost $\ntAND=0$, and the $\high$ edge with cost $\attr{b}=1$; the same is done for $a$ and $c$.
    Then the shortest-weight path from the root of \bddT to its $\top$-labelled leaf is $p_1=(b,\top)$, which yields the cheapest attack $\attack_1=\{b\}$ with cost $\metrA{\attack_1}=\metrA{p_1}=\attr{b}=1$.
	Second to that we find the path $p_2=(b,a,c,\top)$, which yields the second-cheapest attack $\attack_2=\{a,c\}$ with cost $\metrA{\attack_2}=\metrA{p_2}=\ntAND\operAND\attr{a}\operAND\attr{c}=0+3+4=7$.
\end{example}

In \Cref{sec:order:ktop} we introduce another method to calculate $\operatorname{Top}_{k}(T,\attrOp)$, by expressing it as a semiring attribute domain itself. This method also extends to the dynamic case.


\section{Dynamic Attack Trees}
\label{sec:DAT}

In \ATs with \SAND gates the execution order of the \BASs becomes relevant.
This affects primarily the semantics, i.e.\ what it means to perform a successful attack, but there are also metrics sensitive to the sequentiality of events.

\subsection{Partially-ordered attacks}
\label{sec:DAT:wellformed}

As for the static case, the semantics of a dynamic attack tree (\DAT) is given by its successful attack scenarios.
However, \DATs necessitate a formal notion of order, because a sequential gate $\SAND(v_1,\ldots,v_n)$ succeeds only if every $v_i$ child is completely executed before $v_{i+1}$ starts.
This models dependencies in the order of events.
For example, in H\aa{}stad's broadcast attack, $n$ messages must first be intercepted, from which an $n$-th root (the secret key) may be computed.
This standard interpretation is \emph{ordered}, and an activated \BAS is uninterruptedly completed.
This rules out constructs that introduce circular dependencies such as $\SAND(a,b,a)$.\!%
\footnote{\,Cf.\ Kumar et al. (2015), who separate activation from execution of a \BAS and can therefore operate with $\SAND(a,b,a)$ \cite{KRS15}.}

Thus, an attack scenario that operates with \SAND gates is not just a set $\attack\subseteq\BAS$, but rather a partially-ordered set (poset) \poset: here $\prec$ is a strict partial order, where $a\prec b$ indicates that $a\in\attack$ must be carried out strictly before $b\in\attack$. 
Incomparable basic attack steps can be executed in any order, or in parallel.
Thus, the attack \poset indicates that all \BAS in \attack must be executed, and their execution order will respect ${\prec}$\,.
This succinct construct can represent combinatorially many execution orders of \BAS.
For instance \poset[\{a,b\}][\varnothing] allows three executions: the sequence $(a,b)$, and $(b,a)$, and the parallel execution $a\|b$.
Instead, \poset[\{a,b\}][\{(a,b)\}] only allows the execution sequence $(a,b)$.
%
%
Note that strict partial orders are irreflexive and transitive, so e.g.\ $\SAND(a,b,c)$ gives rise to ${\prec}=\{(a,b),(b,c),(a,c)\}$.

\subsection{Semantics for dynamic attack trees}
\label{sec:DAT:semantics}

As for SATs we need to define the notions of attacks, suites, and the structure function. These are given below and are mostly straightforward analoga of the definitions in \Cref{sec:SAT:semantics} in the realm of posets. The semantics as presented here were first defined in \cite{LS2021}.

\begin{definition}
	\label{def:attack}
	Let $\T$ be a DAT.
	\begin{enumerate}[leftmargin=1.3em]
	\item The set $\allAttacks_T$ of \emph{attacks} on a DAT \T is the set of strictly partially ordered sets  $O = \poset$, where $\attack \subseteq \BAS_{\T}$.
	\item $\allAttacks_T$ has a partial order $\leqslant$ given by $O \leqslant O'$, for ${O=\poset}$ and ${O'=\poset[A'][\prec']}$, if and only if $A \subseteq A'$ and ${\prec} \subseteq {\prec'}$.
	\item An \emph{attack suite} is a set of attacks $\suite \subseteq \allAttacks_T$. The set of all attack suites is denoted $\allSuites_T$. 
	\end{enumerate}
\end{definition}

%
\begin{definition}
	\label{def:success}
	The \emph{structure function} $\sfunT\from\ATnodes\times\allAttacks\to\BB$
	of a dynamic \AT \T for $O = \poset$ is given by
	\begin{align*}
	 f_T(v,O)=&
	  \begin{cases}
		\top  & \parbox{61pt}{if~$\type{v}=\tBAS$}~\text{and}~%
				v\in\attack,\\
		\top  & \parbox{61pt}{if~$\type{v}=\tOR$}~\text{and}~%
				\exists u\in\child{v}.\sfunT(u,O)=\top,\\
		\top  & \parbox{61pt}{if~$\type{v}=\tAND$}~\text{and}~%
				\forall u\in\child{v}.\sfunT(u,O)=\top,\\
		\top  & \parbox{61pt}{if~$\type{v}=\tSAND$}~\text{and}~%
				\forall u\in\child{v}.\sfunT(u,O)=\top\\
			  & 
			    \,\text{and}\;\forall\, 1 \leqslant i < \card{\child{v}} .\, 
				\big(a \in \attack\cap\desc_{ch(v)_i} \land\\
			  & 
				\hfill a' \in \attack\cap\desc_{ch(v)_{i+1}}\big)
				\Rightarrow a\prec a',\\
		\bot  & \text{otherwise}.
	  \end{cases}
	\end{align*}
	where $\desc_v = \BAS \cap \chOp^\text{+}(v)$
	are the \BAS descendants of~$v$; and recall that \child{v} is a sequence:
	$\child{v}=(v_1,\ldots,v_n)$.
\end{definition}

\Cref{def:success} resembles \Cref{def:SAT:sfun} and adds \SAND{s}: a gate $\SAND(v_1,\ldots,v_n)$ succeeds iff all the \BAS descendants of each $v_i$ are completed before any \BAS descendant of $v_{i+1}$.

Again we say that \emph{attack $O$ reaches $v$} if $\sfunT(v,O) = \top$, and write $\Suc{v}$ for the suite of (``successful'') attacks reaching $v$.
Its minimal elements w.r.t.\ the partial order $\leqslant$ are called \emph{minimal attacks}, and its set of minimal attacks is denoted \dsemin{v}.
We let $\Suc{\T} \doteq \Suc{R_T}$ and $\dsemin{\T} \doteq \dsemin{R_T}$. Note that the set of minimal attacks has a different notation than we used for \SATs; this is because, as we explain below, for \DATs the semantics are not given by the set of minimal attacks.

\begin{example}
	\label{ex:DAT:attack}
	Three successful attacks for  the \DAT \sampleTd of \Cref{fig:AT:example:dynamic} are: \poset[\{\ww,\cc\}][\{(\ww,\cc)\}], \poset[\{\ff,\ww\}][\emptyset], and \poset[\{\ff,\ww,\cc\}][\{(\ww,\cc)\}].
	The first two are minimal.
	Attacks \poset[\{\ff,\cc\}][\emptyset] and \poset[\{\ww,\cc\}][\{(\cc,\ww)\}] are not successful.
\end{example}

\paragraph{Satisfiability of DATs}
Contrary to the static case, a dynamic \AT may have no successful attacks, for instance $\Suc{\SAND(a,a)}=\emptyset$.
We call \T \emph{satisfiable} if $\Suc{\T} \neq \emptyset$.
Satisfiability is enforced with a notion of well-formedness in \cite{BS21}, where only well-formed \DATs are given semantics.
This enforces coherence by discarding trees that allow conflicting execution orders of \BASes.
However, \cite{BS21} notes that this is overly restrictive, since it also discards satisfiable \DATs, e.g.\ $\OR(\SAND(a,b),\SAND(b,a))$ where \poset[\{a,b\}][\{(a,b)\}] is a valid attack.
Our definition of semantics avoids this issue:

\begin{definition}
	\label{def:DAT:semantics}
	The \emph{semantics of a \DAT} \T is its set of successful attacks $\Suc{\T}$, denoted \dsem{\T}.
	%
\end{definition}

\begin{wrapfigure}[5]{r}{.23\linewidth}
	\vspace{-2ex}\hspace*{-5pt}%
	\includegraphics[width=\linewidth]{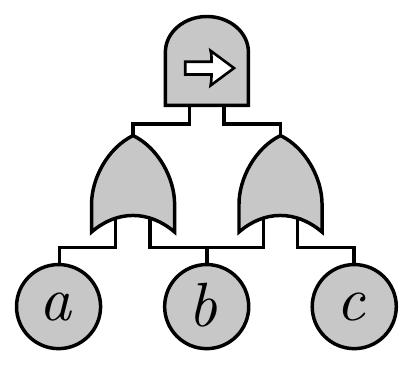}%
\end{wrapfigure}

Cf.\ to \Cref{def:SAT:semantics}, where the semantics of a \SAT \T is given by its minimal attacks $\therefore {\ssem{\T}=\dsemin{\T}}$.
Instead \Cref{def:DAT:semantics} indicates that, in general, $\dsem{\T'}\supseteq\dsemin{\T'}$ for \DAT $\T'$. We use this definition because \DATs are not coherent: it is possible that attack $O$ is successful while $O' \geqslant O$ is not. For instance in the \DAT of the picture to the right, \poset[\{a,c\}][\{(a,c)\}] is a successful attack, but \poset[\{a,b,c\}][\{(a,c)\}] is not: the single \SAND-gate has two children that share a \BAS, so not all \BAS of the first child precede those of the second child.

Hence and contrary to \SATs, the full semantics of a \DAT cannot be recovered from just its minimal attacks.
While the noncoherence of \DAT semantics is a drawback compared to \SAT semantics, it is needed in order to define semantics for all \DAG-structured \DATs, and not just the well-formed ones like in \cite{BS21}.
For well-formed \DATs our semantics are coherent; in fact this is true for all \DATs in which no two children of a \SAND gate share \BASes.

Note also that in spite of being laxer than \cite[Def.\:10]{BS21}, \Cref{def:DAT:semantics} still rules out some feasible interleavings in the execution of high-level \SAND gates.
Consider e.g.\ ${\T'=\SAND\big(a,\AND(b,c)\big)}$, where \Cref{def:success} forces $a$ to occur before any of $\{b,c\}$. However, one might argue that $\AND(b,c)$ is only executed once both $b$ and $c$ are complete, and an attack should also be considered successful as long as $a$ is executed before $\AND(b,c)$. Under this interpretation $(b,a,c)$ would be a valid execution sequence in $\T'$.
To allow this kind of sequences we need a more complex notion of strict partial order, as we discuss in \Cref{sec:conclu}.
However, this work is about the efficient computation of metrics, and as we show next these metrics are invariant for the different valid orders of execution of \BAS.
Therefore, here we use the stricter but simpler semantics that stems from \Cref{def:success}.

Finally, \Cref{lemma:dsem} characterises the minimal attacks resulting from \Cref{def:success}, analogously to how \Cref{lemma:ssem} does it for static tree-structured \ATs.
This is key to prove the correctness of linear-time algorithms that compute metrics on tree-structured \DATs.
We prove this \namecref{lemma:dsem} in \Cref{app:semantics}. We note that a generalization to \DAG-structured \DATs, similar to \Cref{lemma:ssem}, also exists, although its formulation is considerably more complicated \cite{LS2021}; we give the full statement in \Cref{app:semantics}.

\begin{lemma}
	\label{lemma:dsem}
	\def\REF#1{\textit{\ref{#1})}}
	\def\REFS#1#2{\textit{\ref{#1})--\ref{#2})}}
	Consider a tree-structured \DAT with nodes ${a\in\BAS}$, ${v_1,v_2\in\ATnodes}$. Then:
	\begin{enumerate}
	\item	$\dsemin{a} = \{ \poset[\{a\}][\emptyset] \}$;%
			\label{lemma:dsem:BAS}
	\item	$\dsemin{\OR(v_1,v_2)} = \dsemin{v_1} \cup \dsemin{v_2}$;%
			\label{lemma:dsem:OR}
	\item	
			$\dsemin{\AND(v_1,v_2)} = \mbox{$\left\{
				\poset[\attack_1{\cup}\attack_2][{\prec_1}{\cup}{\prec_2}]
				\,|\, \poset[\attack_i][\prec_i]\in\dsemin{v_i}
			\right\}$}$;%
			\label{lemma:dsem:AND}
	\item	$\dsemin{\SAND(v_1,v_2)} = \{
				\poset[\attack_1\cup\attack_2 \,]%
				      [~{\prec_1}\cup{\prec_2}\cup{\attack_1\times\attack_2}]
				\cdots$ \\
				\hspace*{\stretch{1}} $\cdots
					\mid \mbox{$\poset[\attack_i][\prec_i]\in\dsemin{v_i}$}
			\}$;%
			\label{lemma:dsem:SAND}
	\item	In cases \REFS{lemma:dsem:OR}{lemma:dsem:SAND} above the
			\dsemin{v_i} are disjoint, and in cases \REF{lemma:dsem:AND} and
			\REF{lemma:dsem:SAND} moreover the $A_i$ are disjoint.
			\label{lemma:dsem:disjoint}
	\end{enumerate}
\end{lemma}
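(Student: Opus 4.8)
I would follow the pattern of \Cref{lemma:ssem}, but carried out for posets using the partial order $\leqslant$ of \Cref{def:attack}. Two preliminary remarks set the stage: since a fixed support $\attack\subseteq\BAS_{\T}$ can carry only finitely many strict partial orders, $\allAttacks_{\T}$ is finite, so $\dsemin{v}$ is exactly the set of $\leqslant$-minimal elements of $\Suc{v}$ and every element of $\Suc{v}$ dominates some element of $\dsemin{v}$; and the statement, like \Cref{lemma:ssem}, is given for binary gates, the $n$-ary version following by an easy induction that folds the (commutative, associative) binary constructions.

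The plan starts with two auxiliary facts, each proved by structural induction on the subtree rooted at $v$. \emph{Locality}: $\sfunT(v,O)$ depends only on $\restr{O}{\desc_v}$, the restriction of $O$ to the \BAS descendants of $v$ (support intersected with $\desc_v$, order intersected with $\desc_v\times\desc_v$). This is immediate from \Cref{def:success}: each clause there mentions only the structure-function values of the children of $v$ and, in the \SAND case, order pairs among \BAS descendants of consecutive children; since $\desc_u\subseteq\desc_v$ for every child $u$, induction closes the argument. A corollary is that $O\in\dsemin{v}$ implies $O=\restr{O}{\desc_v}$, i.e.\ a minimal attack on $v$ carries only \BASes and order pairs below $v$. \emph{Non-triviality}: every node has a \BAS descendant (the leaves are \BASes), so by induction the empty attack $\poset[\emptyset][\emptyset]$ is successful at no node.

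With this in place, the \BAS case is immediate: $\poset[\{a\}][\emptyset]$ is successful at $a$ and lies $\leqslant$-below every successful attack on $a$. The \OR case follows from $\Suc{\OR(v_1,v_2)}=\Suc{v_1}\cup\Suc{v_2}$ (\Cref{def:success}) together with tree-structuredness, which gives $\desc_{v_1}\cap\desc_{v_2}=\emptyset$: by locality a minimal attack on $v_i$ is supported within $\desc_{v_i}$, so no attack $\leqslant$-below a minimal attack on $v_1$ can be successful at $v_2$ (it would be successful at $v_2$ yet supported within $\desc_{v_1}$, forcing its $\desc_{v_2}$-restriction to be empty, contradicting non-triviality), and symmetrically; hence the $\leqslant$-minimal elements of the union are exactly $\dsemin{v_1}\cup\dsemin{v_2}$. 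For the \AND and \SAND cases I would first observe that, because $\attack_1\subseteq\desc_{v_1}$ and $\attack_2\subseteq\desc_{v_2}$ are disjoint, ${\prec_1}\cup{\prec_2}$ and ${\prec_1}\cup{\prec_2}\cup(\attack_1\times\attack_2)$ are indeed strict partial orders on $\attack_1\cup\attack_2$. Inclusion ``$\supseteq$'' is then checked directly: for $O_i=\poset[\attack_i][\prec_i]\in\dsemin{v_i}$ the assembled attack $O$ restricts on $\desc_{v_i}$ exactly to $O_i$ (inside $\desc_{v_i}\times\desc_{v_i}$ the cross term disappears), so $\sfunT(v_i,O)=\top$ by locality; for \SAND the sequentiality clause of \Cref{def:success} holds since $\attack_1\times\attack_2\subseteq{\prec}$; and $O$ is $\leqslant$-minimal because any successful $O'\leqslant O$ at the gate restricts on each $\desc_{v_i}$ to something $\leqslant O_i$ and successful at $v_i$, hence equal to $O_i$ by minimality of $O_i$ — this pins the support to $\attack_1\cup\attack_2$ and the order to contain ${\prec_1}\cup{\prec_2}$, while for \SAND the sequentiality clause forces the cross pairs into the order as well, so $O'=O$. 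For ``$\subseteq$'' I would bootstrap off ``$\supseteq$'': given $\leqslant$-minimal $O\in\dsemin{\AND(v_1,v_2)}$ (resp.\ $\SAND$), put $O_i\doteq\restr{O}{\desc_{v_i}}$ — successful at $v_i$ by locality — pick $O_i'\in\dsemin{v_i}$ with $O_i'\leqslant O_i$, and assemble the corresponding $O''$; then $O''$ is successful at the gate by ``$\supseteq$'', and $O''\leqslant O$ because the supports and orders of the $O_i'$ lie inside those of $O$ and, for \SAND, the required cross pairs lie in $\attack_1\times\attack_2\subseteq{\prec}$ (as $O$ itself satisfies the sequentiality clause); minimality of $O$ then yields $O=O''$, which has the claimed form. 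Finally, the disjointness claim is a corollary: the supports $\attack_i\subseteq\desc_{v_i}$ are disjoint by tree-structuredness, and a common element of $\dsemin{v_1}\cap\dsemin{v_2}$ would, by locality, be supported in $\desc_{v_1}\cap\desc_{v_2}=\emptyset$, contradicting non-triviality.

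I expect the delicate point to be the ``$\subseteq$'' direction of the \AND and \SAND cases, where one must rule out a minimal attack that carries spurious order pairs — in particular extra cross pairs for a \SAND gate, or any pair linking the two branches of an \AND gate. Leaning on the already-established ``$\supseteq$'' inclusion, together with finiteness of $\allAttacks_{\T}$ and the locality lemma, is what makes this short; the locality lemma is really the linchpin, since it is what permits decomposing a gate attack into independent sub-attacks on the children.
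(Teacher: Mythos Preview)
Your proposal is correct and the underlying ideas---minimality arguments, disjointness of $\desc_{v_1}$ and $\desc_{v_2}$ in the tree case, and restriction of attacks to $\desc_v$---are exactly those the paper uses. The only organizational difference is that the paper first proves a DAG-level auxiliary lemma (\Cref{lemma:dsem:app}) with one-sided inclusions phrased via transitive-closure operations $P$ and $S_{X,Y}$, and then derives \Cref{lemma:dsem} by observing that in the tree case these operations collapse to the explicit unions you write down and the inclusions become equalities; you instead argue the tree case directly, isolating ``locality'' and ``non-triviality'' as named auxiliary facts. Your route is slightly more self-contained for the statement at hand, while the paper's detour buys the DAG generalization essentially for free; neither approach requires an idea the other lacks.
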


\paragraph{Comparison with literature}
\label{par:comparison_series_parallel_graphs}
The semantics for dynamic \ATs resulting from \Cref{def:DAT:semantics} resembles the so-called \emph{series-parallel graphs} from \cite{JKM+15}.
We adhere to \cite{BS21,LS2021} and define dynamic attacks as posets for a number of reasons:
\begin{itemize}[label=\textbullet,leftmargin=1em]
\item	they are a succinct, natural lifting of the \SAT concepts,
		that facilitate the extension of earlier results such as the
		characterisation of \dsemin{\cdot} in \Cref{lemma:dsem};
\item	metrics can be formally defined on this semantics,
		decoupling specific algorithms from a notion of correctness;
\item	this allows us to define algorithms to compute metrics
		regardless of the tree- or \DAG-structure of the \DAT.
\end{itemize}
The latter is different for \cite{JKM+15}, which does not work for \DAG-structured \DATs as noted in \cite{KW18}.
In the series-parallel graph semantics, \BASes will occur multiple times when they have multiple parents. When calculating metrics, this leads to double counting as in \Cref{ex:bottom_up_DAG}.
%
%
In contrast, posets entail a formal definition of metric over \DAT semantics---given now in \Cref{sec:DAT:metrics}---which in particular yields the expected result even for \DAG-structured \DATs.

\subsection{Security metrics for dynamic attack trees}
\label{sec:DAT:metrics}

The same fundamental concepts of metric for static \ATs work for dynamic \ATs: from the attributes of every \BAS, obtain a metric for each attack in \dsemin{\T}, and from these values compute the metric for \T.
Thus, the generic notion of metric given by \Cref{def:metric} (\Cref{sec:SAT:metrics}) carries on to this \namecref{sec:DAT:metrics}.

However, attribute domains do not suffice for \DATs: metrics such as min attack time are sensitive to sequential execution of \BASs.
This calls for an additional \emph{sequential operator ${\operSAND \from \Vdom^2 \to \Vdom}$}, to compute values of sequential parts in an attack.
Therefore, metrics computation has an extra~step:
\begin{enumerate}
\setcounter{enumi}{-1}
\item	first, an attribution $\attrOp$ assigns a value to each \BAS;
\item	then, a sequential metric $\metrSOp$ uses the operator $\operSAND$ to assign a value to each sequential part of an attack;
\item	then, a parallel metric $\metrAOp$ uses $\operAND$ to assign a value to each attack, as the parallel execution of its sequential parts;
\item	finally, the metric $\metrOp$ uses $\operOR$ to assign a value to the whole attack suite, considering all its parallel attacks.
\end{enumerate}

To formalise this operational intuition consider an attack ${O = \poset}$: a \emph{maximal chain} is a sequence $(a_1,\ldots,a_n)$ in \attack s.t.\ $a_1$ is minimal under $\prec$, $a_n$ is maximal, and $a_{i+1}$ is a direct successor of $a_i$ for each $i < n$.
The set of maximal chains in $O$ is denoted $\MC_O$.
Thus, the 4-steps computation described above can be reinterpreted as follows:
\begin{enumerate}
\setcounter{enumi}{0}  
\item	$\metrSOp$ uses $\operSAND$ on each maximal chain,
		yielding one value $s_C\in\Vdom$ for each $C \in \MC_O$;
\item	$\metrAOp$ uses $\operAND$ on the values $\{s_C\}_{C \in \MC_O}$,
		yielding a metric for the attack $O$;
\item	$\metrOp$ uses $\operOR$ on the metrics of all attacks in a suite $\suite$,
		yielding the metric for $\suite$.
\end{enumerate}

We use these concepts to define the metric \metr{\dsemin{\T}} of a dynamic \AT \T, which we denote \metr{\T} in order to map \Cref{def:DAT:metric} to the generic notion of metric given in \Cref{def:metric}.

\begin{definition}
	\label{def:DAT:metric}
	\def\VH{\vphantom{\bigoperAND_{\Hasse}}}
	Let $\operOR,\operAND,\operSAND$ be three associative and commutative
	operators over a set \Vdom:
	we call $\domain=(\Vdom,\operOR,\operAND,\operSAND)$
	a \emph{dynamic attribute domain}.
	Let \T be a satisfiable dynamic \AT and $\attrOp$ an attribution on \Vdom. Let $\suite$ be an attack suite on $\T$.
 	The \emph{metric for \suite} associated to \domain and $\attrOp$ is given by:
	\begin{align} \label{eq:metrdat}
		\metr{\suite} &=
			\underbrace{\VH\bigoperOR_{O\in\suite}}_{\mathlarger\metrOp}
			\underbrace{\VH\;\bigoperAND_{C \in \MC_O}\;}_{\mathlarger\metrAOp}
			\underbrace{\VH~\bigoperSAND_{a\in\ccomp}~}_{\mathlarger\metrSOp}
			\attr{a}
	\end{align}
	where $\MC_O$ is the set of maximal chains in the poset $O$. The metric for $\T$ is defined as $\metr{\T} \doteq \metr{\dsemin{\T}}$.
\end{definition}

As for \SATs, metrics are defined in terms of the minimal attacks, rather than all successful attacks.
This causes a slight mismatch between the definition of metrics of \DATs (based on minimal attacks) and their semantics (based on succesful attacks).
We have chosen to define \DAT metrics as in \Cref{def:DAT:metric} for consistency with \SATs: in particular, when \SATs are interpreted as \DATs without \SAND-gates, metrics such as max damage have the same value as \SAT-metric and as \DAT-metric.
Note that when a \DAT metric satisfies a suitable absorption axiom, similar to \Cref{sec:SAT_trees:metrics}, it does not matter whether the metric is calculated from $\dsem{T}$ or $\dsemin{T}$.

\begin{example}
	\label{ex:DAT:metric}
	\def\wrap#1{\Bigg(\raisebox{1ex}{$\displaystyle #1$}\Bigg)}
	The minimal attacks for the dynamic \AT from \Cref{ex:running_examples} are
	$\dsemin{\sampleTd}= {\{O_1,O_2\}} = {\{
	\poset[\{\ff,\ww\}][\emptyset]\,,
	\poset[\{\ww,\cc\}][\{\ww\prec\cc\}]\}}$.
	The Hasse diagrams of these attacks---which resp.\ have one and two
	$\MC$s---are shown in \Cref{fig:Hasse:fw,fig:Hasse:wc}.
	To compute the \emph{min time} metric of \sampleTd consider the attribution
	${\attrOp}=\{{\ff\mapsto3}, {\ww\mapsto15}, {\cc\mapsto1}\}$
	and the dynamic attribute domain $\domain=(\NN,\min,\max,+)$.
	Then the time of the fastest attack for \domain and $\attrOp$ is:
	\begin{align*}
	\metr{\sampleTd}
		&= \bigoperOR_{O\in\dsemin{\T}}
		   \bigoperAND_{~\ccomp\in\MC_O~}
		   \bigoperSAND_{a\in\ccomp}~\attr{a}\\
		&= \wrap{\bigoperAND_{\ccomp\in\MC_{O_1}}
		   \bigoperSAND_{a\in\ccomp}~\raisebox{-.2ex}{$\attr{a}$}}
		   \mathbin{\text{\raisebox{.3ex}{$\operOR$}}}
		   \wrap{\bigoperAND_{\ccomp\in\MC_{O_2}}
		   \bigoperSAND_{a\in\ccomp}~\raisebox{-.2ex}{$\attr{a}$}}\\
		&= \big(\attr{\ff}\operAND\attr{\ww}\big)
		   \operOR
		   \big(\attr{\ww}\operSAND\attr{\cc}\big)\\
		&= \min(\max(3,15),15+1) ~=~ 15.
	\end{align*}
	Note that attack $O_1=\poset[\{\ff,\ww\}][\emptyset]$
	has two parallel steps: two
	maximal chains with one node each---see \Cref{fig:Hasse:fw}---so
	operator $\operAND$ has two operands with one node each:
	$\ccomp=\{\ff\}$ and $\ccomp'=\{\ww\}$.
	In contrast, $O_2=\poset[\{\ww,\cc\}][\{\ww\prec\cc\}]$ has one
	maximal chain with two nodes, so operator $\operAND$ has one operand
	but $\operSAND$ has two: \attr{\ww} and \attr{\cc}.
	Finally, the \emph{min time} of \sampleTd is the ${\operOR}=\min$ of these
	two metrics: the one for~$O_1$.
	\\ \vphantom{\raisebox{.3ex}{$\Big)$}}%
	Now consider the \emph{min skill} metric with the attributes
	${\attrOp'}=\{{\ff\mapsto42},{\ww\mapsto10},{\cc\mapsto0}\}\!$.
	This metric is oblivious of sequential order:
	the skill needed to perform a task is independent of whether
	it must wait for the completion of other tasks.
	So, to compute the min skill metric of \sampleTd we use the
	dynamic attribute domain $\domain'=(\NN,\min,\max,\max)$, where the
	operators $\operAND$ and $\operSAND$ are the same.
	This results in:
	\begin{align*}
	\metrOp'(\sampleTd)
		&= \big(\attrOp'(\ff)\operAND'\attrOp'(\ww)\big)
		   \operOR'
		   \big(\attrOp'(\ww)\operSAND\!'\attrOp'(\cc)\big)\\
		&= \min(\max(42,10),\max(10,0)) ~=~ 10.
	\end{align*}
\end{example}

Note that \emph{the order of execution} of the \BASes in the $\MC$ of an attack \emph{is irrelevant for the value of a metric}.
This is a consequence of the commutativity of the $\operSAND$ operator.

As for \SATs, in order to be able to actually compute metrics, we need additional structure on $(V,\operOR,\operAND,\operSAND)$.
More precisely, for a bottom-up algorithm to work on tree-structured \DATs (see \Cref{theo:bottom_up_DAT}), we again need distributivity:

\begin{definition}
	\label{def:semiring_dynamic_attribute_domain_OMG_howlong_isthislabel}
	A \emph{semiring dynamic attribute domain} is a dynamic attribute domain
	$\domain=(\Vdom,\operOR,\operAND,\operSAND)$ where
	operator $\operSAND$ distributes over $\operAND$ and $\operOR$,
	and also $\operAND$ distributes over~$\operOR$.
	%
	%
\end{definition}

Note that min time and min skill---used in \Cref{ex:DAT:metric} above---are both semiring dynamic attribute domains.

\paragraph{Relation to SAT metrics}
Many metrics are like min skill in \Cref{ex:DAT:metric}: insensitive to the sequentiality of events. We can calculate these metrics for a \DAT \T by changing all \SAND-gates into \AND-gates, and applying our theory on (static) attribute domains. Alternatively, if $(\Vdom,\operOR,\operAND)$ is a semiring attribute domain with idempotent $\operAND$ (such as min skill), then the metric can be calculated via \Cref{def:DAT:metric} for the dynamic attribute domain $(\Vdom,\operOR,\operAND,\operAND)$.

On the other hand, any \SAT \T is also a \DAT without \SAND-gates: we disambiguate by writing \T[d] for the \DAT interpretation.
Note then that $\attack \in \ssem{\T}$ iff $\poset[A][\varnothing] \in \dsemin{\T[d]}$.
Let then $\attrOp$ be an attribution on a semiring attribute domain $\domain = (\Vdom,\operAND,\operOR)$.
One can extend \domain into a semiring \emph{dynamic} attribute domain $(\tilde{\Vdom},\tilde{\operOR},\tilde{\operAND},\tilde{\operSAND})$ with $\Vdom \subseteq \tilde{\Vdom}$, s.t.\ ${\metr{\T} = \metr{\T[d]}}$; for details see \Cref{app:semantics}.
As a consequence, results proved for metrics on \DATs also hold for \SATs.




\begin{figure}
  \centering
  \def\WIDTH{.16\linewidth}
  \begin{subfigure}{\WIDTH}
	\centering
	\includegraphics[width=\linewidth]{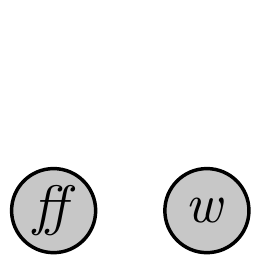}
	\caption{\Hasse[O_1][]}
	\label{fig:Hasse:wc}
  \end{subfigure}
  \qquad
  \begin{subfigure}{\WIDTH}
	\centering
	\includegraphics[height=1.02\linewidth]{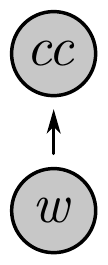}
	\caption{\Hasse[O_2][]}
	\label{fig:Hasse:fw}
  \end{subfigure}
  \qquad~
  \begin{subfigure}{\WIDTH}
	\centering
	\includegraphics[width=\linewidth]{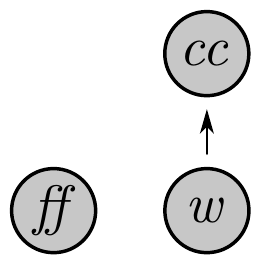}
	\caption{\Hasse[O_3][]}
	\label{fig:Hasse:fwc}
  \end{subfigure}
  \caption{Hasse diagrams of attacks of \sampleTd:\
	$O_1=\poset[\{\ff,\ww\}][\emptyset]$,\, \
	$O_2=\poset[\{\ww,\cc\}][\{(\ww,\cc)\}]$,\, \
	$O_3=\poset[\{\ff,\ww,\cc\}][\{(\ww,\cc)\}]$.}
  \label{fig:Hasse}
\end{figure}


\section{Computations for tree-structured DATs}
\label{sec:DAT_trees}

Earlier in \Cref{ex:DAT:metric}, the computation of metrics for dynamic \ATs was illustrated using \Cref{def:DAT:metric}, which is worst-case exponential in the number of nodes.
However and as for \SATs, there is a bottom-up algorithm to compute metrics for tree-structured \DATs, that is linear in the number of nodes of the attack tree.
We present a recursive version in \Cref{alg:bottom_up_DAT} (\cpageref{alg:bottom_up_DAT}), and state its correctness in \Cref{theo:bottom_up_DAT}.

The proof of \Cref{theo:bottom_up_DAT} (in \Cref{app:metrics}) relies on \Cref{thm:mod}, whose proof in turn uses the distributivity of operator $\operSAND$ over $\operOR$ and $\operAND$.
Thus the fact that $(\Vdom,\operOR,\operAND,\operSAND)$ is a \emph{semiring} dynamic attribute domain is crucial.


\begin{algorithm}
	\KwIn{Dynamic attack tree $\T=(\ATnodes,\typOp,\chOp)$,\newline
	      node $v\in\ATnodes$,\newline
	      attribution $\attrOp$,\newline
	      semiring dynamic attr.\ dom.\ ${\domain=(\Vdom,\operOR,\operAND,\operSAND)}$.\!\!}
	\KwOut{Metric value $\metr{\T}\in\Vdom$.}
	\BlankLine
	\uIf{$\type{v}=\tOR$}{%
		\Return{$\bigoperOR_{u\in\child{v}}
		         \BUDAT(\T,u,\attrOp,\domain)$}
	} \uElseIf{$\type{v}=\tAND$}{%
		\Return{$\bigoperAND_{u\in\child{v}}
		         \BUDAT(\T,u,\attrOp,\domain)$}
	} \uElseIf{$\type{v}=\tSAND$}{%
		\Return{$\text{\raisebox{.3ex}{%
		        $\bigoperSAND_{u\in\child{v}}
		         \BUDAT(\T,u,\attrOp,\domain)$}}$}
	} \Else(\tcp*[h]{$\type{v}=\tBAS$}) {%
		\Return{\attr{v}}
	}
	\caption{\BUDAT for a tree-structured \DAT \T}
	\label{alg:bottom_up_DAT}
\end{algorithm}

\begin{theorem}
	\label{theo:bottom_up_DAT}
	Let \T be a dynamic \AT with tree structure,
	$\attrOp$ an attribution on \Vdom,
	and $\domain=(\Vdom,\operOR,\operAND,\operSAND)$ a
	semiring dynamic attribute domain.
	Then $\metr{\T} = \BUDAT(\T,\ATroot,\attrOp,\domain)$.
\end{theorem}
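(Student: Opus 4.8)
The plan is to prove the statement by structural induction on the tree-structured \DAT \T, pushing the nested operators of \Cref{def:DAT:metric} through the recursive description of $\dsemin{\cdot}$ supplied by \Cref{lemma:dsem}. Since $\operOR,\operAND,\operSAND$ are associative and commutative it suffices to treat the binary gates of \Cref{lemma:dsem}; arbitrary arities follow by iterating, the arbitrary-arity analogue of \Cref{lemma:dsem} being proved in the same way. For a node $v$ I extend the metric notation by writing $\metr{v}\doteq\metr{\dsemin{v}}$; because a subtree of a tree-structured \DAT is again a tree-structured \DAT and $\sfunT(v,O)$ depends only on the descendants of $v$, the induction hypothesis gives $\BUDAT(\T,u,\attrOp,\domain)=\metr{u}$ for every proper descendant $u$ of $v$. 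For the base case $\type{v}=\tBAS$, \Cref{lemma:dsem}\,\textit{\ref{lemma:dsem:BAS})} gives $\dsemin{v}=\{\poset[\{v\}][\emptyset]\}$, whose single maximal chain is $(v)$, so $\metr{v}=\attr{v}$, which is the output of \Cref{alg:bottom_up_DAT}.

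For the inductive step I would dispatch the gate types in increasing difficulty. For $v=\OR(v_1,v_2)$, \Cref{lemma:dsem}\,\textit{\ref{lemma:dsem:OR})} and \textit{\ref{lemma:dsem:disjoint})} identify $\dsemin{v}$ with the \emph{disjoint} union $\dsemin{v_1}\cup\dsemin{v_2}$, so associativity and commutativity of $\operOR$ split $\bigoperOR_{O\in\dsemin{v}}$ into $\metr{v_1}\operOR\metr{v_2}$. For $v=\AND(v_1,v_2)$, \Cref{lemma:dsem}\,\textit{\ref{lemma:dsem:AND})} and \textit{\ref{lemma:dsem:disjoint})} write $\dsemin{v}$ as the image of the map $(O_1,O_2)\mapsto O$ that places $O_1$ and $O_2$ side by side with no new order relations; this map is a bijection $\dsemin{v_1}\times\dsemin{v_2}\to\dsemin{v}$ because disjointness of the underlying \BAS sets (equivalently, of $\desc_{v_1}$ and $\desc_{v_2}$) lets one recover $O_i$ from $O$ by restriction. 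Since no relation links $A_1$ to $A_2$, every maximal chain of $O$ lies entirely in $A_1$ or entirely in $A_2$ and is maximal there, so $\MC_O=\MC_{O_1}\sqcup\MC_{O_2}$, whence $\metrA{O}=\metrA{O_1}\operAND\metrA{O_2}$; re-indexing $\bigoperOR_{O\in\dsemin{v}}$ over pairs and applying distributivity of $\operAND$ over $\operOR$ gives $\metr{v}=\metr{v_1}\operAND\metr{v_2}$. In both cases this matches the corresponding branch of \Cref{alg:bottom_up_DAT} via the induction hypothesis.

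The \SAND case is the heart of the argument. By \Cref{lemma:dsem}\,\textit{\ref{lemma:dsem:SAND})}, \textit{\ref{lemma:dsem:disjoint})}, $\dsemin{v}$ is again a bijective image of $\dsemin{v_1}\times\dsemin{v_2}$ (injectivity by the same restriction argument), but now $O$ glues $O_1,O_2$ by adjoining $A_1\times A_2$, so every step of $A_1$ precedes every step of $A_2$. The key combinatorial fact I would establish is that the maximal chains of $O$ are precisely the concatenations of a maximal chain of $O_1$ with a maximal chain of $O_2$, i.e.\ $\MC_O$ is in bijection with the \emph{product} $\MC_{O_1}\times\MC_{O_2}$ --- this is exactly where \SAND departs from \AND. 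Checking it amounts to verifying that $\prec$-minimality and $\prec$-maximality of the endpoints and the direct-successor relation are all preserved across the gluing boundary, using irreflexivity and transitivity of the order and the fact that minimal attacks are non-empty (so neither half is empty). Granting this, $\metrS{C}=\metrS{C_1}\operSAND\metrS{C_2}$ for $C$ the concatenation of $C_i\in\MC_{O_i}$, hence $\metrA{O}=\bigoperAND_{(C_1,C_2)}(\metrS{C_1}\operSAND\metrS{C_2})=\metrA{O_1}\operSAND\metrA{O_2}$, using that $\operSAND$ distributes over $\operAND$ and, being commutative, does so from either side. Finally, re-indexing $\bigoperOR_{O\in\dsemin{v}}$ over pairs and using distributivity of $\operSAND$ over $\operOR$ yields $\metr{v}=\metr{v_1}\operSAND\metr{v_2}$, the value \Cref{alg:bottom_up_DAT} returns on a \SAND node; taking $v=\ATroot$ then closes the induction, since $\metr{\ATroot}=\metr{\dsemin{\T}}=\metr{\T}$.

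I expect the main obstacle to be precisely this chain-decomposition analysis for \SAND: establishing that maximal chains \emph{multiply} across a sequential composition while they merely \emph{add} across a parallel one, and pinning down exactly where tree-structure enters --- namely the disjointness of $\desc_{v_1}$ and $\desc_{v_2}$ from \Cref{lemma:dsem}\,\textit{\ref{lemma:dsem:disjoint})}, which is what makes the gluing maps injective and hence lets the outer $\bigoperOR$ be re-indexed over $\dsemin{v_1}\times\dsemin{v_2}$ so that the semiring distributivity axioms can be applied. For \DAG-structured \DATs this injectivity fails, consistent with the theorem's hypothesis.
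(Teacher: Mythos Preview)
Your argument is correct and complete: the chain-decomposition facts you isolate (maximal chains add under \AND, multiply under \SAND) are exactly what is needed, and the distributivity axioms of the semiring dynamic attribute domain do the rest. The bijection $\dsemin{v_1}\times\dsemin{v_2}\to\dsemin{v}$ and the nonemptiness of minimal attacks are both handled correctly.

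The paper takes a different and less direct route. Rather than unfolding \Cref{lemma:dsem} at each gate and analysing maximal chains case by case, it first proves the modular-analysis theorem (\Cref{thm:mod}) in full generality for \DAG-structured \DATs, and then observes that in a tree every inner node is a module. The inductive step for $v$ with children $v_1,v_2$ then collapses each $\T_{v_i}$ to a single fresh \BAS $\tilde v_i$ with attribute $\metr{\T_{v_i}}$, and appeals to a tiny lemma computing $\metr{\OR(a,b)}$, $\metr{\AND(a,b)}$, $\metr{\SAND(a,b)}$ directly from \Cref{def:DAT:metric}. All of the chain bookkeeping you carry out explicitly is instead packaged once and for all inside the proof of \Cref{thm:mod}, via an ``insertion'' construction $O_1[x_1/O_2]$ and a lemma describing its maximal chains. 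What the paper gains is reuse: once \Cref{thm:mod} is in hand, \Cref{theo:bottom_up_DAT} is essentially a corollary, and the same theorem also powers \Cref{alg:satmod} for general \DAGs. What your approach gains is self-containment: you need nothing beyond \Cref{lemma:dsem} and the semiring axioms, and the role of tree-structure (disjointness of $\desc_{v_1}$ and $\desc_{v_2}$) is made completely transparent at each step.
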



\section{Computations for DAG-structured DATs}
\label{sec:DAT_DAGs}

\Cref{alg:bottom_up_DAT} does not work for dynamic \ATs with a \DAG-structure, for the same reasons exposed for \SATs in \Cref{sec:SAT_DAGs}.
Neither is it possible to propose algorithms based on standard \BDD theory, because the computation of metrics for \DATs needs a notion of order among their \BASs, that is not present in standard \BDD-based data types.

As discussed in \Cref{sec:intro}, some general approaches do exist to compute metrics on \DAG-structured \DATs \cite{KRS15,AGKS15}.
However, these often overshoot in terms of computation complexity.
For \SATs and from a procedural (rather than semantic) angle, \cite{KW18} proposes a more efficient, ingenious approach that computes and then corrects a metric value by traversing the \AT bottom-up repeatedly.
It may be possible to extend this algorithm to cover \SAND gates as well~\cite{WAFP19}.

Alternatively, \Cref{def:DAT:metric} of \DAT metric could be encoded into a na\"ive algorithm.
This would enumerate all posets from \dsemin{\T}, and compute the value \metr{\T} using three nested loops to traverse the corresponding Hasse diagrams.
We do not expect such approach to be computationally efficient.

Instead and as in the static case, we expect that \BDD encodings of the \DAT offer better solutions.
This requires \BDD-like structures also sensitive to variable orderings.
In that sense, the so-called sequential-\BDDs recently presented for dynamic fault trees seem promising \cite{YW20}.
A first challenge would be to extend them to attributes other than failure (viz.\ attack) probability.
Another---harder---challenge to apply this approach efficiently is the combinatorial explosion, that stems for the different possible orderings of \BAS descendants of \SAND gates.

In view of these considerations, we regard the algorithmic analysis for \DAG-structured dynamic attack trees as an important open problem for future research.
Instead, we now discuss modular analysis: a simplification strategy that can be used in any algorithm that calculates metrics.

\section{Modular analysis}
\label{sec:mod}

\newcommand{\metrAlgo}[1][A]{\ensuremath{\mathscr{#1}}\xspace}


In this \namecref{sec:mod} we show that the calculation of metrics can be split up according to the modules of an \AT.
The resulting \emph{modular analysis} is a well-established method for quantitative analysis of fault trees and \ATs \cite{DR96,reay2002fault,yevkin2011improved,LS2021}. We exploit modular analysis in the general semiring (dynamic) attribute domain setting, leading to improved performance in calculating these metrics.

For $v \in \ATnodes$ we let \T[v] be the sub\DAG of \T consisting of all descendants of $v$, with $v$ as the root.
Intuitively, a module is an inner node $v$ such that all paths from $\T \setminus \T[v]$ to \T[v] pass through $v$.
This is formalised in the following definition.

\begin{definition}
Let $v \in \ATnodes \setminus \BAS$. We call node $v$ a \emph{module} if $\T[v] \cap \T[w] \in \{\T[v],\T[w],\emptyset\}$ for all $w \in \ATnodes$.
\end{definition}

Note that the root of \T is always a module.
The modules of an \AT \T can be found in linear time \cite{DR96}.
These modules aid calculation in the following manner: Let $v$ be a module, then $v$ is the only node within \T[v] with parents outside of \T[v].
This means that we can create a tree $\T^v$ by replacing \T[v] within \T by a new single \BAS $\tilde{v}$.
Then the parents of $\tilde{v}$ in $\T^v$ are the parents of $v$ in \T, see \Cref{fig:modular}.
This allows to calculate a metric \metr{\T} by first calculating the metric on $\T[v]$, and then on $\T^v$.
This is formalized in \Cref{thm:mod}.

\medskip\noindent%
\begin{minipage}{.55\linewidth}
	\centering
	\includegraphics[width=.9\linewidth]{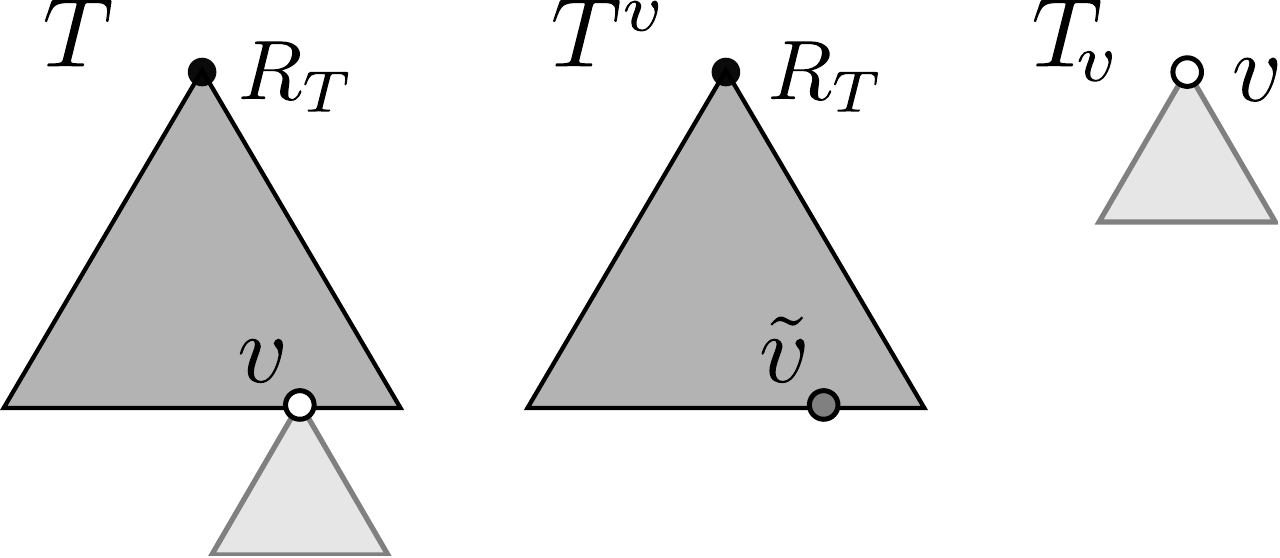}
\end{minipage}
\hfill
\begin{minipage}{.4\linewidth}
	\vspace{1ex}
	\captionof{figure}{To compute \metr{\T} first compute the metric
		\metr{\T[v]} for module $v$, then compute $\metrOp^v(\T^v)$
		as in \Cref{thm:mod}.}
	\label{fig:modular}
\end{minipage}
\medskip

\begin{theorem}
	\label{thm:mod}
	\def\ATedges{\mathobject{E}}
	Let $v$ be a module in an \AT \T, and $\attrOp$ an attribution into
	a (dynamic) attribute domain \Vdom.
	Let $\T^v = (\ATnodes^v,\ATedges^v)$ be the \AT obtained by replacing \T[v]
	by a new single \BAS $\tilde{v}$. Let $\attrOp^v\from \ATnodes^v \to \Vdom$
	be an attribution for $\T^v$ given by
	\begin{equation*}
		\attrOp^v(v') =
		\begin{cases}
			\attrOp(v')  & \text{if}~ v' \neq v,\\
			\metr{\T[v]} & \text{if}~ v' = \tilde{v}.
		\end{cases}
	\end{equation*}
	Then $\metr{\T} = \metrOp^v(\T^v)$.
\end{theorem}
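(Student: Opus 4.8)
The plan is to reduce the claim, at the level of minimal attack suites, to a reorganisation of the defining metric expression via distributivity. Write $B := \BAS_{\T[v]}$ for the basic steps inside the module and $B' := \BAS_{\T}\setminus B$, so that $\BAS_{\T^v} = B'\cup\{\tilde v\}$, and note that $\T^v$ is literally $\T$ with the subtree below $v$ collapsed to the single leaf $\tilde v$. The key structural observation is a \emph{separation lemma}: since $v$ is a module, $v$ is the only node of $\T[v]$ having a parent outside $\T[v]$, hence no child of a node $u\notin\T[v]$ is a proper descendant of $v$, and a straightforward induction over the sub-DAG rooted at $\ATroot$ shows that for every node $u$ that is not a proper descendant of $v$, the value $\sfunT(u,\attack)$ depends on $\attack$ only through the pair $\bigl(\attack\cap B',\ \sfun_{\T[v]}(v,\attack\cap B)\bigr)$. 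Reading this at $u=\ATroot$, and using that the $\sfun$-recursion at any node not properly below $v$ is identical in $\T$ and in $\T^v$ except that $\sfun_{\T[v]}(v,\attack\cap B)$ is replaced by the bit $[\tilde v\in\attack^v]$, we get for $\attack^v := (\attack\cap B')\cup\{\tilde v\mid\sfun_{\T[v]}(v,\attack\cap B)=\top\}$ that $\attack$ is successful on $\T$ iff $\attack^v$ is successful on $\T^v$.

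This is then lifted to minimal attacks. For $\attack\in\dsem{\T}$ put $b:=\sfun_{\T[v]}(v,\attack\cap B)$; coherence of the static sub-tree $\T[v]$ gives that $b=\bot$ forces $\attack\cap B=\emptyset$ (each step of $B$ would be removable without affecting any node outside $\T[v]$), while $b=\top$ forces $\attack\cap B\in\dsem{v}$. Since $b=\top\iff\tilde v\in\attack^v$, the usual ``no removable step'' argument yields two bijections: $\attack\mapsto\attack$ identifies $\{\attack\in\dsem{\T}:b=\bot\}$ with $\{\attack^v\in\dsem{\T^v}:\tilde v\notin\attack^v\}$, and $\attack\mapsto(\attack^v,\ \attack\cap B)$ identifies $\{\attack\in\dsem{\T}:b=\top\}$ with $\{\attack^v\in\dsem{\T^v}:\tilde v\in\attack^v\}\times\dsem{v}$. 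With these in hand the computation is short: writing $w(\attack):=\bigoperAND_{a\in\attack}\attr{a}$, split $\metr{\T^v}=\bigoperOR_{\attack^v\in\dsem{\T^v}}\bigoperAND_{a\in\attack^v}\attrOp^v(a)$ according to whether $\tilde v\in\attack^v$; in the $\tilde v\in\attack^v$ block substitute $\attrOp^v(\tilde v)=\metr{\T[v]}=\bigoperOR_{\attack_1\in\dsem{v}}w(\attack_1)$, pull the inner $\operOR$ out using distributivity of $\operAND$ over $\operOR$, and observe that $\attack^v\setminus\{\tilde v\}\subseteq B'$ and $\attack_1\subseteq B$ are disjoint, so $w(\attack^v\setminus\{\tilde v\})\operAND w(\attack_1)=w\bigl((\attack^v\setminus\{\tilde v\})\cup\attack_1\bigr)$. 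The two bijections now identify the result with $\bigoperOR_{\attack\in\dsem{\T}}w(\attack)=\metr{\T}$, which settles the static case.

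For the dynamic case the same three steps apply, assuming as usual that $\T$ --- and hence $\T[v]$ and $\T^v$ --- are satisfiable. Attacks are now posets, and in the separation lemma one additionally notes that any $\tSAND$-ancestor of $v$ orders the active steps of $B$ relative to those of its sibling branches \emph{uniformly}, because all of $B$ sits below $v$; the invariant therefore reads that $\sfunT(u,O)$ depends on $O$ only through its restriction to $B'$ together with its restriction to $B$ (with its order), as already handled in the DAG-level analogue of \Cref{lemma:dsem} in \cite{LS2021}. The work concentrates in the metric computation: substituting a minimal poset-attack $O_1\in\dsemin{v}$ for the vertex $\tilde v$ in the Hasse diagram of a minimal attack $O^v$ of $\T^v$ ``blows up'' $\tilde v$ into the Hasse diagram of $O_1$ (predecessors of $\tilde v$ become predecessors of the $\prec$-minimal elements of $O_1$, successors become successors of the $\prec$-maximal ones), so that every maximal chain of the resulting attack $O$ is either a maximal chain of $O^v$ avoiding $\tilde v$, or obtained from a maximal chain of $O^v$ through $\tilde v$ by splicing in a maximal chain of $O_1$. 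Feeding this decomposition of $\MC_O$ into $\bigoperAND_{\ccomp\in\MC_O}\bigoperSAND_{a\in\ccomp}\attr{a}$ and invoking distributivity of $\operSAND$ over $\operAND$ and $\operOR$, and of $\operAND$ over $\operOR$, reproduces the static computation with $\metr{\T[v]}=\bigoperOR_{O_1\in\dsemin{v}}\bigoperAND_{\ccomp\in\MC_{O_1}}\bigoperSAND_{a\in\ccomp}\attr{a}$ substituted for $\attrOp^v(\tilde v)$. I expect this maximal-chain ``blow-up'' bookkeeping --- not the separation lemma or the minimal-attack bijections --- to be the main obstacle.
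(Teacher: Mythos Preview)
Your three-step plan matches the paper's proof: a bijection between $\dsemin{\T}$ and the disjoint union of $\{O^v\in\dsemin{\T^v}:\tilde v\notin O^v\}$ with $\{O^v\in\dsemin{\T^v}:\tilde v\in O^v\}\times\dsemin{\T[v]}$, your ``blow-up'' is exactly the paper's \emph{insertion} $O^v[\tilde v/O_v]$ with the same maximal-chain decomposition, and the closing distributivity manipulation is identical. The paper additionally absorbs the static case into the dynamic one via an embedding of semiring attribute domains into semiring dynamic attribute domains, rather than arguing it separately as you do.

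One statement needs tightening. In the dynamic case your separation invariant --- that $\sfunT(u,O)$ depends on $O$ only through the restrictions $O|_{B}$ and $O|_{B'}$ --- is false as written: take $\T=\SAND(v,w)$ and two attacks that agree on both restrictions but differ in whether the $B$-block precedes or follows the $B'$-block; success flips. What is true, and what your blow-up construction actually relies on, is that in any \emph{minimal} successful $O$ the cross-relations between $B$ and $B'$ are \emph{uniform}: $a\prec a'$ for some $a\in O\cap B$, $a'\in O\cap B'$ iff it holds for every $a\in O\cap B$ (and symmetrically). The paper makes this precise via a retraction $r$ on $\allAttacks_\T$ that strips non-uniform $B$--$B'$ relations; one checks $r(O)\leqslant O$ and that $r(O)$ is still successful (because $v$ being a module means every \SAND-constraint involving $B$ treats all of $B$ as a block), whence $r(O)=O$ for minimal $O$. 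With that in hand the ``collapsed'' poset $O^v$ is well-defined from $O$, and your bijection and maximal-chain lemma go through exactly as you describe.
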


\Cref{thm:mod} allows us to split up metric computation by splitting up the attack tree \T.
This can be used to par\-al\-lel\-ise---at least partially---any algorithm \metrAlgo that calculates a metric.
More generally, when the time complexity of \metrAlgo is high, e.g.\ exponential in the number of nodes of \T, then splitting the calculation in the modules of \T will result in a lower computation time.
This result is non-trivial, as it requires to express the minimal attacks of \T (and their maximal chains) in terms of the minimal attacks of $\T^v$ and \T[v]: this is encoded in \Cref{thm:mod} (proved in \Cref{app:modular}).

This result can be implemented by identifying all modules $v$ of \T as in \cite{DR96}, and then calculating $\metrOp(\T[v])$ for them using any algorithm \metrAlgo.
By doing this bottom-up, we can use the result for lower modules in the calculation of higher ones.
The resulting method is presented as \Cref{alg:satmod}; its correctness is stated in the following \namecref{coro:satmod_correct} of \Cref{thm:mod}:

\begin{corollary}
	\label{coro:satmod_correct}
	\Cref{alg:satmod} correctly calculates $\metrOp(\T)$.
\end{corollary}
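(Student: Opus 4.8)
The plan is to derive the corollary from \Cref{thm:mod} by a bottom-up induction over the module structure of \T. First I would make precise the order in which \Cref{alg:satmod} treats modules. It follows from the defining condition $\T[v]\cap\T[w]\in\{\T[v],\T[w],\emptyset\}$ that the family $\{\T[v] : v\text{ a module}\}$ is laminar, hence forms a forest under containment whose root is \ATroot (which is always a module). \Cref{alg:satmod} visits modules from the leaves of this forest upward: on reaching a module $v$ it has already stored, for each \emph{maximal} proper submodule $w$ of $v$, a value $m_w$; it then forms the attack tree $S_v$ obtained from \T[v] by collapsing each such $\T[w]$ to a fresh \BAS $\tilde w$, runs the base algorithm \metrAlgo on $S_v$ with the attribution sending $\tilde w\mapsto m_w$ and every other \BAS $a\mapsto\attr{a}$, and stores the result as $m_v$; the final output is $m_{\ATroot}$.

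The induction hypothesis is that $m_w=\metr{\T[w]}$ for every module $w$ visited before $v$. In the base case, a module $v$ whose proper submodules are all BASes has $S_v=\T[v]$, so \metrAlgo applied to $\T[v]$ returns $\metr{\T[v]}$ and $m_v=\metr{\T[v]}$. For the inductive step I would peel the maximal proper submodules $w_1,\dots,w_k$ of $v$ off one at a time, invoking \Cref{thm:mod} once per $w_i$: collapsing $\T[w_1]$ inside \T[v] and giving $\tilde w_1$ the attribute $\metr{\T[w_1]}$ leaves the metric unchanged by \Cref{thm:mod}; in the resulting tree $w_2,\dots,w_k$ are still modules (collapsing a disjoint subtree introduces no new paths into the others — this is where laminarity is used), so \Cref{thm:mod} applies again, and so on. After $k$ steps one obtains that $\metr{\T[v]}$ equals the metric of $S_v$ computed with the attribution $\beta$ given by $\beta(\tilde w_i)=\metr{\T[w_i]}$ for each $i$ and $\beta(a)=\attr{a}$ on every other \BAS. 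By the induction hypothesis $\beta$ is exactly the attribution \Cref{alg:satmod} feeds to \metrAlgo, since $m_{w_i}=\metr{\T[w_i]}$; because \metrAlgo computes metrics correctly, it therefore returns $\metr{\T[v]}$ on input $(S_v,\beta)$, i.e.\ $m_v=\metr{\T[v]}$. Specialising to $v=\ATroot$ gives $m_{\ATroot}=\metr{\T}$, as required.

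The step I expect to be the main obstacle is the bookkeeping around iterating \Cref{thm:mod}: one has to check that (i) the maximal proper submodules of $v$ are pairwise disjoint, so the collapses are independent of each other and of the order performed, and (ii) collapsing one of them preserves the module property of the remaining ones, so that \Cref{thm:mod} may legitimately be reapplied. Both follow from laminarity of the module family, but verifying (ii) for \DAG-structured \ATs needs some care, because being a module is a global property of the whole \DAG rather than a local one. All of the attribute-domain algebra — the distributivities of $\operSAND$ over $\operAND$ and $\operOR$, and of $\operAND$ over $\operOR$ — is already encapsulated in \Cref{thm:mod}, so the corollary itself is essentially a structural-induction wrapper around that result.
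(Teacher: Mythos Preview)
Your proposal is correct and matches the paper's intended argument: the paper states \Cref{coro:satmod_correct} as an immediate corollary of \Cref{thm:mod} without writing out a proof, and your bottom-up induction over the laminar module forest, iterating \Cref{thm:mod} once per collapsed submodule, is precisely the natural way to make that implication explicit. The bookkeeping points (i) and (ii) you flag are the right ones, and both do follow from laminarity as you say; the paper simply leaves all of this implicit.
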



\begin{algorithm}
	\def\ModUlez{\ensuremath{U}}
	\KwIn{Static or dynamic attack tree \T,\newline
	      attribution $\attrOp$,\newline
	      algorithm \metrAlgo to calculate metric $\metrOp$}
	\KwOut{$\metrOp(\T)$.}
	\BlankLine
	\DontPrintSemicolon
	$\ModUlez := \mathtt{Modules}(\T)$\;
	\While{$\ModUlez \neq \varnothing$}{
		Extract minimal $v$ from $\ModUlez$\;
		$\metrOp(\T[v]) := \mathscr{A}(\T[v],\attrOp)$
			\tcp*[h]{Compute metric for module $v$}\;
		$(\T,\attrOp) := (\T^v,\attrOp^v)$
			\tcp*[h]{Update $\attrOp$ and remove module $v$}\;
	}
	\Return{$\attr{\ATroot}$} \tcp*[h]{\ATroot is a \BAS now}
	\caption{Modular analysis.
		Notation $(\T^v,\attrOp^v)$ is from \Cref{thm:mod};
		$\mathtt{Modules}$ is an algorithm that lists all modules,
		e.g.\ the one from \cite{DR96}.}
	\label{alg:satmod}
\end{algorithm}

\begin{example}
	\label{ex:modular}
	\opencutright	
	\renewcommand\windowpagestuff{%
		\centering\includegraphics[width=.8\linewidth]{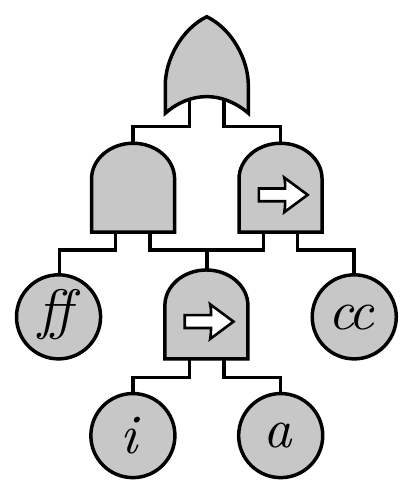}
	}
	\begin{cutout}{0}{.7\linewidth}{0pt}{7}
	Recall the \DAT from \Cref{fig:AT:example:dynamic}, \sampleTd, and refine the \BAS \ww (``walk next to victim'') to consist of two steps: ``identify possible target'' ($i$) and ``approach victim'' ($a$).
	Then instead of $\ww\in\BAS$, the sub-tree shared by gates ``skill'' and ``luck'' becomes $v=\SAND(i,a)$, yielding the \DAT $\T'=\OR\big(\AND(\ff,v),\SAND(v,\cc)\big)$.
	Consider the \emph{min time} metric given by the semiring dynamic attribute domain $(\mathbb{N},\min,\max,+)$, and attribution $\attrOp = \{{\ff \mapsto 3},\allowbreak {i \mapsto 10},\allowbreak {a \mapsto 5},\allowbreak {\cc \mapsto 1}\}$.
	Then $\,{\metr{\T[v]'} = \attr{i}+\attr{a} = 15}\,$, and since
	\end{cutout}
	\noindent%
	the truncated \DAT $\T'^v$ is isomorphic to \T[d], by \Cref{thm:mod} we get $\metr{\T'}$ = $\metr{\T[d]} = 15$ (see \Cref{ex:DAT:metric}).
	In contrast, applying \Cref{eq:metrdat} directly to $\T'$ would have computed \metr{v} twice.
	The gain of this approach is proportional to the amount of repetitions of each module, times their size.
\end{example}

Note that if \T has tree structure, each node is a module. In this case \Cref{alg:satmod} becomes a bottom-up method, and so \Cref{thm:mod} is an important result for \Cref{theo:bottom_up_SAT,theo:bottom_up_DAT}.

\Cref{alg:satmod} does not solve the problem of calculating metrics for \DAG-structured \DATs, since it assumes the existence of an algorithm \metrAlgo that computes $\metrOp$. However, it can speed up any found algorithm, including the enumeration of all minimal attacks to calculate the metric from these.

\section{Multiple metrics simultaneously}
\label{sec:order}

An important class of metrics are those which assign to an \AT not a single metric value, but a set of metric values. Typical examples are the following (stated below for \SATs but applicable to \DATs as well):

\begin{enumerate}
    \item \textbf{Uncertainty sets}: Suppose that the $\attr{a}$ are not known exactly, but instead we only have bounds $\attr{a} \in [L_a,U_a]$ for all $a \in \BAS$. For instance, we might only have a confidence interval for $\attr{a}$. In this case, we are interested in finding $\widecheck{L},\widecheck{U}$ such that $\metr{\T} \in [\widecheck{L},\widecheck{U}]$.
    \item \textbf{$\boldsymbol{k}$-top metrics}: The $k$-lowest (or -highest) values of a given metric, see \Cref{sec:ktop}.
    \item \textbf{Pareto front}: 
    Attributes can be opposed, e.g.\ attack $\attack_1$ may be less expensive than $\attack_2$, but take more time. To understand the tradeoffs between different metrics one studies its \emph{Pareto front}: the set of metric values of attacks that are not dominated in all metrics by another attack.
\end{enumerate}

Such metrics relate to partial orders on the domain:
We now give a framework to express the examples above as semiring attribute domain metrics, via their ordering.
\emph{%
This allows us to run once any algorithm that calculates semiring metrics, e.g.\
\Cref{alg:bottom_up_SAT,alg:bottom_up_BDD,alg:bottom_up_DAT}, to compute all elements of these (multi) sets simulaneously%
} --- proofs are in \Cref{app:order}.

\begin{definition}
\begin{enumerate}
    \item A \emph{partially ordered semigroup (POSG)} is a tuple $(X,\preceq,\operAND)$ such that:
\begin{enumerate}
    \item $(X,\preceq)$ is a poset;
    \item $\operAND$ is a commutative associative operation on $X$;
    \item If $x,y \in X$ are such that $x \preceq y$, then $x \operAND z \preceq y \operAND z$ for all $z \in X$.
\end{enumerate}
If $\preceq$ is a linear order, we call $(X,\preceq,\operAND)$ a \emph{linearly ordered semigroup (LOSG)}.
\item A \emph{dynamic partially ordered semigroup (DPOSG)} is a tuple $(X,\preceq,\operAND,\operSAND)$ such that $(X,\preceq,\operAND)$ and $(X,\preceq,\operSAND)$ are POSGs and $\operSAND$ distributes over $\operAND$. If $\preceq$ is a linear order, we call $(X,\preceq,\operAND,\operSAND)$ a \emph{dynamic linearly ordered semigroup (DLOSG)}.
\end{enumerate}
\end{definition}

There are different ways to create semiring attribute domains out of POSGs, and semiring dynamic attribute domains out of DPOSG. In the case that the partial order is linear this can be done directly:

\begin{lemma} \label{lemma:LOSG}
\begin{enumerate}
\item If $(X,\preceq,\operAND)$ is an LOSG, then $(X,\min,\operAND)$ is a semiring attribute domain.
\item If ($X,\preceq,\operAND,\operSAND)$ is a DLOSG, then $(X,\min,\operAND,\operSAND)$ is a semiring dynamic attribute domain.
\end{enumerate}
\end{lemma}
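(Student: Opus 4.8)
The plan is to check the defining axioms of a (dynamic) semiring attribute domain one by one, directly from the (D)(L)OSG hypotheses, with the linearity of $\preceq$ carrying the one nontrivial step. For part~1, since $\preceq$ is a linear order on $X$ the operation $\min$ (taken with respect to $\preceq$) is well defined as a map $X^2\to X$, and it is commutative and associative for a trivial reason: $\min(a,b)$ and $\min(b,a)$ are both the $\preceq$-least element of $\{a,b\}$, and $\min(\min(a,b),c)$ and $\min(a,\min(b,c))$ are both the $\preceq$-least element of $\{a,b,c\}$. Commutativity and associativity of $\operAND$ are part of the LOSG hypothesis. Hence the only axiom left to verify is that $\operAND$ distributes over $\min$, i.e.\ $x\operAND\min(y,z)=\min(x\operAND y, x\operAND z)$.

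For this distributivity, fix $x,y,z\in X$ and, using linearity, assume without loss of generality that $y\preceq z$ (the symmetric case follows from commutativity of $\min$), so that $\min(y,z)=y$. Axiom (c) of a POSG together with commutativity of $\operAND$ yields $x\operAND y\preceq x\operAND z$, hence $\min(x\operAND y, x\operAND z)=x\operAND y=x\operAND\min(y,z)$, as desired. This shows $(X,\min,\operAND)$ is a semiring attribute domain in the sense of \Cref{def:SAT:metric}.

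For part~2 the verification of \Cref{def:semiring_dynamic_attribute_domain_OMG_howlong_isthislabel} splits cleanly. Associativity and commutativity of $\min$ are as above; those of $\operAND$ and of $\operSAND$ are built into the DPOSG hypothesis (namely that $(X,\preceq,\operAND)$ and $(X,\preceq,\operSAND)$ are POSGs); that $\operSAND$ distributes over $\operAND$ is assumed directly; that $\operAND$ distributes over $\min$ is exactly part~1; and that $\operSAND$ distributes over $\min$ follows by running the argument of the previous paragraph verbatim with the POSG $(X,\preceq,\operSAND)$ in place of $(X,\preceq,\operAND)$. Assembling these facts gives that $(X,\min,\operAND,\operSAND)$ is a semiring dynamic attribute domain.

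The argument is elementary and I do not anticipate a genuine obstacle; the only point that warrants care is the case split ``without loss of generality $y\preceq z$'' in the distributivity step, which relies essentially on $\preceq$ being total — for a merely partial order $\min(y,z)$ need not exist, and this is precisely the reason the non-linear POSG/DPOSG case must be handled by the separate constructions in the remainder of \Cref{sec:order}.
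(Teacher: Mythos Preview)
Your proof is correct and follows essentially the same approach as the paper's: both reduce the one nontrivial axiom (distributivity of $\operAND$ over $\min$) to the monotonicity axiom of the POSG via a case split on which of $y,z$ is $\preceq$-smaller, and both dismiss the dynamic case as analogous. Your write-up is slightly more explicit about well-definedness and associativity of $\min$, but the core argument is identical.
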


Semiring attribute domains from LOSGs are ubiquitous: all examples in \Cref{tab:SAT:metric} come from the construction in \Cref{lemma:LOSG} (reverting the natural order to change $\min$ into $\max$ if necessary). In the following we explain how examples 1)--3) mentioned above can be calculated using semiring (dynamic) attribute domains derived from (D)POSGs. We only give the dynamic statements below, but the static cases are completely analogous.


\begin{table*}
	\centering
	
	\vspace{1ex}
	\caption{%
	  Algorithms for metrics on different \AT classes%
	  ~~(replica of \Cref{tab:all_algos_intro})
	}
	\label{tab:all_algos_conclu}
	\vspace{2ex}
	\parbox{.73\linewidth}{
		\def\bfacro#1{\acronym{\bfseries{#1}}}%
		\bfacro{bu}:      bottom-up on the \AT structure.
		\bfacro{aph}:     acyclic phase-type (time distribution).
		\bfacro{bdd}:     binary decision diagram.
		\bfacro{mtbdd}:   multi-terminal \acronym{bdd}.
		$\pmb{\pazocal{C}}$-\bfacro{bu}:
		                  repeated \acronym{bu}, identifying clones.
		\bfacro{dpll}:    \acronym{dppl sat}-solving in the \AT formula.
		\bfacro{pta}:     priced time automata (semantics).
		\bfacro{milp}:    mixed-integer linear programming.
		\bfacro{i/o-imc}: input/output interactive Markov chains (semantics). ${}^1$ \Cref{alg:satmod} reduces runtime for any found method; ${}^2$ \Cref{lemma:topk} reduces $k$-top calculation to metric calculation.
	}
	\vspace{-2ex}
\end{table*}

\subsection{Uncertainty sets}

Let $(X,\preceq,\operAND,\operSAND)$ be a DLOSG. Suppose that $\alpha$ is not known exactly; instead for every $a \in \BAS$ we have $L_a,U_a \in X$ for which we know $L_a \preceq \alpha(a) \preceq U_a$. In this case, we are interested in
\begin{align*}
\widecheck{L}_T &\doteq \inf\{\metr{\T} \mid \forall a. L_a \preceq \alpha(a) \preceq U_a \}\\
\widecheck{U}_T &\doteq \sup\{\metr{\T} \mid \forall a. L_a \preceq \alpha(a) \preceq U_a \}.
\end{align*}

We find this as follows: let $\domain = (X,\min,\operAND,\operSAND)$ be the semiring dynamic attribute domain from \Cref{lemma:LOSG}. Consider the semiring dynamic attribute domain $\domain^2$, which has underlying set $X^2$ and on which every operator acts componentwise. Define an attribution $\beta$ with values in $X^2$ by $\beta(a) = (L_a,U_a)$; then $\widecheck{\beta}(\T) = (\widecheck{L}_T,\widecheck{U}_T)$. The key observation to prove this is that for every \T, the map $\alpha \mapsto \metr{\T}$ is monotonous in each $\alpha(a)$.

\subsection{\texorpdfstring{$\boldsymbol{k}$}{k}-top metrics}
\label{sec:order:ktop}

Let $(X,\preceq,\operAND,\operSAND)$ be a DLOSG: we want to find the $k$-top metric $\operatorname{Top}_{k}(T,\alpha) \in \mathscr{M}(X)$ from \Cref{sec:ktop}.
This can be done via a semiring metric as follows. Let $\mathscr{M}^k(X)$ be the set of multisets in $X$ of cardinality at most $k$. Define three operations $\operOR^k$, $\operAND^k$ and $\operSAND^k$ on $\mathscr{M}^k(X)$ by
\begin{align*}
M_1 \operOR^k M_2 &= \operatorname{min}^k(M_1 \uplus M_2)\\
M_1 \operAND^k M_2 &= \operatorname{min}^k\ldb x_1 \operAND x_2 \mid x_1 \in M_1,x_2 \in M_2 \rdb\\
M_1 \operSAND^k M_2 &= \operatorname{min}^k\ldb x_1 \operSAND x_2 \mid x_1 \in M_1,x_2 \in M_2 \rdb.
\end{align*}
Here $\uplus$ denotes multiset union. Furthermore, define a map $\beta\colon \BAS \rightarrow \mathscr{M}(X)$ by $\beta(a) = \ldb \alpha(a) \rdb$. Then $\operatorname{Top}_{k}(\T,\alpha)$ can be found as follows:

\begin{lemma}  \label{lemma:topk}
The tuple $(\mathscr{M}^k(X),\operOR^k,\operAND^k,\operSAND^k)$ is a semiring dynamic attribute domain, and $\widecheck{\beta}(\T) = \operatorname{Top}_k(\T,\alpha)$.
\end{lemma}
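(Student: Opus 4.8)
The plan is to isolate the combinatorial core into a single \emph{reduction lemma} about $\operatorname{min}^k$, and then read off every remaining claim from it. The lemma I would prove first is: for any map $g\from X^m\to X$ that is nondecreasing in each argument with respect to $\preceq$, and any multisets $M_1,\dots,M_m\in\mathscr{M}(X)$,
\begin{equation*}
	\operatorname{min}^k\ldb g(x_1,\dots,x_m)\mid x_i\in M_i\rdb
	= \operatorname{min}^k\ldb g(x_1,\dots,x_m)\mid x_i\in\operatorname{min}^k(M_i)\rdb .
\end{equation*}
The ``$\succeq$'' part is immediate since the right-hand multiset is a sub-multiset of the left. For ``$\preceq$'' I would argue by exchange: start from any choice $S$ of $k$ index-tuples realising the left-hand $\operatorname{min}^k$; whenever some tuple in $S$ uses a copy of $M_i$ lying outside $\operatorname{min}^k(M_i)$, there are $k$ copies inside $\operatorname{min}^k(M_i)$ of no larger value, and since $g$ is monotone, replacing the offending copy by a suitable one of these yields a tuple of no larger $g$-value that is not already in $S$; swapping it in keeps $S$ value-minimal (the value multiset only weakly decreases, and it was already minimal) while strictly reducing the number of out-of-range coordinates, so the process terminates with a minimal $S\subseteq\prod_i\operatorname{min}^k(M_i)$. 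Combining with idempotency of $\operatorname{min}^k$ and its monotonicity under sub-multiset inclusion gives equality. I expect the careful bookkeeping of multiplicities in this exchange argument to be the most delicate piece of the proof.

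With the reduction lemma in hand, the statement that $(\mathscr{M}^k(X),\operOR^k,\operAND^k,\operSAND^k)$ is a semiring dynamic attribute domain follows by tracing each axiom back to $X$. Commutativity of the three operations is immediate from commutativity of $\uplus$, $\operAND$ and $\operSAND$. For associativity of $\operAND^k$ one writes $(M_1\operAND^k M_2)\operAND^k M_3=\operatorname{min}^k\ldb y\operAND x_3\mid y\in M_1\operAND^k M_2,\ x_3\in M_3\rdb$ and applies the reduction lemma with $g(y,x_3)=y\operAND x_3$ (nondecreasing in $y$ by the POSG monotonicity axiom, in $x_3$ by commutativity) to rewrite it as $\operatorname{min}^k\ldb x_1\operAND x_2\operAND x_3\mid x_i\in M_i\rdb$; the other bracketing reduces to the same expression, and the identical template handles $\operSAND^k$ and the (trivial) $\operOR^k$. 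The three distributive laws are obtained the same way: both $M_1\operSAND^k(M_2\operAND^k M_3)$ and $(M_1\operSAND^k M_2)\operAND^k(M_1\operSAND^k M_3)$ are reduced, via the lemma, to a $\operatorname{min}^k$ over expressions built directly from $x_i\in M_i$, which then coincide because $\operSAND$ distributes over $\operAND$ in the DLOSG $X$; the cases of $\operSAND^k$ and $\operAND^k$ over $\operOR^k$ are analogous, using the union form of the lemma. I would flag the $\operSAND^k$-over-$\operAND^k$ law as the main obstacle in this part, as it is the one place where the reduction must be performed carefully on both sides before the pointwise identity in $X$ can be invoked.

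For the identity $\widecheck{\beta}(\T)=\operatorname{Top}_k(\T,\alpha)$ I would unfold \Cref{def:DAT:metric} and exploit that $\beta(a)=\ldb\alpha(a)\rdb$ is a singleton. First, for a maximal chain $C$, an induction on $\card{C}$ using associativity of $\operSAND^k$ gives $\bigoperSAND^k_{a\in C}\beta(a)=\ldb\,\bigoperSAND_{a\in C}\alpha(a)\,\rdb$, still a singleton since $\operatorname{min}^k$ fixes singletons. Second, an induction on $\card{\MC_O}$ using associativity of $\operAND^k$ gives $\bigoperAND^k_{C\in\MC_O}\ldb\,\bigoperSAND_{a\in C}\alpha(a)\,\rdb=\ldb\metrA{O}\rdb$, where $\metrA{O}=\bigoperAND_{C\in\MC_O}\bigoperSAND_{a\in C}\alpha(a)$ is exactly the parallel metric value of the attack $O$ for the linearly ordered domain $(X,\min,\operAND,\operSAND)$ of \Cref{lemma:LOSG}. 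Finally $\bigoperOR^k_{O\in\dsemin{\T}}\ldb\metrA{O}\rdb=\operatorname{min}^k\!\big(\biguplus_{O}\ldb\metrA{O}\rdb\big)=\operatorname{min}^k\ldb\metrA{O}\mid O\in\dsemin{\T}\rdb$, where the first equality is the union form of the reduction lemma together with associativity, and the last expression is $\operatorname{Top}_k(\T,\alpha)$ by definition. Note that this half uses only associativity from the first part, not distributivity; the static version is identical with $\dsemin{\T}$ replaced by $\dsem{\T}$ and the maximal-chain layer dropped.
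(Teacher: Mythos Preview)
Your reduction lemma is a clean generalisation of what the paper does: the paper isolates the two binary special cases $\operatorname{min}^k(M_1\uplus M_2)=\operatorname{min}^k(M_1\uplus\operatorname{min}^k(M_2))$ and $\operatorname{min}^k(M_1\star M_2)=\operatorname{min}^k(M_1\star\operatorname{min}^k(M_2))$ as separate auxiliary lemmas and then runs the same unfold--reduce--refold pattern you describe. Associativity, the distributivities over $\operOR^k$, and the identity $\widecheck{\beta}(\T)=\operatorname{Top}_k(\T,\alpha)$ all go through as you outline, and match the paper's argument.

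The gap is precisely where you flag the main obstacle: distributivity of $\operSAND^k$ over $\operAND^k$. Your sketch says both sides reduce, via the lemma, to a $\operatorname{min}^k$ over expressions in $x_i\in M_i$ which ``then coincide'' by distributivity in $X$. But after reduction the left side is $\operatorname{min}^k\ldb x_1\operSAND(x_2\operAND x_3)\mid x_i\in M_i\rdb$, whereas the right side is $\operatorname{min}^k\ldb (x_1\operSAND x_2)\operAND(x_1'\operSAND x_3)\mid x_1,x_1'\in M_1,\ x_2\in M_2,\ x_3\in M_3\rdb$ with \emph{two independent} draws from $M_1$. The pointwise identity in $X$ only embeds the first multiset into the second along the diagonal $x_1=x_1'$; it does not force the $k$ smallest values to agree. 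Concretely, take the DLOSG $X=\NN$ with $\operAND=\max$ and $\operSAND=+$, and set $k=3$, $M_1=\ldb 1,2,3\rdb$, $M_2=M_3=\ldb 0\rdb$: then $M_1\operSAND^k(M_2\operAND^k M_3)=\ldb 1,2,3\rdb$, but $(M_1\operSAND^k M_2)\operAND^k(M_1\operSAND^k M_3)=\ldb 1,2,3\rdb\operAND^k\ldb 1,2,3\rdb=\operatorname{min}^3\ldb\max(u,v)\mid u,v\in\{1,2,3\}\rdb=\ldb 1,2,2\rdb$. So this law actually fails in general, and your argument cannot establish it. The paper's own proof dispatches this case with ``analogously'' after doing only $\operAND^k$ over $\operOR^k$, so the gap is shared rather than specific to your write-up; your treatment of the second claim is unaffected, since as you note it relies only on associativity.
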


Compared to \Cref{alg:shortest_path_BDD}, this method is more general in the sense that it also works for \DATs, but it comes at a complexity cost for \SATs: once the \BDD $(W,Low,High,Lab)$ corresponding to \T has been constructed, \Cref{alg:shortest_path_BDD} has complexity $O(|W|+k)$, while applying \Cref{alg:bottom_up_BDD} to \Cref{lemma:topk} has complexity $O(k^2|W|)$.

\subsection{The antichain semiring}

If $(X,\preceq,\operAND,\operSAND)$ is a DLOSG, then we can interpret it as a semiring attribute domain $(X,\min,\operAND,\operSAND)$. We cannot do the same when it is a DPOSG, because $\min(x,y)$ may not exist. To create a semiring attribute domain out of a DPOSG we need a more elaborate construction. Specifically, let $\AC_X$ be the set of \emph{antichains} in $(X,\preceq)$, i.e.\ sets of pairwise incomparable elements:
\[
\AC_X = \{S \in 2^X \mid \forall x,x' \in S\colon x' \not \prec x\}.
\]
Furthermore, define a map $m\colon 2^X \rightarrow \AC_X$ that sends a set to the antichain of its minimal elements:
\[
m(S) = \{x \in S \mid \forall x' \in S \colon x' \not \prec x\}.
\]
We also define three operations $\operOR_{\AC},\operAND_{\AC}, \operSAND_{\AC}$ on $\AC_X$ by
\begin{align*}
S_1 \operOR_{\AC} S_2 &= m(S_1 \cup S_2)\\
S_1 \operAND_{\AC} S_2 &= m(\{x_1 \operAND x_2 \mid x_1 \in S_1, x_2 \in S_2\})\\
S_1 \operSAND_{\AC} S_2 &= m(\{x_1 \operSAND x_2 \mid x_1 \in S_1, x_2 \in S_2\}).
\end{align*}
The following \namecref{lemma:antichain} is an extension of \cite{FW19}, where it is shown for the static case under mild assumptions on $X$.

\begin{lemma} \label{lemma:antichain}
The tuple $(\AC_X,\operOR_{\AC},\operAND_{\AC},\operSAND_{\AC})$ is a semiring dynamic attribute domain.
\end{lemma}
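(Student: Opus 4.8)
The plan is to verify, one by one, the defining properties of a semiring dynamic attribute domain (\Cref{def:semiring_dynamic_attribute_domain_OMG_howlong_isthislabel}) for $(\AC_X,\operOR_{\AC},\operAND_{\AC},\operSAND_{\AC})$: commutativity and associativity of the three operations, and the three distributive laws ($\operSAND_{\AC}$ over $\operAND_{\AC}$, $\operSAND_{\AC}$ over $\operOR_{\AC}$, and $\operAND_{\AC}$ over $\operOR_{\AC}$). Commutativity is immediate: $S\cup S'=S'\cup S$, and the product sets $\{x\operAND x'\mid x\in S, x'\in S'\}$ and $\{x\operSAND x'\mid x\in S, x'\in S'\}$ are symmetric in their two arguments because $\operAND$ and $\operSAND$ are commutative on $X$, so applying $m$ preserves these equalities. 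Everything else I would base on one technical lemma about $m$.

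The key tool is a \emph{replacement lemma}. For subsets $S,S'\subseteq X$, say that $S'$ \emph{dominates} $S$ when every element of $S$ lies above some element of $S'$ with respect to $\preceq$; this is a preorder, mutual domination is an equivalence, and two antichains that dominate each other are equal. Under the standing assumptions on $X$ --- which guarantee that every element of the sets in question lies above a minimal element, and are automatic when all the antichains involved are finite, as in the algorithmic application --- a set $S$ and $m(S)$ dominate each other. Furthermore the three set constructions $(S,S')\mapsto S\cup S'$, $(S,S')\mapsto\{x\operAND x'\}$ and $(S,S')\mapsto\{x\operSAND x'\}$ are monotone for the domination preorder: this is clear for $\cup$, and for the two products it follows from the monotonicity clause of the POSG axioms together with commutativity. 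Combining these facts yields: for any expression $\phi$ built out of $\cup$ and the two pointwise products, $m(\phi(S_1,\dots,S_k))=m(\phi(m(S_1),\dots,m(S_k)))$ --- that is, inner occurrences of $m$ may be inserted or deleted at will.

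Granting the replacement lemma, associativity of $\operOR_{\AC}$ reduces to associativity of $\cup$, and associativity of $\operAND_{\AC}$ (resp.\ $\operSAND_{\AC}$) to that of $\operAND$ (resp.\ $\operSAND$) on $X$: e.g.\ $(S_1\operAND_{\AC}S_2)\operAND_{\AC}S_3 = m\big(\{(x_1\operAND x_2)\operAND x_3\}\big) = m\big(\{x_1\operAND(x_2\operAND x_3)\}\big) = S_1\operAND_{\AC}(S_2\operAND_{\AC}S_3)$, the outer equalities coming from the replacement lemma. The distributive laws of $\operAND_{\AC}$ and of $\operSAND_{\AC}$ over $\operOR_{\AC}$ follow the same pattern, now additionally using the purely set-theoretic identity $A\diamond(B\cup C)=(A\diamond B)\cup(A\diamond C)$, valid for any pointwise product $\diamond$: strip the inner $m$'s, rewrite $S_1\diamond(S_2\cup S_3)$ as $(S_1\diamond S_2)\cup(S_1\diamond S_3)$, and re-wrap with $m$. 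No $X$-level distributivity is needed for these two laws.

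The step I expect to be the genuine obstacle is distributivity of $\operSAND_{\AC}$ over $\operAND_{\AC}$, since, unlike union, a pointwise product does \emph{not} distribute over another pointwise product at the level of sets. After removing the inner $m$'s and applying the $X$-level identity $x_1\operSAND(x_2\operAND x_3)=(x_1\operSAND x_2)\operAND(x_1\operSAND x_3)$, the left-hand side becomes $m(P)$ with $P=\{(x_1\operSAND x_2)\operAND(x_1\operSAND x_3)\mid x_i\in S_i\}$, whereas the right-hand side is $m(Q)$ with $Q=\{(x_1\operSAND x_2)\operAND(x_1'\operSAND x_3)\mid x_1,x_1'\in S_1,\,x_2\in S_2,\,x_3\in S_3\}$; here $P\subseteq Q$, so one inclusion between the resulting antichains is free, but the reverse requires showing that each ``cross term'' of $Q$ (one with $x_1\neq x_1'$) is dominated by a ``diagonal term'' of $P$. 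This is not a consequence of the DPOSG axioms alone, so at this point I would either invoke the additional hypotheses on $X$ used for the static case in \cite{FW19}, or --- noting that in the intended application to tree-structured \DATs one argument of every $\operSAND_{\AC}$ reduces to a singleton, in which case $P=Q$ and the argument closes --- prove only the restricted statement that the downstream correctness proof actually needs. Isolating this cross-term domination as its own lemma, to be discharged under whichever hypothesis is appropriate, is how I would structure the write-up.
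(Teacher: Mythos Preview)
Your domination/replacement lemma is precisely the device the paper uses: it is stated there as two separate identities, $m(S_1\cup S_2)=m(S_1\cup m(S_2))$ and $m(S_1\star S_2)=m(S_1\star m(S_2))$ (with $\star$ the pointwise product under $\operAND$, and the analogous statement for $\operSAND$), and associativity together with the $\operAND_{\AC}$-over-$\operOR_{\AC}$ distributivity are then derived by exactly the strip--rewrite--rewrap manipulation you describe. For the remaining distributive laws the paper simply writes ``the other cases are analogous.''

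Your caution about $\operSAND_{\AC}$ over $\operAND_{\AC}$ is warranted and in fact goes beyond what the paper does. That case is genuinely \emph{not} analogous: after removing the inner $m$'s one compares $m(P)$ with $m(Q)$ for $P\subseteq Q$, and the cross terms of $Q$ need not be dominated by diagonal terms from $P$ under the DPOSG axioms alone. Concretely, on $(\mathbb{R}_{\geq 0}^2,\leq,\max,+)$ with $S_1=\{(2,0),(0,2)\}$, $S_2=\{(0,10)\}$, $S_3=\{(10,0)\}$ one obtains $S_1\operSAND_{\AC}(S_2\operAND_{\AC}S_3)=\{(12,10),(10,12)\}$ but $(S_1\operSAND_{\AC}S_2)\operAND_{\AC}(S_1\operSAND_{\AC}S_3)=\{(10,10)\}$. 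So your plan to isolate this step and discharge it under additional hypotheses, or to restrict to the singleton-argument situation actually needed downstream, is the right move; the paper's ``analogous'' does not cover it.
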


\smallskip\noindent%
This has a number of applications:

\smallskip
\begin{enumerate}[leftmargin=1.5em]
    \item If $(X,\preceq,\operAND,\operSAND)$ is a DLOSG, then every antichain is a singleton, and the map $X \rightarrow \AC_X$ given by $x \mapsto \{x\}$ is an isomorphism of semiring dynamic attribute domains.
	\item For a \SAT $\T$, consider the static POSG $(2^{\BAS},\subseteq, \cup)$ and the attribute $\beta\from \BAS \to \AC_{2^{\BAS}}$ given by $\beta(a) = \{\{a\}\}$; then $\widecheck{\beta}(\T) = \dsem{\T}$. The elements of $\AC_{2^{\BAS}}$ are suites, and for any semiring attribute domain $(\Vdom,\operOR,\operAND)$, each attribution $\alpha\colon \BAS \rightarrow \Vdom$ induces a morphism of semiring attribute domains ${\widecheck{\alpha}\from \AC_{2^\BAS} \to \Vdom}$.
	\item Let $(X_1,\preceq_1,\operAND_1,\operSAND_1),\ldots,(X_n,\preceq_n,\operAND_n,\operSAND_n)$ be a collection of LOSGs. Let $X = \prod_i X_i$, on which we have a partial order $\preceq$ and binary operations $\operAND,\operSAND$ defined componentwise. Then $(X,\preceq,\operAND,\operSAND)$ is a DPOSG and so $(\AC_X,\operOR_{\AC},\operAND_{\AC},\operSAND_{\AC})$ is a semiring dynamic attribute domain. Let \T be a \SAT, and for each $i$ let $\alpha_i\colon \BAS \rightarrow X_i$ be an attribution; let $\alpha\from \BAS \to X$ be the product map. Define $\metrA{\attack} = (\widehat{\alpha}_1(\attack),\ldots,\widehat{\alpha}_n(\attack))$ for $\attack \in \ssem{\T}$. Then the Pareto front of \T w.r.t.\ the $\alpha_i$ is the subset of $X$ given by:
    \[
    \operatorname{PF}(\T,\alpha) = \{ \metrA{\attack} \mid \attack \in \dsemin{\T}, \forall \attack' \in \dsemin{\T}\colon \metrA{\attack'} \not\prec \metrA{\attack}\}.
    \]
    Finally, consider the map $\beta\from \BAS \to \AC_X$ given by $\beta(a) = \{\alpha(a)\}$. Then $\widecheck{\beta}(\T) = \operatorname{PF}(\T,\alpha)$.
\end{enumerate}


\section{Related work}
\label{sec:related_work}

Surveys on attack trees are \cite{KPS14,WAFP19}:
the latter covers \AT analysis via formal methods, from which we are close to quantitative model checking---cf.\ simulation studies such as \cite{DMCR06,WAN18}.
Concrete case studies have been reported in \cite{FGK+16}.

Terminology in the \AT literature is not uniform.
In particular, some works study \DAG-like structures but preserve the term ``attack tree'' \cite{MO06,BET13,BS21}.
Others restrict the syntactic structures to be actual trees, replicating parts of the tree---e.g.\ via so-called cloned nodes and repeated labels---to model the use of the same resource in several parts of an attack \cite{GIM15,BK18,KW18}.
We follow the former convention, which is akin to the treatment of common cause failures in fault tree analysis \cite[Sec.~8]{VSD+02}.
Thus, we write ``attack tree'' to refer to both tree- and \DAG-like structures.

Similarly, the term \emph{dynamic attack tree} has recently been used to refer to a set of \ATs that share the main attacker's goal \cite{AD21,AD22}.
These resemble the \emph{attack-tree series} from \cite{GM19}, where ``dynamic attack tree analysis'' refers to the study of attack-tree series.
Instead, in this work we follow \cite{BS21,BKS21} and call an \AT \emph{dynamic} when its structure includes a sequential-\AND gate---so its semantics must distinguish among different execution orders of the basic attack steps.
This is akin to the notions used in fault tree analysis, where dynamic gates like priority-\AND in dynamic fault trees have similar semantics to sequential-\AND in \ATs \cite{VSD+02,MBD20}.


Regarding \AT metrics, \Cref{tab:all_algos_conclu} condenses literature references on quantitative analyses of \ATs, classified by the structure and (dynamic) gates of the \ATs where they operate.
For each metric and \AT class, the \namecref{tab:all_algos_conclu} cites the earliest relevant contributions that include some computation procedure.

Works \cite{BLP+06,JW08} are among the first to model and compute the cost and probability of attacks: their algorithms have \EXPTIME complexity regardless of the \AT structure.
In \cite{KRS15,KSR+18} an attack is moreover characterised by the time it takes.
This allows for richer Pareto analyses but introduces one clock per \BAS in the Priced Time Automata semantics: algorithms have thus \EXPTIME \& \PSPACE complexity \cite{AD94,BLR05}.
The current work improves these bounds via specialised procedures tailored for the specific \AT class, e.g.\ \Cref{alg:bottom_up_SAT,alg:bottom_up_DAT} resp.\ for tree-structured \SATs and \DATs have \LINTIME complexity.

Indeed, all algorithms specialised on tree-structured \ATs implement a bottom-up traversal on its syntactic structure: we denote these \acronym{bu} in \Cref{tab:all_algos_conclu}.
Pareto analyses are polynomial, where the exponent is the number of parameters being optimised.
Most works are on static \ATs, with the relevant exception of \cite{AHPS14,JKM+15,GRK+16} which include sequential-\AND gates.

For \DAG-structured static \ATs the algorithmic spectrum is broader, owing to the NP-hardness of the problem (see \Cref{sec:SAT_DAGs:complexity}).
Such algorithms range from classical \BDD encodings for probabilities, and extensions to multi-terminal \BDDs, to logic-based semantics that exploit \acronym{dpll}, including an encoding of \SATs as generalised stochastic Petri nets.
Prominent contributions are \cite{BK18} and \cite[Alg.~1]{KW18}: after computing so-called optional and necessary clones, computations are exponential on the number of shared \BAS (only).
As discussed in \Cref{sec:SAT_DAGs:complexity}, in this case the exponential complexity---on the number of nodes of the complete \AT---lies in the input of the algorithm, i.e.\ clone computation.
This approach is used in \cite{FW19} to calculate Pareto fronts; in \Cref{lemma:antichain} we use a similar strategy but instead apply \Cref{alg:bottom_up_BDD}, whose exponential explosion lies in computing the \BDD that encodes the \AT.

The computation of security metrics for dynamic attack trees is more recent than for \SATs: here we find open problems in the literature, indicated in two cells of \Cref{tab:all_algos_conclu}.
These open problems are not easy to overcome, although efficient solutions have been presented for specific cases, such as the series-parallel graph semantics of \cite{JKM+15} which works for tree-structured \DATs.
However, as we discuss in \Cref{par:comparison_series_parallel_graphs} (\cpageref{par:comparison_series_parallel_graphs}), this does not extend to \DAG-structured dynamic attack trees.
Another example is \cite{AGKS15}, which encodes a \DAT as a (variant of a) Markov chain to compute attack probability.
Min time is phrased in \cite{LS2021} as a mixed-integer linear programming problem.
For other metrics, \cite{KRS15} encodes the \AT as a network of \acronym{pta} and solves the resulting cost-optimal reachability problem.
As earlier stated, these very powerful and general approaches are in detriment of computational efficiency.
Alternatively and as shown in \Cref{sec:DAT_trees,sec:mod}, efficient (linear) bottom-up algorithms can correctly compute metrics in tree-structured \DATs.
This is implemented for instance in ADTool 2.0, which can also create a ranking of attacks---e.g.\ to find the $k$-top values---under the expected conditions for the operators $\operOR, \operAND, \operSAND$ \cite{GRK+16}.

Regarding \DAG-strutured \DATs, where the open problems of \Cref{tab:all_algos_conclu} lie, recent related results encode dynamic fault trees as so-called sequential-\BDDs, to compute the probability of system failure \cite{YW20}.
However, such safety-oriented works are hard to map to security analysis such as \AT metrics because:
\begin{enumerate*}
\item	they can compute probability---and possibly parallel time---only;
\item	the dynamic gates are not the same than those in dynamic \ATs;
\item	the standard logical gates are interpreted differently.
\end{enumerate*}
Still, it might be feasible to adapt \cite{YW20} to compute \AT metrics, e.g.\ to compare it against the algorithms here presented.
Probably the main detriment is that sequential-\BDDs expand sequence dependencies of every pair of events, adding a combinatorial blow-up on top of the already exponential explosion incurred by \BDD representations of \DAGs.
This leads us to believe that even the \EXPTIME complexity of our \Cref{alg:bottom_up_BDD} can be more efficient.

\section{Conclusions}
\label{sec:conclu}


This paper presents algorithms to compute quantitative security metrics on attack trees.
This is done in two steps: first, we revise and consolidate semantics in line with the literature, and we define metrics on these semantics, providing formal grounds on which to demonstrate the correctness of any devised computation method.
A key contribution here is the adaptation of non-restrictive poset semantics for dynamic attack trees (\Cref{sec:DAT:semantics}), which allows for a formal definition of general metrics on a wide range of \DATs.

Second, we introduce efficient and unifying algorithms that can compute many popular metrics, including sets of metrics (i.e.\ several metrics simultaneously as in $k$-top and Pareto analyses).
Here, the \BDD-based approach for general metrics of \Cref{alg:bottom_up_BDD} is a prominent result, together with \Cref{lemma:topk,lemma:antichain} that show how to use single-value algorithms for the computation of set metrics.

We noted --- in \Cref{sec:DAT:semantics} --- that our \DAT semantics rules out some interleavings in the execution of \SAND gates, e.g.\ $(b,a,c)$ for ${\SAND\big(a,\AND(b,c)\big)}$, even when these sequences would arguably result in a succesful attack.
To allow such sequences it is necessary to use formulae --- rather than individual \BASs --- as elements of the partial order.
For the \DAT above, this would yield the relation $a \prec (b\land c)$, which allows $(b,a,c)$ because the formulae in that sequence are satisfied in the order ``first $a$, then $b\land c$.'' Such ordering graphs are a promising research direction.

Further lines for future work also include:
developing efficient algorithms to compute metrics
on \DAG-structured dynamic \ATs;
extending our \AT syntax to include \emph{sequential-\OR gates}
\cite{KRS15,HKKS16};
and extending our general metrics to Attack--Defense Trees \cite{KMRS11,BKMS12}.
Other important future work is to implement the methods and algorithms from this paper in real-life case studies. Interesting future work in the opposite direction would be to frame attack tree metrics in a wider, category-theoretical framework. Operad algebras may form a useful tool for research in this direction, as attribute domains can be regarded as algebras of the operad of (dynamic) attack trees.

\bibliographystyle{IEEEtran}
\bibliography{main.bib}


\begingroup
\def\PIC#1{\includegraphics[width=1in,height=1.25in,clip,keepaspectratio]{#1}\vfill\phantom{}}

\begin{IEEEbiography}  
	[\PIC{mugshot_MLZ}]{Milan Lopuha\"a-Zwakenberg}
	is a postdoc at University of Twente (NL), studying safety and security metrics and their interplay. Before, he was a postdoc at Eindhoven University of Technology (NL) on information-theoretic privacy metrics, and he received his PhD from Radboud University (NL) on arithmetic geometry.
\end{IEEEbiography}

\begin{IEEEbiography}  
	[\PIC{mugshot_CEB}]{Carlos E.\ Budde}
	received his PhD in Computer Science in 2017 from the Universidad Nacional de C\'ordoba (AR), specialising in rare event simulation for formal methods. In 2017--2021 he was a postdoc at the Universiteit Twente (NL), also in collaboration with Dutch Railways. Since 2021 Carlos is assistant professor at the Universit\`a di Trento (IT), studying cybersecurity resilience via model simulation.
\end{IEEEbiography}

\begin{IEEEbiography}  
	[\PIC{mugshot_MS}]{Mari\"elle Stoelinga}
	is a professor of risk management, both at the Radboud University and the University of Twente, in the Netherlands. Stoelinga is the project coordinator on PrimaVera, a large collaborative project on Predictive Maintenance in the Dutch National Science Agenda NWA. She also received a prestigious ERC consolidator grant. Stoelinga holds an MSc and a PhD degree from Radboud University, and has spent several years as a postdoc at the University of California at Santa Cruz, USA.
\end{IEEEbiography}

\endgroup

\vfill\newpage\clearpage\appendices


\begingroup

\allowdisplaybreaks

\def\REF#1{\textit{\ref{#1})}}
\def\REFS#1#2{\textit{\ref{#1})--\ref{#2})}}
\def\hop{\\[.5ex]}
\def\by#1{&&\hspace*{-\linewidth}\text{\smaller\color{black!70}~by #1}}


\section{Proofs of Lemmas \ref{lemma:ssem} and \ref{lemma:dsem}} \label[appendix]{app:semantics}

In order to prove the results from the paper we first prove two auxiliary lemmas. \Cref{lemma:satisdat} shows that SAT semantics and metrics can be interpreted as a special case of DAT semantics and metrics. By using this lemma we do not have to prove certain statements for SATs and DATs separately, and instead only prove the DAT case. \Cref{lemma:dsem:app} is an extension of \Cref{lemma:dsem} to DAG-type DATs.

\begin{lemma} \label{lemma:satisdat}
Let $T$ be a \SAT. Let $T_{\textrm{d}}$ be the \AT $T$ interpreted as a \DAT. Then:
\begin{enumerate}
\item $\dsemin{T_d} = \{\poset[A][\varnothing] \mid A \in \dsem{T}\}$.
\item Let $\alpha$ be an attribution of $T$ into a semiring attribute domain $D = (V,\operOR,\operAND)$. Define a semiring dynamic attribute domain $\tilde{D} = (\tilde{V},\tilde{\operOR},\tilde{\operAND},\tilde{\operSAND})$ by
\begin{align*}
\tilde{V} &= V \sqcup \{\omega\}\\
v \mathbin{\tilde{\operOR}} v' &= \begin{cases}
v \mathbin{\tilde{\operOR}} v', & \textrm{ if $v,v' \in V$}\\
v, & \textrm{ if $v' = \omega$}\\
v', & \textrm{ if $v = \omega$}
\end{cases}\\
v \mathbin{\tilde{\operAND}} v' &= \begin{cases}
v \operAND v', & \textrm{ if $v,v' \in V$}\\
\omega, & \textrm{ otherwise}\\
\end{cases}\\
v \mathbin{\tilde{\operSAND}} v' &= \omega.
\end{align*}
Let $\beta\colon N \rightarrow \tilde{V}$ by the dynamic attribution given by $\beta(a) = \alpha(a)$. Then $\metr{T} = \widecheck{\beta}(T_d)$.
\end{enumerate}
\end{lemma}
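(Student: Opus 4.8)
The plan is to prove the two parts in order, after one reduction remark: since $T_{\textrm{d}}$ has no \tSAND{} node, \Cref{def:success} collapses to \Cref{def:SAT:sfun}, and the order $\prec$ becomes irrelevant to success. Concretely, I would first show by induction on the tree structure of $T$ that $\sfun_{T_{\textrm{d}}}(v,\poset) = \sfun_{T}(v,A)$ for every $v\in\ATnodes$ and every attack $\poset$ of $T_{\textrm{d}}$: the \tBAS, \tOR{} and \tAND{} clauses of \Cref{def:success} are literally those of \Cref{def:SAT:sfun} and never mention $\prec$, while the \tSAND{} clause never fires. Hence $\poset$ reaches $v$ in $T_{\textrm{d}}$ iff $A$ reaches $v$ in $T$, for any strict partial order $\prec$ on $A$.

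For part 1, I would take a minimal attack $O = \poset \in \dsemin{T_{\textrm{d}}}$. The remark gives $A \in \Suc{T}$; and if $\prec$ were nonempty, $\poset[A][\varnothing]$ would be a successful attack with $\poset[A][\varnothing] \leqslant O$ and $\poset[A][\varnothing] \neq O$, contradicting minimality, so $\prec = \varnothing$. Conversely $\poset[A][\varnothing]$ is $\leqslant$-minimal among the successful attacks of $T_{\textrm{d}}$ exactly when no $A' \subsetneq A$ lies in $\Suc{T}$, i.e.\ when $A$ is a minimal successful attack of the \SAT $T$. Since $\dsem{T} = \dsemin{T}$ for \SATs (\Cref{def:SAT:semantics}), this yields $\dsemin{T_{\textrm{d}}} = \{\poset[A][\varnothing] \mid A \in \dsem{T}\}$. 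No empty-attack subtlety arises, since $\varnothing \notin \Suc{T}$ for any \SAT.

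For part 2, I would first check that $\tilde{D}$ really is a semiring dynamic attribute domain as in \Cref{def:semiring_dynamic_attribute_domain_OMG_howlong_isthislabel}: associativity and commutativity of the three operators are inherited from $D$ on $V$ and are immediate whenever $\omega$ occurs, because $\omega$ is neutral for $\tilde{\operOR}$ and absorbing for both $\tilde{\operAND}$ and $\tilde{\operSAND}$; for distributivity the only noteworthy point is that $\tilde{\operSAND}$ is constantly $\omega$, so both sides of each of its distributivity laws collapse to $\omega$, while $\tilde{\operAND}$ distributes over $\tilde{\operOR}$ on $V$ because $D$ is a semiring, and trivially otherwise. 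Then, by part 1 and \Cref{def:DAT:metric} (all operators taken in $\tilde{D}$),
\[
\widecheck{\beta}(T_{\textrm{d}}) = \bigoperOR_{A \in \dsem{T}}\ \bigoperAND_{C \in \MC_{\poset[A][\varnothing]}}\ \bigoperSAND_{a \in C}\beta(a).
\]
Because the order in $\poset[A][\varnothing]$ is empty, every element of $A$ is simultaneously $\prec$-minimal and $\prec$-maximal, so $\MC_{\poset[A][\varnothing]} = \{\,\{a\} \mid a \in A\,\}$; hence each innermost $\bigoperSAND$ runs over a one-element set and equals $\beta(a) = \alpha(a) \in V$ — in particular $\tilde{\operSAND}$ is never actually applied to two operands. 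The $\bigoperAND$ over $\MC_{\poset[A][\varnothing]}$ is then $\bigoperAND_{a\in A}\alpha(a)$ evaluated in $\tilde{D}$, and as all its operands lie in $V$, on which $\tilde{\operAND}$ agrees with $\operAND$, this equals $\bigoperAND_{a\in A}\alpha(a)$ in $D$. Applying the same observation to the outer $\bigoperOR$, the entire expression stays within $V$ and equals $\bigoperOR_{A\in\dsem{T}}\bigoperAND_{a\in A}\alpha(a) = \metr{\dsem{T}} = \metr{T}$ by \Cref{def:SAT:metric}.

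The main obstacle, such as it is, lies not in any single step but in two places that need care: the minimality argument of part 1, which must exclude nonempty orders on a minimal attack of $T_{\textrm{d}}$; and the bookkeeping in part 2 showing that the computation in $\tilde{D}$ never leaves $V$, so that the adjoined point $\omega$ — whose sole purpose is to turn $\tilde{D}$ into a bona fide semiring dynamic attribute domain — contributes nothing to the value $\widecheck{\beta}(T_{\textrm{d}})$. Everything else is a routine unwinding of the definitions.
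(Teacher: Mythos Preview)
Your proof is correct and follows essentially the same approach as the paper's: both observe that without \SAND gates the order $\prec$ is irrelevant to success, deduce that minimal attacks have empty order, identify maximal chains as singletons, and then collapse the three-layer $\tilde{D}$-expression to the two-layer $D$-expression. You are more thorough than the paper in explicitly verifying that $\tilde{D}$ satisfies the axioms of a semiring dynamic attribute domain and in tracking that the computation never leaves $V$; the paper leaves both of these implicit.
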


\begin{proof}
\begin{enumerate}
    \item Since $T_d$ has no SAND-gates, \Cref{def:success} does not put any restrictions on the strict partial order $\prec$ of an attack $\poset$. Hence the success of an attack does not depend on $\prec$, and each minimal attack will be of the form $\poset[A][\varnothing]$. Since this is successful iff $A$ is successful on $T$, we find that $\dsemin{T_d}$ consists of those $\poset[A][\varnothing]$ for which $A \in \dsem{T}$.
    \item By the previous point all minimal attacks of $T_d$ have a trivial poset structure; therefore the maximal chains in this poset are exactly all singletons. It follows that
\begin{align*}
\widecheck{\beta}(T_d) &= \widetilde{\bigoperOR_{O \in \dsemin{T_d}}} \widetilde{\bigoperAND_{C \in \operatorname{MC}_{O}}} \widetilde{\bigoperSAND_{a \in C}} \beta(a) \\
&= \widetilde{\bigoperOR_{A \in \dsem{T}}} \widetilde{\bigoperAND_{a \in A}} \beta(a) \\
&= \bigoperOR_{A \in \dsem{T}} \bigoperAND_{a \in A} \alpha(a) \\
&=\metr{T}. \qedhere
\end{align*}
\end{enumerate}    
\end{proof}

To formulate the generalised version of \Cref{lem:dat1} we need the following notation. For a relation $R$ on a set $X$ denote by $\operatorname{tr}(R)$ its transitive closure. For two attacks $O_1 = \poset[A_1][\prec_1]$ and $O_2 = \poset[A_2][\prec_2]$ define an attack $P(O_1,O_2) = \poset$ given by
\begin{align}
A &= A_1 \cup A_2\\
{\prec} &= \operatorname{tr}({\prec_1} \cup {\prec_2})
\end{align}
if $\prec$ is a strict partial order; otherwise $P(O_1,O_2)$ is undefined. For two suites $\suite_1$ and $\suite_2$ define
\[
P(\suite_1,\suite_2) = \{P(O_1,O_2) \mid O_i \in \suite_i, P(O_1,O_2) \textrm{ exists}\}.
\]
Furthermore, for subsets $X,Y \subset \BAS$ define an attack $S_{X,Y}(O_1,O_2) = 
\poset$ by
\begin{align*}
A &= A_1 \cup A_2\\
{\prec} &= \operatorname{tr}({\prec_1} \cup {\prec_2} \cup ((X \cap A) \times (Y \cap A))
\end{align*}
if $\prec$ is a strict partial order. For two suites we likewise define
\[
S_{X,Y}(\suite_1,\suite_2) = \{S_{X,Y}(O_1,O_2) \mid O_i \in \suite_i, S_{X,Y}(O_1,O_2) \textrm{ exists}\}.
\]

Recall that $\BAS_{v}$ is the set of BAS descendants of $v$.

\begin{lemma} \label{lemma:dsem:app}
Consider a DAT with nodes $a \in \BAS$ and $v_1,v_2 \in N$. Then:
\begin{enumerate}
    \item $\dsemin{a} = \{ \poset[\{a\}][\emptyset] \}$; \label{lemma:dsem:app:BAS}
    \item $\dsemin{\OR(v_1,v_2)} \subseteq \dsemin{v_1} \cup \dsemin{v_2}$;%
			\label{lemma:dsem:app:OR}
	\item $\dsemin{\AND(v_1,v_2)} \subseteq P(\dsemin{v_1},\dsemin{v_2})$; \label{lemma:dsem:app:AND}
	\item $\dsemin{\SAND(v_1,v_2)} \subseteq S_{\BAS_{v_1},\BAS_{v_2}}(\dsemin{v_1},\dsemin{v_2})$; \label{lemma:dsem:app:SAND}
	\item all attacks on the right-hand side of 1--4 are succesful.
\end{enumerate}
If $T$ is of tree type then furthermore
\begin{enumerate}
\setcounter{enumi}{5}
    \item $A_1$ and $A_2$ are disjoint for each $\poset[A_1][\prec_1] \in \dsemin{v_1}$ and $\poset[A_2][\prec_2] \in \dsemin{v_2}$. In particular $\dsemin{v_1}$ and $\dsemin{v_2}$ are disjoint.
    \item The inclusions in 2--4 are equalities.
\end{enumerate}
\end{lemma}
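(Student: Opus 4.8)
The plan is to prove both halves of \Cref{lemma:dsem:app} by a case analysis on the type of the node $w$ heading each statement ($w=a$ a \BAS, or $w=\OR(v_1,v_2)$, $\AND(v_1,v_2)$, $\SAND(v_1,v_2)$), paralleling \Cref{lemma:ssem} but carried out in the poset setting, with the combinators $P(\cdot,\cdot)$ and $S_{\cdot,\cdot}(\cdot,\cdot)$ in place of set union. The single recurring tool I would isolate first is the locality of the structure function: for every node $v$ and attack $O=\poset[A][\prec]$, $f_T(v,O)$ depends only on $A\cap\BAS_v$ and on $\prec\cap(\BAS_v\times\BAS_v)$, and for fixed $A$ it is monotone in $\prec$. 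Locality gives at once that a minimal attack $\poset[A][\prec]\in\dsemin{v}$ has $A\subseteq\BAS_v$ and $\prec\subseteq A\times A$. The two trivial items then follow: $\poset[\{a\}][\emptyset]\leqslant O$ for all $O\in\Suc{a}$, so $\dsemin{a}=\{\poset[\{a\}][\emptyset]\}$; and if $O\in\dsemin{\OR(v_1,v_2)}$ then $O\in\Suc{v_i}$ for some $i$, and no $O'\lneq O$ can lie in $\Suc{v_i}$ (it would then lie in $\Suc{\OR(v_1,v_2)}$, contradicting minimality), so $O\in\dsemin{v_i}$; the $\OR$-part of the successfulness claim holds because every element of $\dsemin{v_i}$ reaches $v_i$, hence $\OR(v_1,v_2)$.

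For $w=\AND(v_1,v_2)$ or $\SAND(v_1,v_2)$ and $O=\poset[A][\prec]\in\dsemin{w}$: since $O$ reaches both children there are $O_i=\poset[A_i][\prec_i]\in\dsemin{v_i}$ with $O_i\leqslant O$; then $\prec_1\cup\prec_2\subseteq\prec$ forces $\operatorname{tr}(\prec_1\cup\prec_2)$ irreflexive, so $P(O_1,O_2)$ is defined with $P(O_1,O_2)\leqslant O$, and in the $\SAND$-case the ordering clause of \Cref{def:success} for $w$ additionally gives $(A_1\cap A)\times(A_2\cap A)\subseteq\prec$, so $S_{\BAS_{v_1},\BAS_{v_2}}(O_1,O_2)$ is defined with $S_{\BAS_{v_1},\BAS_{v_2}}(O_1,O_2)\leqslant O$. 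Hence the inclusions of items 3 and 4 reduce, by minimality of $O$, to the successfulness claim --- that these combinators land in $\Suc{w}$. For $\SAND$ this step is self-contained: if $S_{\BAS_{v_1},\BAS_{v_2}}(O_1,O_2)$ is defined its order is acyclic, which rules out every loop $(b,b)$ and forces $A\cap\BAS_{v_1}\cap\BAS_{v_2}=\emptyset$; then $A\cap\BAS_{v_1}=A_1$, $A\cap\BAS_{v_2}=A_2$, the only ``cross'' edges run from $A_1$ into $A_2$, and tracing paths shows the restriction of the combined order to $\BAS_{v_i}^2$ equals $\prec_i$; so $S_{\BAS_{v_1},\BAS_{v_2}}(O_1,O_2)$ restricts on each $\BAS_{v_i}$ to $O_i$, reaches $v_i$, and satisfies the top $\SAND$ clause by construction, hence reaches $w$. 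The $\AND$-case of the successfulness claim is the step I expect to be the main obstacle: ``$P(O_1,O_2)$ defined'' imposes no disjointness on $A_1,A_2$, so when $T$ is a genuine \DAG the basic steps of one $A_{3-i}$ may be descendants of $v_i$ and disturb a $\SAND$-gate inside $\T[v_i]$ --- precisely the failure of coherence discussed after \Cref{def:DAT:semantics}. Handling it requires choosing the $O_i$ more carefully (e.g.\ as the restriction of $O$ itself to $\BAS_{v_i}$, showing via a minimal-replacement argument that this restriction is minimal on $v_i$, and using that minimality of $O$ on $\AND(v_1,v_2)$ forces $\prec_O\subseteq\BAS_{v_1}^2\cup\BAS_{v_2}^2$ so that $P(O_1,O_2)=O$), together with a careful use of locality in the presence of the overlap $\BAS_{v_1}\cap\BAS_{v_2}$.

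For the tree-structured refinements (items 6 and 7) this overlap vanishes: tree structure gives $\BAS_{v_1}\cap\BAS_{v_2}=\emptyset$, whence the $A_i$ (and the $\dsemin{v_i}$) are pairwise disjoint. With disjoint supports the previous difficulty evaporates: the restriction of $P(O_1,O_2)$, resp.\ $S_{\BAS_{v_1},\BAS_{v_2}}(O_1,O_2)$, to $\BAS_{v_i}$ is again exactly $O_i$, so these combinators unconditionally reach $w$; running the reasoning of items 3 and 4 in reverse --- starting from an arbitrary element of the right-hand side, restricting it to the $\BAS_{v_i}$ to read off minimal attacks on the children, and invoking minimality on $w$ --- then upgrades the inclusions to the equalities of item 7.
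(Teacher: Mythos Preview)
Your overall plan matches the paper's proof almost verbatim: case analysis on the gate type, using that minimal attacks on children bound any minimal attack on the parent from below. Items 1, 2, 4, 6, 7 are argued essentially as in the paper.

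The genuine divergence is item~3 and the $\AND$-part of item~5, and here you are more careful than the paper. The paper picks arbitrary $O_i\in\dsemin{v_i}$ with $O_i\leqslant O$, forms $O'=P(O_1,O_2)\leqslant O$, and concludes $O=O'$ ``since $O$ is minimal''; item~5 is dispatched with ``follows directly from the definition of $f_T$''. You are right to flag this: the $\AND$-part of item~5 is in fact \emph{false} in the \DAG setting. For $v_1=\SAND(\OR(a,b),c)$, $v_2=b$, $v=\AND(v_1,v_2)$, the attack $P(\poset[\{a,c\}][\{(a,c)\}],\poset[\{b\}][\emptyset])=\poset[\{a,b,c\}][\{(a,c)\}]$ lies in $P(\dsemin{v_1},\dsemin{v_2})$ but fails to reach $v_1$ (the added $b$ breaks the \SAND clause). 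Consequently the paper's step ``$O'\leqslant O$ and $O$ minimal $\Rightarrow O=O'$'' is unjustified: $O'$ need not be successful, so minimality of $O$ among successful attacks says nothing.

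Your proposed repair, however, is not complete either. Taking $O_i=O|_{\BAS_{v_i}}$ does give $P(O|_1,O|_2)=O$ (your locality/monotonicity argument establishes that $P(O|_1,O|_2)$ is successful and $\leqslant O$). But the restriction is \emph{not} in general minimal on $v_i$: with $v_1=\OR(a,\SAND(b,c))$, $v_2=\SAND(a,b)$, the unique $O\in\dsemin{\AND(v_1,v_2)}$ is $\poset[\{a,b\}][\{(a,b)\}]$, while $O|_{\BAS_{v_1}}=O\notin\dsemin{v_1}$ (it is strictly above $\poset[\{a\}][\emptyset]$). So the ``minimal-replacement argument'' you allude to must still show that after passing to minimal $O_i\leqslant O|_i$ one retains $P(O_1,O_2)=O$; this does not follow from $P(O_1,O_2)\leqslant O$ alone, for exactly the non-coherence reason you identified. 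Both your sketch and the paper leave this gap open.
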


\begin{proof}
\begin{enumerate}
    \item By definition an attack $\poset$ reaches $a$ iff $a \in A$; hence the unique minimal attack on $a$ is $\poset[\{a\}][\emptyset]$.
    \item Suppose $O \in \dsemin{v}$ where $v = \OR(v_1,v_2)$. Since $O$ reaches $v$, it must reach either $v_1$ or $v_2$. Assume the former WLOG, and suppose that $O \notin \dsemin{v_1}$. Then there exists an $O' < O$ that reaches $\dsemin{v_1}$. But then $O'$ also reaches $v$, which contradicts the assumption $O \in \dsemin{v}$. We conclude that $O \in \dsemin{v_1}$.
    \item Suppose $O \in \dsemin{v}$ where $v = \AND(v_1,v_2)$. Since $O$ reaches $v$ it must reach both $v_1$ and $v_2$. Let $O_i \in \dsemin{v_i}$ be such that $O_i \leq O$ for $i = 1,2$. Then $O':= P(O_1,O_2)$ is the minimal attack satisfying $O_1,O_2 \leq O'$ if it exists; if not, then $O_1,O_2$ have no shared upper bound. It follows that $O'$ exists and $O' \leq O$. Since $O$ is minimal this implies $O = O'$.
    \item Suppose $O = \poset \in \dsemin{v}$ where $v = \SAND(v_1,v_2)$. Since $O$ activates $v$ there must exist $O_1 \in \dsemin{v_1}$ and $O_2 \in \dsemin{v_2}$ such that $O_1,O_2 \leq O$ and $a_1 \prec a_2$ for each $a_1 \in \BAS_{v_1} \cap A$ and $a_2 \in \BAS_{v_2} \cap A$. The minimal succesful attack with these properties is $O' = S_{\BAS_{v_1},\BAS_{v_2}}(O_1,O_2)$; hence $O' \leq O$. Since $O$ is minimal one has $O' = O$.
    \item This follows directly from the definition of the structure function $f_T$.
    \item If $T$ is of tree type then $\BAS_{v_1} \cap \BAS_{v_2} = \varnothing$, and so $A_1 \cap A_2 = \varnothing$ for $A_i \in \dsemin{v_i}$. In particular $A_1 \neq A_2$, and hence also $\dsemin{v_1} \cap \dsemin{v_2} = \varnothing$.
    \item Consider the case 2 first, and let $O = \poset \in \dsemin{v_1}$. Then $O$ reaches $v$, and we need to prove that it is minimal on $v$. Suppose $O' < O$ reaches $v$. Since $A \subset \BAS_{v_1}$ and $\BAS_{v_2} \cap \BAS_{v_1} = \varnothing$, we see that $O'$ does not reach $v_2$. Hence $O'$ must reach $v_1$; but then $O' < O$ contradicts the fact that $O \in \dsemin{v_1}$. It follows that $O$ is minimal.
    
    Now consider case 4, let $O_i = \poset[A_i][\prec_i] \in \dsemin{v_i}$ and consider $O = \poset =  S_{\BAS_{v_1},\BAS_{v_2}}(O_1,O_2)$. Then $O$ reaches $v$, and we need to prove that $O$ is minimal on $v$. Since $\BAS_{v_1} \cap \BAS_{v_2} = \varnothing$ and $A_i \subset \BAS_{v_i}$, the relation ${\prec_1} \cup {\prec_2} \cup (A_1 \times A_2)$ is transitively closed, and so ${\prec} = {\prec_1} \cup {\prec_2} \cup (A_1 \times A_2)$. Suppose $O' = \poset[A'][\prec'] < O$ reaches $v$. For $i = 1,2$ define $O_i' = \poset[A_i'][\prec'_i]$ by
    \begin{align*}
    A_i' &= A_i \cap A'\\
    \prec'_i &= \prec'|_{A_i'}.
    \end{align*}
    Since $O'$ reaches $v$ we know that $O'_i$ reaches $v_i$. Since $O' < O$ there are three options:
    \begin{enumerate}
        \item $A' \subsetneq A$: then either $A_1' \subsetneq A_1$ or $A_2' \subsetneq A_2$. Assume the former WLOG; then $O'_1 < O_1$. Since $O_1 \in \dsemin{v_1}$ this yields a contradiction. \label{pf:dsem:app:sand1}
        \item $A' = A$ and $(A_1 \times A_2) \not \subset {\prec'}$: then $O'$ is not succesful, which is a contradiction. \label{pf:dsem:app:sand2}
        \item $A' = A$ and $(A_1 \times A_2) \subset {\prec'}$ and ${\prec'} \subsetneq {\prec}$: then either ${\prec_1'}  \subsetneq {\prec_1}$ or ${\prec_2'}  \subsetneq {\prec_2}$. This is analogous to case \ref{pf:dsem:app:sand1}.
    \end{enumerate}
    In each case we find a contradiction, so we conclude that such $O'$ does not exist, and $O \in \dsemin{v}$.
    
    Equality in case 3 is proven analogously to case \ref{lemma:dsem:app:SAND}, except that case b) does not occur. \qedhere
\end{enumerate}
\end{proof}

Lemmas \ref{lemma:ssem} and \ref{lemma:dsem} are direct consequences of \Cref{lemma:dsem:app}:

\begin{replemma}{lemma:ssem}
	Consider a \SAT $T$ with nodes $a\in\BAS, v_1, v_2\in\ATnodes$. Then:
	\begin{enumerate}
	\def\REF#1{\textit{\ref{#1})}}
	\item	$\ssem{a} = \{ \{a\}\}$;
	\item	$\ssem{\OR(v_1,v_2)} \subseteq \ssem{v_1} \cup \ssem{v_2}$;
	\item	$\ssem{\AND(v_1,v_2)} \subseteq \{ \attack_1\cup \attack_2 \mid
				\attack_1\in\ssem{v_1} \land \attack_2\in\ssem{v_2} \}$;
	\item All RHS of cases \ref{lemma:ssem:BAS}--\ref{lemma:ssem:AND} consist of succesful attacks.
	\end{enumerate}
	If $T$ is of tree type then furthermore:
	\begin{enumerate}
	 \setcounter{enumi}{4}
	\item In cases \ref{lemma:dsem:OR} and \ref{lemma:dsem:AND}
			the \ssem{v_i} are disjoint, and in case \ref{lemma:dsem:AND}
			moreover the $A_i$ are pairwise disjoint;
	\item Equality holds in cases \ref{lemma:ssem:OR} and \ref{lemma:ssem:AND}.
	\end{enumerate}
\end{replemma}

\begin{proof} \label{lemma:ssem:proof}
Let $T_d$ be $T$ interpreted as a DAT. By \Cref{lemma:satisdat} we know that $\dsem{T} = \{A \mid \poset[A][\varnothing] \in \dsemin{T_d}\}$. Since
\[
P(\poset[A_1][\varnothing],\poset[A_2][\varnothing]) = \poset[A_1 \cup A_2][\varnothing]
\]
case \ref{lemma:ssem:AND} follows from case \ref{lemma:dsem:app:AND} of \Cref{lemma:dsem:app}. The other cases follow directly from \Cref{lemma:dsem:app}.
\end{proof}

\begin{replemma}{lemma:dsem}
	Consider a tree-structured \DAT with nodes ${a\in\BAS}$, ${v_1,v_2\in\ATnodes}$. Then:
	\begin{enumerate}
	\item	$\dsemin{a} = \{ \poset[\{a\}][\emptyset] \}$;%
			\label{lemma:dsem:BAS}
	\item	$\dsemin{\OR(v_1,v_2)} = \dsemin{v_1} \cup \dsemin{v_2}$;%
			\label{lemma:dsem:OR}
	\item	$\dsemin{\AND(v_1,v_2)} = \mbox{$\left\{
				\poset[\attack_1{\cup}\attack_2][{\prec_1}{\cup}{\prec_2}]
				\,|\, \poset[\attack_i][\prec_i]\in\dsemin{v_i}
			\right\}$}$;%
			\label{lemma:dsem:AND}
	\item	$\dsemin{\SAND(v_1,v_2)} = \{
				\poset[\attack_1\cup\attack_2 \,]%
				      [~{\prec_1}\cup{\prec_2}\cup{\attack_1\times\attack_2}]
				\cdots$ \\
				\hspace*{\stretch{1}} $\cdots
					\mid \mbox{$\poset[\attack_i][\prec_i]\in\dsemin{v_i}$}
			\}$;%
			\label{lemma:dsem:SAND}
	\item	In cases \REFS{lemma:dsem:OR}{lemma:dsem:SAND} above the
			\dsemin{v_i} are disjoint, and in cases \REF{lemma:dsem:AND} and
			\REF{lemma:dsem:SAND} moreover the $A_i$ are pairwise disjoint.
			\label{lemma:dsem:disjoint}
	\end{enumerate}
\end{replemma}
\begin{proof}	\label{lemma:dsem:proof}
We need to prove that if $T$ is of tree type, then $P(\poset[A_1][\prec_1],\poset[A_2][\prec_2])$ and $S_{\BAS_{v_1},\BAS_{v_2}}(\poset[A_1][\prec_1],\poset[A_2][\prec_2])$ exist and are defined by
\begin{align*}
&P(\poset[A_1][\prec_1],\poset[A_2][\prec_2]) \\
&= \poset[A_1 \cup A_2][{\prec_1}\cup{\prec_2}]\\
&S_{\BAS_{v_1},\BAS_{v_2}}(\poset[A_1][\prec_1],\poset[A_2][\prec_2]) \\
&= \poset[A_1 \cup A_2][{\prec_1}\cup{\prec_2}\cup(A_1 \times A_2)].
\end{align*}
We prove this for $S_{\BAS_{v_1},\BAS_{v_2}}(\poset[A_1][\prec_1],\poset[A_2][\prec_2]) =: \poset$, as the proof for $P(\poset[A_1][\prec_1],\poset[A_2][\prec_2])$ is analogous. Since $\BAS_{v_1} \cap \BAS_{v_2} = \varnothing$ we know that $A_i = A \cap \BAS_{v_i}$ for $i = 1,2$. Hence
\[
A_1 \times A_2 = (A \cap \BAS_{v_1}) \times (A \cap \BAS_{v_2}),
\]
and from the definition of $S_{X,Y}$ it suffices to show that  the relation ${\prec_1} \cup {\prec_2} \cup {(A_1 \times A_2)}$ is a strict partial order. But as $A_1$ and $A_2$ are disjoint this is a standard result.
\end{proof}




\section{Proofs for modular analysis} \label[appendix]{app:modular}

Because of \Cref{lemma:satisdat} it suffices to prove the theorem for DATs. The proof requires a bit of preparation in setting up the notation, but the overall structure is as follows:
\begin{enumerate}
    \item First, we show how the semantics of $T$ relate to that of $T^v$ and $T_v$ (\Cref{cor:dat}).
    \item Next, we discuss how this relates the maximal chains in attacks in $\dsemin{T}$ to those of attacks in $\dsemin{T^v}$ and $\dsemin{T_v}$ (\Cref{lemma:tchains}).
    \item Finally, we plug this into \eqref{eq:metrdat} to prove the theorem.
\end{enumerate}

We start off with a definition that expresses a useful way of combining two posets into one.

\begin{definition}
Let $O_1 = \poset[X_1][\prec_1]$ and $O_2 = \poset[X_2][\prec_2]$ be two nonempty posets, and let $x_1 \in X_1$. Define the \emph{insertion} $O_1[x_1/O_2]$ to be the poset $\poset[X][\prec]$ given by
\begin{align*}
X &= X_1 \setminus \{x_1\} \cup X_2\\
x \prec x' &\Leftrightarrow \begin{cases}
x \prec_1 x' & \textrm{if $x,x' \in X_1$}\\
x \prec_2 x' & \textrm{if $x,x' \in X_2$}\\
x \prec_1 x_1 & \textrm{if $x \in X_1$ and $x' \in X_2$}\\
x_1 \prec_1 x' & \textrm{ if $x \in X_2$ and $x' \in X_1$}.
\end{cases}
\end{align*}
\end{definition}

As the notation suggests, $O_1[x_1/O_2]$ is the poset obtained by replacing $x_1$ in $O_1$ by the entire poset $O_2$.

For a poset $O$, let $\operatorname{MC}_{O}$ be its set of maximal chains (which are posets themselves). We can find the maximal chains of an insertion as follows:

\begin{lemma} \label{lemma:inschains}
One has
\begin{align*}
&\operatorname{MC}_{O_1[x_1/O_2]}\\ 
& = \{C_1 \in \operatorname{MC}_{O_1} \mid x_1 \notin C_1\} \\
& \ \ \cup \{C_1[x_1/C_2] \mid x_1 \in C_1 \in \operatorname{MC}_{O_1}, C_2 \in \operatorname{MC}_{O_2}\}.
\end{align*}
\end{lemma}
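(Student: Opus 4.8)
The plan is to prove the set equality by working with the (equivalent, in the finite posets at hand) characterisation of a maximal chain as an \emph{inclusion-maximal} chain: a chain $C$ is maximal iff there is no element outside $C$ comparable to every element of $C$. Write $O = O_1[x_1/O_2] = \poset[X][\prec]$ with $X = (X_1\setminus\{x_1\})\cup X_2$, and observe that $x_1 \notin X$. The engine of the whole argument is a ``block'' observation read straight off the definition of $\prec$: for $a\in X_1\setminus\{x_1\}$ and $b\in X_2$, whether $a\prec b$, $b\prec a$, or the two are incomparable depends only on $a$ (via its position relative to $x_1$ in $O_1$), never on $b$; and for $x,x'\in X_1\setminus\{x_1\}$ one has $x\prec x'$ in $O$ iff $x\prec_1 x'$, while for $b,b'\in X_2$ one has $b\prec b'$ in $O$ iff $b\prec_2 b'$. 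Consequently, for any chain $C$ of $O$ the set $C\cap(X_1\setminus\{x_1\})$ splits as $C_1^-\sqcup C_1^+$ with $C_1^- = \{a : a\prec_1 x_1\}$ lying below all of $C\cap X_2$ and $C_1^+ = \{a : x_1\prec_1 a\}$ lying above it, and the order inside each of the three blocks is the restriction of $\prec_1$, resp.\ $\prec_2$. I would record this small lemma first.

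For the inclusion ``$\subseteq$'', take a maximal chain $C$ of $O$. If $C\cap X_2=\varnothing$, then $C\subseteq X_1\setminus\{x_1\}$ is a chain of $O_1$ avoiding $x_1$; it is maximal in $O_1$, for an extending element $a\in X_1$ is either $a\ne x_1$ (and then already extends $C$ in $O$) or $a=x_1$, in which case $x_1$ is $O_1$-comparable to each $c\in C$, and then any $b\in X_2$ (nonempty, since $O_2$ is nonempty) is $O$-comparable to each $c\in C$ by the block property, contradicting maximality of $C$ in $O$. If instead $C\cap X_2\ne\varnothing$, put $C_2 = C\cap X_2$ and $C_1 = C_1^-\cup\{x_1\}\cup C_1^+$; the block description shows $C_1$ is a chain of $O_1$ through $x_1$, $C_2$ a chain of $O_2$, and $C = C_1[x_1/C_2]$. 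Maximality of $C_1$ in $O_1$ and of $C_2$ in $O_2$ each follow by the same move: a hypothetical extension of $C_1$ in $O_1$ (being $O_1$-comparable to $x_1$) or of $C_2$ in $O_2$ is, via the block property, $O$-comparable to all of $C$, contradicting maximality of $C$.

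For ``$\supseteq$'', I would verify the two displayed families are maximal chains of $O$. If $C_1$ is a maximal chain of $O_1$ with $x_1\notin C_1$, it is a chain of $O$; an $O$-extension $y\in X$ is either in $X_1\setminus\{x_1\}$ (directly contradicting maximality of $C_1$ in $O_1$) or in $X_2$, and then comparability of $y$ to each $c\in C_1$ forces each $c$ to be $O_1$-comparable to $x_1$, so $C_1\cup\{x_1\}$ would be a chain of $O_1$ — impossible. If $C_1$ is a maximal chain of $O_1$ with $x_1\in C_1$ and $C_2$ is a maximal chain of $O_2$, the block description shows $C_1[x_1/C_2]$ is a chain of $O$, and an $O$-extension $y$ is either in $X_1\setminus\{x_1\}$ — then, using comparability with any element of $C_2$, $y$ is $O_1$-comparable to $x_1$ and hence extends $C_1$ in $O_1$ — or in $X_2$, extending $C_2$ in $O_2$; both are impossible. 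Finally the two families are disjoint (members of the first avoid $X_2$, members of the second meet $X_2$ since a maximal chain of $O_2$ is nonempty), so their union is exactly $\operatorname{MC}_{O_1[x_1/O_2]}$.

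The main obstacle I expect is not conceptual but the bookkeeping around degenerate cases: $C_1^-=\varnothing$ or $C_1^+=\varnothing$ (i.e.\ $x_1$ minimal or maximal within $C_1$) and the extreme case $O_1=\{x_1\}$ (where $O_1[x_1/O_2]=O_2$). Using the inclusion-maximal formulation of ``maximal chain'' is precisely what tames these cases, since it removes the need to track covering pairs and the minimal/maximal endpoints of $C_1[x_1/C_2]$ explicitly; if one instead argued via the ``minimal-to-maximal with direct successors'' definition, one would have to spell out the covering relations of $O$ in all four type-combinations and handle the empty blocks separately.
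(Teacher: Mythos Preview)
Your proof is correct and follows essentially the same approach as the paper: both argue by splitting on whether the chain meets $X_2$, isolate the contiguous $X_2$-block, reconstruct $C_1$ and $C_2$, and transfer maximality by showing that any extending element in $O$ yields an extending element in $O_1$ or $O_2$ (and vice versa). Your version is more explicit about the ``block'' lemma and the inclusion-maximal characterisation, whereas the paper relies on the linear enumeration $(c_1,\ldots,c_n)$ of a chain to make the same points more tersely; one minor imprecision in your write-up is that the decomposition $C\cap(X_1\setminus\{x_1\}) = C_1^-\sqcup C_1^+$ only holds once $C\cap X_2\neq\varnothing$, which is exactly where you use it.
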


\begin{proof}
Let $C = (c_1,\ldots,c_n) \in \operatorname{MC}_{O_1[x_1/O_2]}$. If $X_2 \cap C = \varnothing$, then $C$ is also a maximal chain of $O_1$ not containing $x_1$. Now suppose $C 
\cap X_2 \neq \varnothing$. By the nature of the poset $O_1[x_1/O_2]$, if $c_i \in X_1$, and $j < i < k$, then it is not possible that both $c_j,c_k \in X_2$. It follows that all the elements of $X_2 \cap C$ form a single block $C_2 = (c_i,\ldots,c_j)$; then $C_2$ is a chain in $X_2$. Furthermore, $C_1 = (c_1,\ldots,c_{i-1},x_1,c_{j+1},\ldots,c_n)$ is a chain in $X_1$, and $C = C_1[x_1/C_2]$. Furthermore, if $C_1$ and $C_2$ were not maximal, then at least one of them could be extended. This would also extend $C_1[x_1/C_2]$, so $C$ would not be maximal. This shows ``$\subset$''.

Conversely, let $C_1$ be a maximal chain of $O_1$ not containing $x_1$; this is also a chain of $O_1$. Suppose $C_1$ can be extended by an $x \in O_1[x_1/O_2]$. If $x \in X_1$, then $C_1$ could be extended by $x$; if not, then $C_1$ could be extended by $x_1$. Either way, $C_1$ is not maximal, which is a contradiction. Similarly, if $C_1$ is a maximal chain of $O_1$ containing $x_1$, and $C_2$ is a maximal chain of $O_2$, then one can show that $C_1[x_1/C_2]$ is maximal. This shows ``$\supset$''.
\end{proof}

Next, we define a map between attacks on $T_v$ and $T^v$ on one side, and attacks on $T$ on the other side. To describe the image of this map, we first need some more notation. Define $B := \BAS_{v}$, and define a map $r\colon \allAttacks_T \rightarrow \allAttacks_T$, where $\poset[A'][\prec'] = r(\poset[A][\prec])$ is given by $A' = A$ and
\begin{align*}
a_1 \prec' a_2 &\Leftrightarrow \begin{cases}
a_1 \prec a_2, & \textrm{if $a_1,a_2 \in B$ or $a_1,a_2 \notin B$}\\
\forall w \in B\colon w \prec a_2 &\textrm{if $a_1 \in B$ and $a_2 \notin B$}\\
\forall w \in B\colon a_1 \prec w &\textrm{if $a_1 \notin B$ and $a_2 \in B$}.
\end{cases}
\end{align*}

The intuition is that we remove all relations between elements of $B$ and $A\setminus B$ that are not shared with all elements of $B$.

\begin{lemma}
Let $O \in \allAttacks_T$.
\begin{enumerate}
    \item $r(O) \leq O$;
    \item $r(r(O)) = r(O)$;
    \item If $O$ is succesful then so is $r(O)$.
\end{enumerate}
\end{lemma}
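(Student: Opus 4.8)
The plan is to treat the three claims in order: parts 1 and 2 are direct verifications against the definition of $r$, whereas part 3 is the substantive claim, and is where the module hypothesis on $v$ enters. Throughout I would use that $B = \BAS_{v} \neq \varnothing$, since a module is an internal node and hence has at least one \tBAS below it. For part 1, write $r(O) = \poset[A][\prec']$. The underlying set is unchanged, so by the definition of $\leqslant$ in \Cref{def:attack} it suffices to show $\prec' \subseteq \prec$; and to know that $r(O)$ is a genuine attack one must also check that $\prec'$ is a strict partial order. Inclusion $\prec' \subseteq \prec$ is immediate on the two homogeneous cases of the definition of $\prec'$, and on each mixed case one instantiates the quantifier ``$\forall w \in B$'' at the endpoint that already lies in $B$ by hypothesis, obtaining $a_1 \prec a_2$. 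Irreflexivity of $\prec'$ is then inherited from $\prec$, and transitivity is a short case split on which of three points lie in $B$: in each case one expands the two hypotheses $a \prec' b$ and $b \prec' c$ via the definition of $\prec'$ and recombines them --- using $B \neq \varnothing$ in the cases where neither outer point lies in $B$ --- to get $a \prec' c$. This gives $r(O) \leqslant O$.

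For part 2, set $r(r(O)) = \poset[A][\prec'']$. On homogeneous pairs $\prec''$, $\prec'$ and $\prec$ all coincide, so nothing is needed there. On a mixed pair, say $a_1 \in B$ and $a_2 \notin B$, the definition gives $a_1 \prec'' a_2 \Leftrightarrow \forall w \in B\colon w \prec' a_2$; but each conjunct $w \prec' a_2$ is itself a mixed relation and unfolds to $\forall w' \in B\colon w' \prec a_2$, a statement that no longer mentions $w$. Hence the whole thing collapses (using $B \neq \varnothing$) to $\forall w' \in B\colon w' \prec a_2$, i.e.\ to $a_1 \prec' a_2$; the remaining mixed case is symmetric. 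Therefore $\prec'' = \prec'$, so $r$ is idempotent.

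Part 3 is the main obstacle, and I would prove the stronger statement that $f_T(u,O) = \top$ implies $f_T(u,r(O)) = \top$ for every node $u$, by induction on the structure of $T$ following \Cref{def:success}. The \tBAS case is trivial because $A$ is unchanged, and the \tOR and \tAND cases follow at once from the induction hypothesis since those gates ignore the order. The heart of the matter is a \tSAND gate $u = \SAND(u_1,\ldots,u_n)$: assuming its order side-condition holds for $\prec$, one must show it holds for $\prec'$, i.e.\ that for consecutive children and $a \in A \cap \BAS_{u_j}$, $a' \in A \cap \BAS_{u_{j+1}}$ the pair $a \prec a'$ survives in $\prec'$. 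By construction $r$ can delete $a \prec a'$ only when exactly one of $a,a'$ lies in $B$; say $a \in B$ and $a' \notin B$, the other case being symmetric. Here the module property is invoked: $a$ is a common descendant of $v$ and of $u_j$, so $\T[v] \cap \T[u_j] \neq \varnothing$, whence $\T[v] \subseteq \T[u_j]$ or $\T[u_j] \subseteq \T[v]$, i.e.\ $\BAS_{u_j} \subseteq B$ or $B \subseteq \BAS_{u_j}$.

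If $B \subseteq \BAS_{u_j}$, then every $w \in B \cap A$ is a descendant of $u_j$, so the \tSAND condition for $O$ at $u$ already gives $w \prec a'$ for all such $w$; hence the defining clause ``$\forall w \in B\colon w \prec a'$'' of $r$ is satisfied and $a \prec' a'$. If instead $\BAS_{u_j} \subseteq B$, one uses that every edge entering $\T[v]$ passes through $v$ to locate the \tSAND gate --- possibly $u$ itself --- whose two consecutive children's \tBAS-descendant sets are separated by $v$, and applies its order condition for $O$ to the same effect. In all configurations $a \prec' a'$, so $u$ succeeds under $r(O)$; taking $u = \ATroot$ then closes the argument. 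The delicate point, where I expect to spend the most care, is exactly this DAG bookkeeping: enumerating how $\BAS_{u_j}$ and $\BAS_{u_{j+1}}$ can each sit relative to $B$ (contained in, containing, or disjoint from it), pinning down where $v$ lies with respect to $u$, and confirming that none of these configurations actually removes an order constraint that the success of a \tSAND gate requires.
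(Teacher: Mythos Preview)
Your handling of parts 1 and 2 is correct and considerably more explicit than the paper, which dispatches them as ``immediate.'' Your inductive strategy for part 3 is also the right shape, and the first subcase (when $B \subseteq \BAS_{u_j}$) is argued correctly: this is exactly the uniformity the paper's one-line proof alludes to when it says the cross-$B$ constraints imposed by \SAND gates are ``shared between all elements.''

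The gap is in your second subcase, $\BAS_{u_j} \subseteq B$ with $B \not\subseteq \BAS_{u_j}$. There you invoke the principle ``every edge entering $\T[v]$ passes through $v$,'' but this is the paper's stated \emph{intuition} for modules, not a consequence of the formal definition $\T[v] \cap \T[w] \in \{\T[v], \T[w], \varnothing\}$. Counterexample: let $u$ have children $v$ and $b$, and let $v$ have the single child $b$ (a \BAS). Then $\T[v] = \{v,b\}$, and for every node $w$ one checks that $\T[v] \cap \T[w]$ is one of $\T[v], \T[w], \varnothing$; yet the edge $u \to b$ enters $\T[v]$ without passing through $v$. So the premise of your sketch fails, and the plan to ``locate the \SAND gate whose children are separated by $v$'' cannot be carried out from the module definition alone.

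Fortunately this second subcase never actually occurs. With $a \in \BAS_{u_j} \cap B$ and $a' \in \BAS_{u_{j+1}} \setminus B$, the module property applied to $\T[v] \cap \T[u]$ forces $\T[v] \subseteq \T[u]$ (the other inclusion would put $a'$ in $B$). Thus $v$ lies below some child $u_k$ of $u$, so $B \subseteq \BAS_{u_k}$ and in particular $a \in \BAS_{u_j} \cap \BAS_{u_k}$. If $j \neq k$, chaining the \SAND order constraints at $u$ through the intermediate sets $\BAS_{u_l} \cap A$ (each nonempty, since $f_T(u_l,O) = \top$) yields $a \prec a$, contradicting strictness of $\prec$. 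Hence $k = j$ and $B \subseteq \BAS_{u_j}$ after all, which puts you back in your first subcase. So the right move is not to argue the second subcase but to observe that it is vacuous.
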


\begin{proof}
Points 1 and 2 are immediate. For point 3 we note that since $v$ is a module, the constraints that \Cref{def:success} puts on the relation $\prec$ between elements of $B$ and $A \setminus B$, are shared between all elements of $A \setminus B$; hence $O$ satisfies such constraints if and only if $r(O)$ does.
\end{proof}

Define the sets
\begin{align*}
\mathscr{R}_T &= r(\allAttacks_T)\\
\allAttacks_{T^v}^+ &= \{\poset[A^v][\prec^v] \in \allAttacks_{T^v}: \tilde{v} \in A^v\}\\
\allAttacks_{T^v}^- &= \allAttacks_{T^v} \setminus \allAttacks_{T^v}^+\\
\allAttacks_{T_v}^{>0} &= \allAttacks_{T_v} \setminus \{\poset[\varnothing][\varnothing]\}.
\end{align*}
Then we define a map
\begin{align}
h\colon \allAttacks_{T^v}^- \cup (\allAttacks_{T^v}^+\times \allAttacks_{T_v}^{>0}) &\rightarrow \mathscr{R}_T \nonumber\\
\allAttacks_{T^v}^- \ni O^v &\mapsto O^v \nonumber\\
\allAttacks_{T^v}^+\times \allAttacks_{T_v}^{>0} \ni (O^v,O_v) &\mapsto O^v[\tilde{v}/O_v]. \nonumber
\end{align}
Furthermore, define a partial order $\leq$ on $\allAttacks_{T^v}^- \cup (\allAttacks_{T^v}^+ \times \allAttacks_{T_v}^{>0})$ by embedding it into $\allAttacks_{T^v} \times \allAttacks_{T_v}$, identifying $\allAttacks_{T^v}^-$ with $\allAttacks_{T^v}^- \times \{\poset[\varnothing][\varnothing]\}$. The following two lemmas show that $h$ is an order-preserving, success-preserving bijection:

\begin{lemma} \label{lem:dat1}
The map $h$ is an isomorphism of posets.
\end{lemma}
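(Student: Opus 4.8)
The plan is to show that the map $h$ is a bijection that both preserves and reflects the partial order $\leqslant$; this suffices, since a poset isomorphism is precisely an order-preserving bijection whose inverse is also order-preserving, and for posets it is enough to check that $x \leqslant y \Leftrightarrow h(x) \leqslant h(y)$. I would organize the argument around the structure of the map: $h$ is defined piecewise, with one piece being the ``identity-like'' inclusion $\allAttacks_{T^v}^- \hookrightarrow \mathscr{R}_T$ and the other the insertion $(O^v,O_v) \mapsto O^v[\tilde v / O_v]$.

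First I would establish injectivity. Given an element $O = \poset \in \mathscr{R}_T$ in the image, one recovers whether it came from $\allAttacks_{T^v}^-$ or from $\allAttacks_{T^v}^+ \times \allAttacks_{T_v}^{>0}$ by checking whether $A \cap B = \varnothing$ (where $B = \BAS_v$): in the first case $A \cap B = \varnothing$, in the second $A \cap B = A_v \neq \varnothing$ since $O_v \in \allAttacks_{T_v}^{>0}$. In the second case, $A_v$ is exactly $A \cap B$ with the induced order, and $O^v$ is recovered by contracting all of $B$ back to the single node $\tilde v$; the fact that $O \in \mathscr{R}_T$, i.e.\ $O = r(O)$, guarantees that the relations between $B$ and $A \setminus B$ are ``uniform'' across $B$, so this contraction is well-defined and yields a genuine strict partial order on $A^v$. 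This gives a two-sided inverse on the image, hence injectivity. Surjectivity onto $\mathscr{R}_T$ then follows: any $O \in \mathscr{R}_T$ with $A \cap B = \varnothing$ is literally an attack on $T^v$ in $\allAttacks_{T^v}^-$ (no occurrence of $\tilde v$), and any $O$ with $A \cap B \neq \varnothing$ is $h(O^v, O_v)$ for the $O^v, O_v$ just described; one must check $O^v$ lies in $\allAttacks_{T^v}^+$ and $O_v \in \allAttacks_{T_v}^{>0}$, which is immediate.

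Next I would verify that $h$ and $h^{-1}$ preserve order. For $O_1, O_2 \in \allAttacks_{T^v}^-$ this is trivial since $h$ is the inclusion. The interesting cases are when at least one argument has a nonempty $T_v$-component. Writing $O_i = (O_i^v, O_i{}_v)$ (with the convention $O_i{}_v = \poset[\varnothing][\varnothing]$ when $O_i \in \allAttacks_{T^v}^-$), one checks from the explicit description of the insertion poset that $h(O_1) \leqslant h(O_2)$ in $\allAttacks_T$ — i.e.\ underlying sets include and orders include — holds if and only if $A_1^v \subseteq A_2^v$, $\prec_1^v \subseteq \prec_2^v$, $A_1{}_v \subseteq A_2{}_v$ and $\prec_1{}_v \subseteq \prec_2{}_v$, which is exactly $O_1 \leqslant O_2$ in the product order. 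The forward direction uses that in $O^v[\tilde v / O_v]$ the relations inside the inserted copy of $O_v$, the relations outside, and the relations crossing $\tilde v$ are all recorded separately and disjointly (because $B$ and $A^v \setminus \{\tilde v\}$ are disjoint in $T^v$, as $v$ is a module); the reverse direction is a direct check that each defining clause of the order on $O_1^v[\tilde v/O_1{}_v]$ is contained in the corresponding clause for $O_2^v[\tilde v/O_2{}_v]$.

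I expect the main obstacle to be the bookkeeping in the order-reflection step: one has to be careful that when $h(O_1) \leqslant h(O_2)$ but, say, $O_1 \in \allAttacks_{T^v}^-$ while $O_2 \in \allAttacks_{T^v}^+ \times \allAttacks_{T_v}^{>0}$, the inclusion $A_1 \subseteq A_2$ of underlying sets of $T$-attacks really does force $A_1 \subseteq A_2^v$ at the level of $T^v$-attacks (it does, because $A_1$ avoids $B$ entirely, so intersecting with $A^v$ changes nothing), and that no ``spurious'' crossing relation can appear. The module hypothesis on $v$ is doing the essential work throughout — it is what makes $r$ idempotent with the stated image and what makes the contraction/insertion operations mutually inverse — so I would make sure to invoke it explicitly at each point rather than leaving it implicit. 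Once the bijection and the biconditional $O_1 \leqslant O_2 \Leftrightarrow h(O_1) \leqslant h(O_2)$ are in hand, the conclusion that $h$ is an isomorphism of posets is immediate.
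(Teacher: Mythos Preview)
Your proposal is correct and follows essentially the same approach as the paper: establish bijectivity by exhibiting the explicit inverse (restrict to $B$ to recover $O_v$, contract $B$ to $\tilde v$ to recover $O^v$, using $O=r(O)$ to guarantee the contraction yields a well-defined partial order), then verify order-compatibility case by case across the two pieces of the domain. You are somewhat more careful than the paper in spelling out the order-\emph{reflection} direction, while the paper is more explicit than you about first checking that the image of $h$ actually lands in $\mathscr{R}_T$; neither difference is substantive.
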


\begin{proof}
First, we note that for $(O^v,O_v) \in \allAttacks_{T^v}^+ \times \allAttacks_{T_v}^{>0}$ one has $r(h(O^v,O_v)) = h(O^v,O_v)$, and for $O^v \in \allAttacks_{T^v}^-$ one has $r(h(O^v)) = h(O^v)$. Therefore the image of $h$ lies in $\mathscr{R}_T$, and $h$ is well-defined.

Next we prove injectivity. If $O = \poset[A][\prec]$ is in the image of $h$, then $O \in h(\allAttacks_{T^v}^-)$ if and only if $A \cap B = \varnothing$. It is clear that $h$ is injective on $\allAttacks_{T^v}^-$, therefore it suffices to show that $h$ is injective on $\allAttacks_{T^v}^+ \times \allAttacks_{T_v}^{>0}$. If $O = \poset[A][\prec] = h(O^v,O_v)$, then one can recover $O^v = \poset[A^v][\prec^v]$ and $O_v = \poset[A_v][\prec_v]$ from $O$ by
\begin{align}
A^v &= A \setminus B \cup \{\tilde{v}\}, \label{eq:datmodproof1}\\
a \prec^v a' &\Leftrightarrow \begin{cases}
a \prec a',& \textrm{ if $a,a' \neq \tilde{v}$}\\
\forall w \in A \cap B\colon a \prec w,& \textrm{ if $a \neq \tilde{v} = a'$}\\
\forall w \in A \cap B\colon w \prec a',& \textrm{ if $a \neq \tilde{v} = a'$}
\end{cases} \label{eq:datmodproof2}\\
A_v &= A \cap B \label{eq:datmodproof3}\\
\prec_v &= \prec|_{A_v}. \label{eq:datmodproof4}
\end{align}
This shows that $h$ is injective. For surjectivity, let $\poset[A][\prec] \in \mathscr{R}_T$ and define $O^v = \poset[A^v][\prec^v]$ and $O_v = \poset[A_v][\prec_v]$ as in \eqref{eq:datmodproof1}--\eqref{eq:datmodproof4}; then $h(O^v,O_v) = \poset[A][\prec]$.

It is clear that $h$ is order-preserving on both $\allAttacks_{T^v}^-$ and $\allAttacks_{T^v}^+\times \allAttacks_{T_v}^{>0}$ individually. Now let $O^v_1 \in \allAttacks_{T^v}^-$, $O^v_2 \in \allAttacks_{T^v}^+$ and $O_v \in \allAttacks_{T_v}^{>0}$ be such that $O^v_1 \leq (O^v_2,O_v)$. Then $O^v_1 \leq O^v_2$ in the poset $\allAttacks_{T^v}$. Since $h(O^v_1) = O^v_1$ and $h(O^v_2,O_v)$ contains $O^v_2$ as a suborder, one has $h(O^v_1) \leq h(O^v_2,O_v)$. This proves that $h$ preserves the partial order.
\end{proof}

The following lemma follows directly from the definition of the structure function (see \Cref{def:success}).

\begin{lemma} \label{lem:dat2}
\begin{enumerate}   
    \item Let $O^v \in \allAttacks_{T^v}^-$. Then $h(O^v)$ is succesful on $T$ if and only if $O^v$ is succesful on $T^v$.
    \item Let $O^v \in \allAttacks_{T^v}^+$ and $O_v \in \allAttacks_{T_v}$. Then $h(O^v,O_v)$ is succesful if and only if either $O^v \setminus \{\tilde{v}\}$ is succesful, or if both $O^v$ and $O_v$ are succesful.
\end{enumerate}
\end{lemma}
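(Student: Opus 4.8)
The plan is to reduce the statement to a comparison of structure functions: since being successful means $f_T(\ATroot,O)=\top$ (resp.\ $f_{T^v}(R_{T^v},\cdot)=\top$ or $f_{T_v}(v,\cdot)=\top$), it suffices to track how $f_T$ evaluated at $O\doteq h(O^v,O_v)=\poset[A][\prec]$ relates to $f_{T^v}$ evaluated at $O^v$ and $f_{T_v}$ evaluated at $O_v$, and then to read off the two parts of the lemma. Write $B\doteq\BAS_v$. Two structural facts set everything up. First, since $v$ is a module it is the only node of $T_v$ with parents outside $T_v$, so $T$ is $T_v$ glued to $T^v$ along the single node $v\leftrightarrow\tilde{v}$; in particular, by the definition of the insertion $O^v[\tilde{v}/O_v]$ and of $h$, the restriction of $O$ to $B$ (with the induced order) is exactly $O_v$, whence $f_T(w,O)=f_{T_v}(w,O_v)$ for every node $w$ of $T_v$, and in particular $f_T(v,O)=f_{T_v}(v,O_v)$. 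Second, $O=h(O^v,O_v)$ lies in $\mathscr{R}_T$ (this is checked in the proof of \Cref{lem:dat1}), so the relation $\prec$ of $O$ between a $B$-element and a non-$B$-element of the attack is \emph{uniform} across $B\cap A$; and via the recovery formulas \eqref{eq:datmodproof1}--\eqref{eq:datmodproof4} this collective relation is precisely $\tilde{v}\prec^v a'$ (or $a'\prec^v\tilde{v}$) in $O^v$.

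Given this, I would prove by structural induction, over the nodes of $T^v$ other than $\tilde{v}$ (equivalently, the nodes of $T$ outside $T_v$), that $f_T(w,O)=f_{T^v}(w,\bar{O}^v)$, where $\bar{O}^v$ is the attack on $T^v$ that agrees with $O^v$ except that $\tilde{v}$ is kept in the attack exactly when $f_{T_v}(v,O_v)=\top$. The base case is the interface node: $f_T(v,O)=f_{T_v}(v,O_v)=f_{T^v}(\tilde{v},\bar{O}^v)$ by the first structural fact and the definition of $\bar{O}^v$. For $\type{w}\in\{\tOR,\tAND\}$ the step is immediate from the induction hypothesis, using that every child of $w$ is either outside $T_v$ or equal to $v$. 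Evaluating at the root gives $f_T(\ATroot,O)=\top$ iff $f_{T^v}(R_{T^v},\bar{O}^v)=\top$; splitting on the value of $f_{T_v}(v,O_v)$ — which is the proposition ``$O_v$ is successful on $T_v$'', and which when true forces $\bar{O}^v=O^v$ and when false forces $\bar{O}^v=O^v\setminus\{\tilde{v}\}$ — yields exactly the two disjuncts of part~2. Part~1 is the special case $O^v\in\allAttacks_{T^v}^-$: there $A\cap B=\emptyset$, so $f_{T_v}(v,O_v)=\bot$, hence $\bar{O}^v=O^v=h(O^v)$ and the induction gives the claim directly.

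The main obstacle is the inductive step at a $\tSAND$ gate $w$ sitting above $v$: one must show that the ordering condition of \Cref{def:success} holds for $w$ in $T$ under $O$ iff it holds for $w$ in $T^v$ under $\bar{O}^v$. The difficulty is that $\desc_{ch(w)_i}$ computed in $T$ contains all of $B$ whenever $v$ is a descendant of $ch(w)_i$, whereas in $T^v$ it contains just $\tilde{v}$; so the condition in $T$ quantifies over all of $B\cap A$ where the one in $T^v$ quantifies over $\{\tilde{v}\}$. This is precisely where the $\mathscr{R}_T$-uniformity is used: the translation $(\forall w'\in B\cap A.\ w'\prec a')\Leftrightarrow(\tilde{v}\prec^v a')$, together with its dual, makes the two quantified conditions coincide on $\bar{O}^v$, while all relations between nodes outside $T_v$ are literally unchanged by the insertion. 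Some care is still needed for the degenerate configurations — a child of $w$ from which $v$ is reachable through two distinct children of $w$, and the case $O_v=\poset[\emptyset][\emptyset]$ — but in each of these both sides of the equality are $\bot$ for the same structural reason, so they cause no trouble.
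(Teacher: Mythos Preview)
Your structural-induction plan is considerably more detailed than the paper's own argument, which is a single sentence (``follows directly from the definition of the structure function''). The decomposition via the auxiliary attack $\bar{O}^v$ is natural and correctly isolates the \SAND case as the only nontrivial one.

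However, the inductive step at a \SAND gate has a genuine gap that your listed ``degenerate configurations'' do not cover: the case where $O_v$ is nonempty but \emph{not} successful on $T_v$. In that situation $\tilde{v}\notin\bar{A}^v$, so the \SAND ordering constraint in $T^v$ for $\bar{O}^v$ simply omits $\tilde{v}$; but on the $T$-side the attack $O=h(O^v,O_v)$ still contains elements of $B\cap A=A_v\neq\varnothing$, and the \SAND condition of \Cref{def:success} quantifies over \emph{all} of $A\cap\desc_{ch(w)_i}$, including these ``junk'' elements. Your invariant $f_T(w,O)=f_{T^v}(w,\bar{O}^v)$ therefore fails. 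Concretely, take $T=\SAND(\OR(v,d),a)$ with $v=\AND(b,c)$, $O^v=\poset[\{\tilde{v},d,a\}][\{(d,a)\}]$, and $O_v=\poset[\{b\}][\varnothing]$. Then $O=h(O^v,O_v)=\poset[\{b,d,a\}][\{(d,a)\}]$ is \emph{not} successful on $T$ (the root \SAND requires $b\prec a$, which fails), whereas $\bar{O}^v=O^v\setminus\{\tilde{v}\}=\poset[\{d,a\}][\{(d,a)\}]$ \emph{is} successful on $T^v$.

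This example also shows that the lemma as literally stated is false: here $O^v\setminus\{\tilde{v}\}$ is successful on $T^v$ while $h(O^v,O_v)$ is not successful on $T$, contradicting the ``$\Leftarrow$'' direction of part~2. So the gap is not in your method but in the statement itself; the paper's one-line justification glosses over exactly this non-coherence phenomenon (cf.\ the discussion after \Cref{def:DAT:semantics}). Fortunately this does not affect the downstream \Cref{cor:dat} and \Cref{thm:mod}: in the characterisation of $\dsemin{T}$ the problematic pairs $(O^v,O_v)$ with $O_v$ unsuccessful never yield minimal attacks, and the corollary's proof already discards the disjunct ``$O^v\setminus\{\tilde{v}\}$ successful'' on minimality grounds. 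A corrected version of part~2 that your induction \emph{would} establish is: $h(O^v,O_v)$ is successful iff $O_v$ is successful and $O^v$ is successful, or $O_v$ is unsuccessful and $O^v$ (not $O^v\setminus\{\tilde{v}\}$) satisfies the \SAND ordering constraints while $O^v\setminus\{\tilde{v}\}$ satisfies the activation constraints---which is enough for \Cref{cor:dat}.
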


From these two lemmas we get the following result:

\begin{corollary} \label{cor:dat}
Define $\dsemin{T^v}^{+} = \dsemin{T^v} \cap \allAttacks_{T_v}^{+}$ and likewise $\dsemin{T^v}^-$. Then $h$ induces a bijection 
\begin{equation*}
h\colon \dsemin{T^v}^- \cup (\dsemin{T^v}^+ \times \dsemin{T_v}) \stackrel{\sim}{\longrightarrow} \dsemin{T}.
\end{equation*}
\end{corollary}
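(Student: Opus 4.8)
The plan is to deduce this bijection from the poset isomorphism $h\colon \mathscr{D} \to \mathscr{R}_T$ of \Cref{lem:dat1}, where $\mathscr{D} := \allAttacks_{T^v}^- \cup (\allAttacks_{T^v}^+ \times \allAttacks_{T_v}^{>0})$. Since an order isomorphism carries the minimal elements of any subset bijectively onto the minimal elements of its image, it suffices to (i) show that every minimal successful attack on $T$ already lies in $\mathscr{R}_T$, and (ii) identify, inside $\mathscr{D}$, the minimal elements among those $x$ with $h(x)$ successful, using \Cref{lem:dat2}.

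For (i): given $O \in \dsemin{T}$, the map $r$ satisfies $r(O) \leqslant O$ and preserves success, so minimality of $O$ forces $O = r(O) \in \mathscr{R}_T$; the same argument shows conversely that every minimal element of $\Suc{T} \cap \mathscr{R}_T$ is already minimal in all of $\Suc{T}$, because an $O' < O$ that is successful would give $r(O') < O$ with $r(O') \in \Suc{T} \cap \mathscr{R}_T$. Hence $\min(\Suc{T} \cap \mathscr{R}_T) = \dsemin{T}$, and $h$ restricts to a bijection $\min(\mathscr{U}) \to \dsemin{T}$, where $\mathscr{U} := \{x \in \mathscr{D} : h(x) \in \Suc{T}\}$. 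For (ii), \Cref{lem:dat2} describes $\mathscr{U}$ explicitly: on $\allAttacks_{T^v}^-$ it equals $\Suc{T^v} \cap \allAttacks_{T^v}^-$, and on $\allAttacks_{T^v}^+ \times \allAttacks_{T_v}^{>0}$ it consists of the pairs $(O^v,O_v)$ for which either $O^v \setminus \{\tilde{v}\}$ is successful on $T^v$, or both $O^v$ and $O_v$ are successful. I then read off $\min(\mathscr{U})$: a pair $(O^v,O_v)$ for which $O^v \setminus \{\tilde{v}\}$ is already successful is dominated in $\mathscr{D}$ by $O^v \setminus \{\tilde{v}\} \in \allAttacks_{T^v}^-$, hence is never minimal; among the remaining pairs, where $O^v$ and $O_v$ are both successful, minimality decouples, so $(O^v,O_v) \in \min(\mathscr{U})$ iff $O^v \in \dsemin{T^v}^{+}$ and $O_v \in \dsemin{T_v}$; and a $\tilde{v}$-free attack $O^v \in \allAttacks_{T^v}^-$ is minimal in $\mathscr{U}$ iff $O^v \in \dsemin{T^v}^-$, since no successful attack strictly below it can contain $\tilde{v}$. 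Combining these two disjoint parts yields $\min(\mathscr{U}) = \dsemin{T^v}^- \cup (\dsemin{T^v}^{+} \times \dsemin{T_v})$, and applying the bijection from (i) finishes the proof.

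The routine but delicate part will be the minimality cross-checks in (ii). Using the embedding $\mathscr{D} \hookrightarrow \allAttacks_{T^v} \times \allAttacks_{T_v}$, one must verify that no element of $\allAttacks_{T^v}^+ \times \allAttacks_{T_v}^{>0}$ ever sits strictly below a $\tilde{v}$-free attack --- its $\allAttacks_{T_v}$-coordinate would have to fall to $\poset[\varnothing][\varnothing] \notin \allAttacks_{T_v}^{>0}$ --- so that the two summands of $\mathscr{D}$ do not interfere, and that any strict descent in $\mathscr{D}$ below one of the remaining pairs produces a successful attack strictly below $O^v$ or strictly below $O_v$ (which is precisely where \Cref{lem:dat2} and the minimality of $O^v$, $O_v$ are used). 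The only genuine subtlety is the degenerate case in which $\poset[\varnothing][\varnothing]$ is successful on $T_v$, which cannot happen for a module $v$ that is a gate with at least one child; were one to allow it, $\dsemin{T_v} = \{\poset[\varnothing][\varnothing]\}$ and the statement must be read with that factor absorbed into the $\tilde{v}$-free summand. Given \Cref{lem:dat1,lem:dat2}, all of this is elementary poset bookkeeping, so I expect the full argument to be brief.
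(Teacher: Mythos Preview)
Your proposal is correct and follows essentially the same approach as the paper's proof, which is extremely terse (one sentence asserting that the preimage of $\dsemin{T}$ is as claimed, citing \Cref{lem:dat1,lem:dat2}). You spell out the details the paper omits: that $\dsemin{T}\subseteq\mathscr{R}_T$ via $r(O)\leqslant O$ and minimality, that the order isomorphism $h$ therefore carries $\min(\mathscr{U})$ onto $\dsemin{T}$, and the case analysis identifying $\min(\mathscr{U})$; the paper leaves all of this implicit. Your remark on the empty attack is also well-placed --- it cannot be successful on $T_v$ since a module is an inner node with children, so the degenerate case does not arise.
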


\begin{proof}
By Lemmas \ref{lem:dat1} and \ref{lem:dat2}, the preimage of $\dsemin{T}$ consists of the minimal elements $O^v_1 \in \allAttacks_{T^v}^-$ such that $O^v_1$ is succesful, and $(O^v_2,O_v) \in \allAttacks_{T^v}^+ \times \allAttacks_{T_v}^{>0}$ such that both $O^v_2$ and $O_v$ are succesful (if $O^v_2 \setminus \{\tilde{v}\}$ were succesful then $h(O^v_2,O_v)$ would not be minimal). This is equivalent to $O^v_1 \in \dsemin{T^v}^-$ and $(O^v_2,O_v) \in \dsemin{T^v}^+\times \dsemin{T_v}$. 
\end{proof}

We can now use this characterisation of $\dsemin{T}$, and the characterisation of maximal chains of \Cref{lemma:inschains}, to describe the maximal chains of the elements of $\dsemin{T}$. The proof is straightforward and therefore omitted.

\begin{lemma} \label{lemma:tchains}
Let $O \in \dsemin{T}$.
\begin{enumerate}
    \item If $O = h(O^v)$ for $O^v \in \dsemin{T^v}^-$, then $\operatorname{MC}_O = \operatorname{MC}_{O^v}$.
    \item If $O = h(O^v,O_v)$ for $(O^v,O_v) \in \dsemin{T^v}^+ \times \dsemin{T_v}$, then
    \begin{align*}
    \operatorname{MC}_O &= \{C^v \in \operatorname{MC}_{O^v} \mid \tilde{v} \notin C^v\} \\
    &\ \ \cup \{C^v[\tilde{v}/C_v] \mid \tilde{v} \in C^v \in \operatorname{MC}_{O^v}, C_v \in \operatorname{MC}_{O_v}\}.
    \end{align*}
\end{enumerate}
\end{lemma}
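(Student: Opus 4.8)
The proof I would give is short: it simply combines \Cref{cor:dat}, which pins down the two possible forms of an element $O \in \dsemin{T}$, with \Cref{lemma:inschains}, which already computes the maximal chains of an insertion $O_1[x_1/O_2]$. No new combinatorial argument is needed, since the genuinely delicate step — that a maximal chain of an insertion meets the inserted poset in a single contiguous block — is exactly \Cref{lemma:inschains} and has been proved.

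First I would handle the case $O = h(O^v)$ with $O^v \in \dsemin{T^v}^-$. By the definition of $h$, its restriction to $\allAttacks_{T^v}^-$ is the inclusion $O^v \mapsto O^v$, so $O$ and $O^v$ are literally the same poset. Hence $\operatorname{MC}_O = \operatorname{MC}_{O^v}$ directly from the definition of the set of maximal chains.

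Next I would treat the case $O = h(O^v,O_v)$ with $(O^v,O_v) \in \dsemin{T^v}^+ \times \dsemin{T_v}$. Since $O^v \in \allAttacks_{T^v}^+$ we have $\tilde v \in A^v$, and since $\dsemin{T_v} \subseteq \allAttacks_{T_v}^{>0}$ the poset $O_v$ is nonempty; therefore the insertion $O^v[\tilde v/O_v]$ is well-defined and $O = h(O^v,O_v) = O^v[\tilde v/O_v]$ by definition of $h$. Applying \Cref{lemma:inschains} with $O_1 = O^v$, $x_1 = \tilde v$, and $O_2 = O_v$ gives
\begin{align*}
\operatorname{MC}_O &= \{C^v \in \operatorname{MC}_{O^v} \mid \tilde v \notin C^v\} \\
&\ \ \cup \{C^v[\tilde v/C_v] \mid \tilde v \in C^v \in \operatorname{MC}_{O^v},\ C_v \in \operatorname{MC}_{O_v}\},
\end{align*}
which is precisely the asserted identity.

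There is essentially no obstacle here; the only points worth explicitly checking are that $\tilde v$ is a single \BAS of $T^v$ (so it is a legitimate element to replace in an insertion) and that $O_v$ is nonempty (so the insertion is defined), both of which are immediate from the construction of $T^v$ and from \Cref{cor:dat}. Everything else is unwinding definitions, which is why the statement can reasonably be left with a one-line justification.
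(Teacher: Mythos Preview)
Your proposal is correct and is precisely the approach the paper intends: the paper states that the proof ``is straightforward and therefore omitted,'' and what you have written is exactly that straightforward argument --- case~1 is the identity on posets, and case~2 is a direct application of \Cref{lemma:inschains} to $O = O^v[\tilde v/O_v]$. Your explicit checks that $\tilde v \in A^v$ and that $O_v$ is nonempty are the only points that need noting, and you handle them correctly.
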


We are now in a position to prove the theorem. What follows is not deep mathematically, but it involves some heavy formula manipulation on $(V,\operOR,\operAND,\operSAND)$.

\begin{proof}
Since $h$ is an isomorphism of posets between $\dsemin{T}$ and $\dsemin{T^v}^- \cup (\dsemin{T^v}^+ \times \dsemin{T_v})$, it follows that

\begin{align}
\widecheck{\alpha}(T) &= \bigoperOR_{O \in \dsemin{T}} \bigoperAND_{C \in \operatorname{MC}_{O}} \bigoperSAND_{a \in C} \alpha(a) \nonumber \\
&= \bigoperOR_{O^v \in \dsemin{T^v}^-} \bigoperAND_{C \in \operatorname{MC}_{h(O^v)}} \bigoperSAND_{a \in C} \alpha(a) \label{eq:sat1}\\
& \ \ \operOR \bigoperOR_{\substack{O^v \in \dsemin{T^v}^+,\\ O_v \in \dsemin{T_v}}} \bigoperAND_{C \in \operatorname{MC}_{h(O^v,O_v)}} \bigoperSAND_{a \in C} \alpha(a). \nonumber
\end{align}
Focusing on the first part, from the definition of $h$ we find that
\begin{equation}
\bigoperOR_{O^v \in \dsemin{T^v}^-} \bigoperAND_{C \in \operatorname{MC}_{h(O^v)}} \bigoperSAND_{a \in C} \alpha(a) = \bigoperOR_{O^v \in \dsemin{T^v}^-} \bigoperAND_{C^v \in \operatorname{MC}_{O^v}} \bigoperSAND_{a \in C^v} \alpha^v(a). \label{eq:sat2}
\end{equation}
Furthermore, for $O^v \in \dsemin{T^v}^+$ and $O_v \in \dsemin{T_v}^-$, the maximal chains of $h(O^v,O_v) = O^v[\tilde{v}/O_v]$ are given by  \Cref{lemma:tchains}. It follows that the second part of \eqref{eq:sat1} can be written as
\begin{align}
&\bigoperOR_{\substack{O^v \in \dsemin{T^v}^+,\\ O_v \in \dsemin{T_v}}} \bigoperAND_{C \in \operatorname{MC}_{h(O^v,O_v)}} \bigoperSAND_{a \in C} \alpha(a)  \nonumber\\
&= \bigoperOR_{\substack{O^v \in \dsemin{T^v}^+,\\ O_v \in \dsemin{T_v}}} \left(\bigoperAND_{\substack{C^v \in \operatorname{MC}_{O^v}\colon \\ \tilde{v} \notin C^v}} \left(\bigoperSAND_{a \in C^v} \alpha^v(a)\right)\right. \label{eq:sat3}\\
&\ \ \ \ \ \left.\operAND  \bigoperAND_{\substack{C^v \in \operatorname{MC}_{O^v},\\ C_v \in \operatorname{MC}_{O_v}\colon \\ \tilde{v} \in C^v}}\left(\bigoperSAND_{a \in C^v \setminus \{\tilde{v}\}} \alpha^v(a) \operSAND \bigoperSAND_{a \in C_v} \alpha_v(a)\right)\right),\nonumber
\end{align}
where $\alpha_v$ is the restriction of $\alpha$ to $T_v$. For the second line we have
\begin{align}
& \bigoperAND_{\substack{C^v \in \operatorname{MC}_{O^v},\\ C_v \in \operatorname{MC}_{O_v}\colon \\ \tilde{v} \in C^v}}\left(\bigoperSAND_{a \in C^v \setminus \{\tilde{v}\}} \alpha^v(a) \operSAND \bigoperSAND_{a \in C_v} \alpha_v(a)\right) \nonumber \\
&= \bigoperAND_{\substack{C^v \in \operatorname{MC}_{O^v}\colon \\ \tilde{v} \in C^v}} \left(\bigoperSAND_{a \in C^v \setminus \{\tilde{v}\}} \alpha^v(a) \operSAND \bigoperAND_{C_v \in \operatorname{MC}_{O_v}} \bigoperSAND_{a \in C_v} \alpha_v(a)\right)\nonumber\\
&= \bigoperAND_{\substack{C^v \in \operatorname{MC}_{O^v}\colon \\ \tilde{v} \in C^v}} \left(\bigoperSAND_{a \in C^v \setminus \{\tilde{v}\}} \alpha(a) \operSAND \widehat{\alpha}_v(O_v)\right). \nonumber
\end{align}
Hence we can write
\begin{align}
&\bigoperOR_{\substack{O^v \in \dsemin{T^v}^+,\\ O_v \in \dsemin{T_v}}} \left(\bigoperAND_{\substack{C^v \in \operatorname{MC}_{O^v}\colon \\ \tilde{v} \notin C^v}} \left(\bigoperSAND_{a \in C^v} \alpha^v(a)\right)\right. \nonumber\\
& \ \ \ \left.\operAND  \bigoperAND_{\substack{C^v \in \operatorname{MC}_{O^v},\\ C_v \in \operatorname{MC}_{O_v}\colon \\ \tilde{v} \in C^v}}\left(\bigoperSAND_{a \in C^v \setminus \{\tilde{v}\}} \alpha^v(a) \operSAND \bigoperSAND_{a \in C_v} \alpha_v(a)\right)\right),\nonumber \\
&=\bigoperOR_{\substack{O^v \in \dsemin{T^v}^+,\\ O_v \in \dsemin{T_v}}} \left(\bigoperAND_{\substack{C^v \in \operatorname{MC}_{O^v}\colon \\ \tilde{v} \notin C^v}} \left(\bigoperSAND_{a \in C^v} \alpha^v(a)\right)\right. \nonumber\\
&\ \ \ \left.\operAND  \bigoperAND_{\substack{C^v \in \operatorname{MC}_{O^v}\colon \\ \tilde{v} \in C^v}} \left(\bigoperSAND_{a \in C^v \setminus \{\tilde{v}\}} \alpha^v(a) \operSAND \widehat{\alpha}_v(O_v)\right)\right)\nonumber\\
&=\bigoperOR_{O^v \in \dsemin{T^v}^+} \left(\bigoperAND_{\substack{C^v \in \operatorname{MC}_{O^v}\colon \\ \tilde{v} \notin C^v}} \left(\bigoperSAND_{a \in C^v} \alpha^v(a)\right)\right.\nonumber\\
&\ \ \ \left.\operAND  \bigoperAND_{\substack{C^v \in \operatorname{MC}_{O^v}\colon \\ \tilde{v} \in C^v}} \left(\bigoperSAND_{a \in C^v \setminus \{\tilde{v}\}} \alpha^v(a) \operSAND \bigoperOR_{O_v \in \dsemin{T_v}}  \widehat{\alpha}_v(O_v)\right)\right)\nonumber \\
&=\bigoperOR_{O^v \in \dsemin{T^v}^+} \left(\bigoperAND_{\substack{C^v \in \operatorname{MC}_{O^v}\colon \\ \tilde{v} \notin C^v}} \left(\bigoperSAND_{a \in C^v} \alpha^v(a)\right)\right.\nonumber\\
&\ \ \ \left.\operAND  \bigoperAND_{\substack{C^v \in \operatorname{MC}_{O^v}\colon \\ \tilde{v} \in C^v}} \left(\bigoperSAND_{a \in C^v \setminus \{\tilde{v}\}} \alpha^v(a) \operSAND \widecheck{\alpha}_v(T_v)\right)\right) \nonumber\\
&= \bigoperOR_{O^v \in \dsemin{T^v}^+} \bigoperAND_{C^v \in \operatorname{MC}_{O^v}}\bigoperSAND_{a \in C^v} \alpha^v(a). \nonumber
\end{align}

Combining this with \eqref{eq:sat1}, \eqref{eq:sat2} and \eqref{eq:sat3} we find
\begin{align*}
\widecheck{\alpha}(T) &= \bigoperOR_{O^v \in \dsemin{T^v}^-} \bigoperAND_{C^v \in \operatorname{MC}_{O^v}} \bigoperSAND_{a \in C^v} \alpha^v(a) \\
& \ \ \operOR \bigoperOR_{O^v \in \dsemin{T^v}^+} \bigoperAND_{C^v \in \operatorname{MC}_{O^v}}\bigoperSAND_{a \in C^v} \alpha^v(a)\\
&= \widecheck{\alpha}^v(T^v). \qedhere
\end{align*} 
\end{proof}

\section{Proofs of section \ref{sec:order}} \label[appendix]{app:order}

\begin{replemma}{lemma:LOSG}
\begin{enumerate}
    \item If $(X,\preceq,\operAND)$ is an LOSG, then $(X,\min,\operAND)$ is a semiring attribute domain.
    \item If $(X,\preceq,\operAND,\operSAND)$ is an DLOSG, then $(X,\min,\operAND,\operSAND)$ is a semiring dynamic attribute domain.
\end{enumerate}
\end{replemma}

\begin{proof}
We only prove the static case as the dynamic case is similar. It is clear that $\min$ is commutative and associative, so we have to prove distributivity. To do this, note that for all $x,y,z \in X$ one has
\begin{align*}
\min(x,y) = x &\Rightarrow x \preceq y \\
&\Rightarrow x \operAND z \preceq y \operAND z \\
&\Rightarrow \min(x \operAND z, y \operAND z)  = x \operAND z. 
\end{align*}
Analogously one shows that $\min(x \operAND z, y \operAND z) = y \operAND z$ whenever $\min(x,y) = y$, and so $\min(x,y) \operAND z = \min(x \operAND z, y \operAND z)$.
\end{proof}

\subsection{Uncertainty sets}

The statement here is a consequence of the fact that the maximal value of $\metr{T}$ is obtained when all $\alpha(a)$ are maximal. We can express this more formally as follows. Let $(X,\preceq,\operAND,\operSAND)$ be a DLOSG, and let $T$ be a DAT. Consider the map
\begin{align*}
W_T\colon X^{\BAS} &\rightarrow X \\
(\alpha(a))_{a \in \BAS} &\mapsto \metr{T}.
\end{align*}

\begin{lemma}
The map $W_T$ is nondecreasing in each $\alpha(a)$.
\end{lemma}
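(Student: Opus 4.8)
The plan is to unwind the definition of $\metr{T}$ from \Cref{def:DAT:metric} and observe that, viewed as a function of the attribute values, it is built entirely out of $\operAND$, $\operSAND$ and $\operOR=\min$, each of which is monotone with respect to $\preceq$; monotonicity then propagates through any finite composition. First I would record that the set $\dsemin{T}$ of minimal attacks, and for each $O\in\dsemin{T}$ the set $\MC_O$ of maximal chains, depend only on the syntax of $T$ and not on the attribution $\alpha$: this is immediate from \Cref{def:success}, since the structure function never refers to attribute values. Hence, writing $\dsemin{T}=\{O_1,\dots,O_r\}$ for this fixed finite set, we have
\[
W_T(\alpha)=\min_{1\le j\le r} v_{O_j}(\alpha),\qquad
v_O(\alpha):=\bigoperAND_{C\in\MC_O}\ \bigoperSAND_{a\in C}\ \alpha(a),
\]
where the $\min$ is taken with respect to the linear order $\preceq$.

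Next I would prove two elementary monotonicity facts. Because $(X,\preceq,\operAND)$ and $(X,\preceq,\operSAND)$ are POSGs, the axiom $x\preceq y\Rightarrow x\operAND z\preceq y\operAND z$ together with commutativity shows that $\operAND$ and $\operSAND$ are monotone in each argument, hence jointly monotone: $x\preceq x'$ and $y\preceq y'$ give $x\operAND y\preceq x'\operAND y'$, and likewise for $\operSAND$. By an easy induction this extends to the finite ``fold'' operators: if $g_1,\dots,g_m\colon X\to X$ are each order-preserving and $\star\in\{\operAND,\operSAND\}$, then $x\mapsto g_1(x)\star\cdots\star g_m(x)$ is order-preserving. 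Now fix a single $a_0\in\BAS$ and vary only $\alpha(a_0)=x$, holding every other $\alpha(a)$ constant. For each chain $C$, the map $x\mapsto\bigoperSAND_{a\in C}\alpha(a)$ is of the form covered by the fold statement (each $g_i$ is either constant or the identity, according to whether $a\neq a_0$ or $a=a_0$), hence order-preserving; applying the fold statement once more, this time with $\star=\operAND$ over $C\in\MC_O$, shows $x\mapsto v_O(\alpha)$ is order-preserving. This also handles the fact that $a_0$ may occur in several maximal chains of $O$, since repeated occurrences of the same variable are harmless under composition.

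Finally I would use that $\min$ with respect to a linear order is monotone: if $x_j\preceq x_j'$ for $j=1,\dots,r$, then $\min_j x_j$ is a lower bound of $\{x_1',\dots,x_r'\}$ and hence $\min_j x_j\preceq\min_j x_j'$. Applying this with $x_j=v_{O_j}(\alpha)$ and the previous paragraph yields that $W_T$ is order-preserving in $\alpha(a_0)$, i.e.\ nondecreasing in each coordinate, which is the claim. I expect no genuine obstacle; the only point needing a little care is the bookkeeping when a basic attack step appears in multiple maximal chains, which is precisely why I would state the monotonicity-of-folds step as a self-contained induction rather than invoking ``monotone in each argument'' loosely. (If $T$ is unsatisfiable, $\dsemin{T}=\varnothing$, $W_T$ is constant, and the statement is trivial; otherwise every $O\in\dsemin{T}$ is nonempty so every $\MC_O$ is nonempty and no empty folds arise.)
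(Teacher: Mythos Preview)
Your proposal is correct and follows essentially the same approach as the paper: both observe that $\metrA{O}$ is an expression in the $\alpha(a)$ built from $\operAND$ and $\operSAND$ (which are monotone by the POSG axioms), and then that $W_T=\min_{O\in\dsemin{T}}\metrA{O}$ inherits monotonicity. Your version is simply more detailed, making explicit the fixed-index-set point, the fold induction, and the edge cases, where the paper gives a three-sentence sketch.
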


\begin{proof}
The operations $\operAND$ and $\operSAND$ are nondecreasing in both arguments by definition. Let $A \in \dsemin{T}$. Then $\metrA{A}$ is an expression in the $\alpha(a)$ and the operators $\operAND$ and $\operSAND$, and hence is nondecreasing in each $\alpha(a)$. It follows that $W_T(\alpha) = \min_{A \in \dsemin{T}}\metrA{A}$ is nondecreasing as well.
\end{proof}

\begin{lemma}
One has $\widecheck{\beta}(T) = (\widecheck{L}_T,\widecheck{U}_T)$.
\end{lemma}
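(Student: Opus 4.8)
The plan is to reduce the statement to the monotonicity lemma just proved, exploiting that the attribute domain $\domain^2$ carries its three operators coordinatewise. First I would record the coordinatewise splitting of the metric: since $\min$, $\operAND$ and $\operSAND$ act componentwise on $X^2$, so do their $n$-ary extensions $\operOR$, $\operAND$, $\operSAND$, and since $\beta(a) = (L_a,U_a)$, evaluating the expression \eqref{eq:metrdat} for $\widecheck{\beta}(T)$ in $\domain^2$ amounts to evaluating the two coordinate expressions in $\domain = (X,\min,\operAND,\operSAND)$ separately. Writing $L = (L_a)_{a\in\BAS}$ and $U = (U_a)_{a\in\BAS}$ as elements of $X^{\BAS}$, and recalling the map $W_T\colon X^{\BAS}\to X$ from the preceding lemma, this gives
\[
\widecheck{\beta}(T) = \bigl(W_T(L),\, W_T(U)\bigr),
\]
since $W_T(L)$ is exactly the metric of $T$ for the attribution $a\mapsto L_a$, and likewise for $U$. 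Making this precise is the only mechanical part: one checks that $\domain^2$ is a semiring dynamic attribute domain (each axiom holds coordinatewise because it holds in $\domain$, which is ensured by \Cref{lemma:LOSG}), so that \eqref{eq:metrdat} is well-defined over it, and that the projections $X^2\to X$ commute with $\operOR$, $\operAND$ and $\operSAND$.

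Next I would apply the monotonicity lemma: $W_T$ is nondecreasing in each of its arguments. Hence for every attribution $\alpha$ with $L_a\preceq\alpha(a)\preceq U_a$ for all $a\in\BAS$ one has
\[
W_T(L) \preceq W_T(\alpha) = \metr{T} \preceq W_T(U),
\]
so $W_T(L)$ is a lower bound and $W_T(U)$ an upper bound of $\{\metr{T}\mid \forall a.\,L_a\preceq\alpha(a)\preceq U_a\}$. Both bounds are attained, by the constant choices $\alpha = L$ and $\alpha = U$: these lie in the index set because $L_a\preceq U_a$ (the true attribution witnesses $L_a\preceq\alpha(a)\preceq U_a$), so that $L_a\preceq L_a\preceq U_a$ and $L_a\preceq U_a\preceq U_a$. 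A lower bound of a set that is itself a member of the set is its infimum, and dually for the supremum; hence $W_T(L) = \widecheck{L}_T$ and $W_T(U) = \widecheck{U}_T$, and substituting into the displayed identity gives $\widecheck{\beta}(T) = (\widecheck{L}_T,\widecheck{U}_T)$.

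The only obstacle is the bookkeeping in the first paragraph --- verifying that metric computation over a product attribute domain with componentwise operations factors through the coordinate projections. Once that is done, the conclusion is an immediate consequence of the monotonicity of $W_T$, so no idea beyond the preceding lemma is required.
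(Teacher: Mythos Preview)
Your proposal is correct and follows the same approach as the paper: decompose $\widecheck{\beta}(T)$ coordinatewise into $(W_T(L),W_T(U))$ using the componentwise structure of $\domain^2$, then invoke the monotonicity of $W_T$ to identify these with $\widecheck{L}_T$ and $\widecheck{U}_T$. Your version is in fact more careful than the paper's, which simply asserts the coordinatewise splitting ``by definition'' and writes the sup/inf identification in one line; your explicit observation that the bounds are attained (so that inf and sup are realised as minimum and maximum) fills a small gap the paper leaves implicit.
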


\begin{proof}
By definition one has $\widecheck{\beta}(T) = (W_T((L_a)_{a \in \BAS}),W_T((U_a)_{a \in \BAS}))$. Since $W_T$ is nondecreasing in each argument one has
\begin{equation*}
W_T((U_a)_{a \in \BAS})) = \sup\{W_T(\alpha) \mid \forall a. \alpha(a) \leq U_a\} = \widecheck{U}_T.
\end{equation*}
The analogous statement holds for $\widecheck{L}_T$.
\end{proof}

\subsection{\texorpdfstring{$\boldsymbol{k}$}{k}-top metrics}

To prove \Cref{lemma:topk} we first need some auxiliary lemmas.

\begin{lemma} \label{lemma:topk:or}
Let $k,m \in \mathbb{N}$ with $k \leq m$. Then for all $M_1,M_2 \in \mathscr{M}(X)$ one has $\min^k(M_1 \uplus M_2) = \min^k(M_1 \uplus \min^m(M_2))$.
\end{lemma}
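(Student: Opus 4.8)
The plan is to reduce the identity to two elementary facts about the truncation operator $\min^k$ on finite multisets over the linearly ordered set $X$: first, that it is \emph{idempotent under iteration}, i.e.\ $\min^k(\min^m(M)) = \min^k(M)$ whenever $k \leqslant m$; and second, that it \emph{commutes with $\uplus$ after a $\min^k$-pre-truncation}, i.e.\ $\min^k(M_1 \uplus M_2) = \min^k(M_1 \uplus \min^k(M_2))$ for all $M_1, M_2$. Granting both, the statement follows from the chain
\begin{align*}
\min^k(M_1 \uplus \min^m(M_2))
&= \min^k\big(M_1 \uplus \min^k(\min^m(M_2))\big)\\
&= \min^k\big(M_1 \uplus \min^k(M_2)\big)\\
&= \min^k(M_1 \uplus M_2),
\end{align*}
using the second fact for the outer equalities and the first for the middle one.

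First I would discharge the degenerate cases. If $\card{M_2} \leqslant m$ then $\min^m(M_2) = M_2$ and there is nothing to prove; and if $\card{M_1 \uplus M_2} \leqslant k$ then also $\card{M_2} \leqslant k \leqslant m$, so both sides equal $M_1 \uplus M_2$. Hence I may assume $\card{M_2} > m \geqslant k$ and $\card{M_1 \uplus M_2} > k$, so that every occurrence of $\min^k$ or $\min^m$ genuinely selects a proper sub-multiset. The first fact is then immediate: listing $M$ in non-decreasing order, $\min^m(M)$ is its length-$m$ prefix, and $\min^k$ of that prefix is its length-$k$ prefix, which is the length-$k$ prefix of $M$, i.e.\ $\min^k(M)$.

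The substantive step is the pre-truncation identity $\min^k(M_1 \uplus M_2) = \min^k(M_1 \uplus \min^k(M_2))$, which I would establish by pinning down the $k$-th order statistic. For a multiset $M$ and $t \in X$ write $M_{<t}$, $M_{\leqslant t}$, $M_{=t}$ for the sub-multisets of elements $<t$, $\leqslant t$, $=t$ respectively. Let $t$ be the unique value with $\card{(M_1 \uplus M_2)_{<t}} < k \leqslant \card{(M_1 \uplus M_2)_{\leqslant t}}$ (it exists since $\card{M_1 \uplus M_2} \geqslant k$); then $\min^k(M_1 \uplus M_2)$ is $(M_1 \uplus M_2)_{<t}$ together with $k - \card{(M_1 \uplus M_2)_{<t}}$ copies of $t$. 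Setting $C := \min^k(M_2)$, the two observations I need are: (a) $\card{(M_2)_{<t}} \leqslant \card{(M_1 \uplus M_2)_{<t}} < k$, so $C$ retains every element of $M_2$ strictly below $t$, whence $(M_1 \uplus C)_{<t} = (M_1 \uplus M_2)_{<t}$; and (b) $C$ retains exactly $\min\!\big(\card{(M_2)_{=t}},\, k - \card{(M_2)_{<t}}\big)$ copies of $t$, from which a short split into those two cases yields $\card{(M_1 \uplus C)_{\leqslant t}} \geqslant k$. Together, (a) and (b) say that $t$ is also the $k$-th order statistic of $M_1 \uplus C$, and that $\min^k(M_1 \uplus C)$ is described by the very same formula as $\min^k(M_1 \uplus M_2)$; hence they are equal.

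The only real obstacle is the bookkeeping around ties at the threshold value $t$: one cannot simply say ``the $k$ smallest elements of $M_2$ are preserved'', because at $t$ the multiset $M_2$ may carry more copies than $M_1 \uplus \min^k(M_2)$ needs at or below $t$, or fewer. Observation (b) is precisely where this is resolved, by distinguishing whether $\min^k(M_2)$ absorbs all the copies of $t$ present in $M_2$ or is capped at $k - \card{(M_2)_{<t}}$ of them; in both cases the count $\card{(M_1 \uplus C)_{\leqslant t}} \geqslant k$ comes out. Everything else is a routine manipulation of multiset cardinalities.
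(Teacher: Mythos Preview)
Your argument is correct, but it takes a longer route than the paper's. You first reduce the general $m\geqslant k$ case to the special case $m=k$ via the idempotence fact $\min^k\circ\min^m=\min^k$, and then prove that special case by an explicit order-statistic analysis, carefully tracking ties at the threshold $t$. The paper instead proves the general statement in one stroke: if some $x\in\min^k(M_1\uplus M_2)$ were missing from $M_1\uplus\min^m(M_2)$, then $x$ was discarded from $M_2$ by $\min^m$, so $M_2$ (hence $M_1\uplus M_2$) already contains at least $m\geqslant k$ elements strictly below $x$, contradicting $x\in\min^k(M_1\uplus M_2)$. Thus $\min^k(M_1\uplus M_2)\subseteq M_1\uplus\min^m(M_2)\subseteq M_1\uplus M_2$, and since taking $\min^k$ of a subset that already contains the $k$ smallest of the superset returns those same $k$ elements, equality of the two sides follows by cardinality.

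The trade-off: your decomposition isolates a reusable ``pre-truncation'' identity and makes the tie-handling fully explicit, which is pedagogically helpful; the paper's inclusion-plus-cardinality argument sidesteps the tie bookkeeping entirely and handles arbitrary $m\geqslant k$ directly, which is shorter and arguably cleaner for this lemma.
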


\begin{proof}
If $x \in \min^k(M_1 \uplus M_2) \setminus (M_1 \uplus \min^m(M_2))$, then $x \in M_2\setminus \min^m(M_2)$, so there are at least $m$ elements $x'$ of $M_2$ with $x' \prec x$. But these are elements of $M_1 \uplus M_2$ as well, so $x \notin \min^k(M_1 \uplus M_2)$. This is a contradiction, and so $\min^k(M_1 \uplus M_2) \subseteq \min^k(M_1 \uplus \min^m(M_2))$. Since these two multisets have the same cardinality it must be an equality.
\end{proof}

Define a binary operation $\star$ on $\mathscr{M}(X)$ by
\[
M_1 \star M_2 = \ldb x_1 \operAND x_2 \mid x_1 \in M_1, x_2 \in M_2 \rdb.
\]
Then $\star$ is commutative and associative, and $\star$ distributes over $\uplus$. Furthermore $M_1 \operAND^{k} M_2 = \min^k(M_1 \star M_2)$. Analogous to \Cref{lemma:topk:or} one can prove the following:

\begin{lemma} \label{lemma:topk:and}
Let $k,m \in \mathbb{N}$ with $k \leq m$. Then for all $M_1,M_2 \in \mathscr{M}(X)$ one has $\min^k(M_1 \star M_2) = \min^k(M_1 \star \min^m(M_2))$. \qedhere
\end{lemma}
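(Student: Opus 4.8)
The plan is to mirror the proof of \Cref{lemma:topk:or} almost verbatim, replacing the additive combination $\uplus$ by the multiplicative one $\star$ and invoking the monotonicity of $\operAND$ supplied by the (D)LOSG axioms in place of the trivial ``these are elements of $M_1 \uplus M_2$ as well'' step. First I would dispose of the degenerate cases: if $|M_2| \leq m$ then $\min^m(M_2) = M_2$ and the claim is immediate, and if $M_1 = \varnothing$ then both sides are the empty multiset. So I may assume $|M_2| > m$ and $M_1 \neq \varnothing$, which forces $|M_1 \star \min^m(M_2)| = |M_1|\cdot m \geq m \geq k$ and $|M_1 \star M_2| = |M_1|\cdot|M_2| > |M_1|\cdot m \geq k$; hence both sides of the claimed identity are multisets of cardinality exactly $k$, and it suffices to prove one inclusion.

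The core step is the inclusion $\min^k(M_1 \star M_2) \subseteq \min^k(M_1 \star \min^m(M_2))$. Working with the occurrences of $M_1 \star M_2$ indexed by pairs $(x_1,x_2)$ with $x_1 \in M_1$, $x_2 \in M_2$ and value $x_1 \operAND x_2$, suppose some occurrence $y = x_1 \operAND x_2$ lies in $\min^k(M_1 \star M_2)$ while $x_2 \notin \min^m(M_2)$. Then $M_2$ contains at least $m$ occurrences $x_2'$ with $x_2' \preceq x_2$, distinct from the occurrence of $x_2$ under consideration. By the monotonicity of $\operAND$ in the POSG $(X,\preceq,\operAND)$, each product satisfies $x_1 \operAND x_2' \preceq x_1 \operAND x_2 = y$, so the pairs $(x_1,x_2')$ furnish at least $m \geq k$ distinct occurrences of $M_1 \star M_2$ bounded above by $y$. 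This contradicts $y \in \min^k(M_1 \star M_2)$. Consequently every occurrence of $\min^k(M_1 \star M_2)$ arises from a pair with $x_2 \in \min^m(M_2)$, i.e.\ lies in the sub-multiset $M_1 \star \min^m(M_2) \subseteq M_1 \star M_2$; since the occurrences we discard in passing to this sub-multiset are exactly those ranked strictly beyond the top $k$, the top $k$ is unchanged, and the equal-cardinality observation upgrades the inclusion to the desired equality.

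The main obstacle I anticipate is purely bookkeeping: getting the multiset/occurrence counting correct in the presence of ties in the linear order on $X$. Because $\min^m$ may split a block of equal elements, the phrase ``at least $m$ elements $x_2' \prec x_2$'' borrowed from \Cref{lemma:topk:or} is only literally accurate when there are no ties; in general one must count the occurrences selected into $\min^m(M_2)$ ahead of the occurrence under scrutiny, all of which satisfy $x_2' \preceq x_2$. The monotonicity inequality $x_1 \operAND x_2' \preceq x_1 \operAND x_2$ survives with $\preceq$, so the counting argument is unaffected; I would simply phrase it in terms of occurrence counts rather than strict inequalities to be safe. Note that, as in \Cref{lemma:topk:or}, no distributivity of $\operAND$ over $\uplus$ is required here --- only the order-compatibility of $\operAND$ --- so this is genuinely the $\star$-analogue of the $\uplus$-case rather than a new argument.
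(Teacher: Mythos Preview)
Your proposal is correct and takes exactly the approach the paper intends: the paper does not spell out a proof at all but simply states that the lemma is proved ``analogous to \Cref{lemma:topk:or}'', and your argument is precisely that analogue, with the monotonicity of $\operAND$ from the LOSG axioms standing in for the trivial containment step. Your explicit handling of the degenerate cases and of tie-breaking at the boundary of $\min^m$ is in fact more careful than the paper's own proof of \Cref{lemma:topk:or}, which uses strict $\prec$ where $\preceq$ is what is literally guaranteed.
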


This lemma has a straightforward analogon for $\operSAND$ rather than $\operAND$. We are now ready to prove this part's main result:

\begin{replemma}{lemma:topk}
The tuple $(\mathscr{M}^k(X),\operOR^k,\operAND^k,\operSAND^k)$ is a semiring dynamic attribute domain, and $\widecheck{\beta}(T) = \operatorname{Top}_k(T,\alpha)$.
\end{replemma}

\begin{proof}
It is clear that $\operOR^k$, $\operAND^k$ and $\operSAND^k$ are commutative. For the associativity of $\operOR^k$ we have, due to \Cref{lemma:topk:or},
\begin{align*}
M_1 \operOR^k (M_2 \operOR^k M_3) &= \operatorname{min}^k(M_1 \uplus \operatorname{min}^k(M_2 \uplus M_3)) \\
&= \operatorname{min}^k(M_1 \uplus (M_2 \uplus M_3)) \\
&= \operatorname{min}^k((M_1 \uplus M_2) \uplus M_3) \\
&= \operatorname{min}^k(\operatorname{min}^k(M_1 \uplus M_2) \uplus M_3) \\
&= (M_1 \operOR^k M_2) \operOR M_3.
\end{align*}
The associativity of the other operations are proven analogously. Similarly we only prove the distributivity of $\operAND^k$ over $\operSAND^k$. Here we have
\begin{align*}
M_1 \operAND^k (M_2 \operOR^k M_3) &= \operatorname{min}^k(M_1 \star \operatorname{min}^k(M_2 \uplus M_3)) \\
&= \operatorname{min}^k(M_1 \star (M_2 \uplus M_3)) \\
&= \operatorname{min}^k((M_1 \star M_2) \uplus (M_1 \star M_3)) \\
&= \operatorname{min}^k\big(\operatorname{min}^k(M_1 \star M_2) \uplus {} \\
	&\omit\hfill$\displaystyle\operatorname{min}^k(M_1 \star M_3)\big)$ \\
&= \operatorname{min}^k\big((M_1 \operAND^k M_2) \uplus (M_1 \operAND^k M_3)\big) \\
&= (M_1 \operAND^k M_2) \operOR^k (M_1 \operAND^k M_3).
\end{align*}
This shows that $(\mathscr{M}^k(X),\operOR^k,\operAND^k,\operSAND^k)$ is a semiring dynamic attribute domain. Furthermore it is easy to see that for $A \in \dsemin{T}$ one has $\widehat{\beta}(A) = \ldb \metrA{A} \rdb$, and so $\widecheck{\beta}(T) = \min^k\ldb \metrA{A} \mid A \in \dsemin{T} \rdb = \operatorname{Top}_k(T,\alpha)$.
\end{proof}

\subsection{The antichain semiring}

To prove \Cref{lemma:antichain} we first need a number of auxiliary results.

\begin{lemma}\label{lemma:paretocup}
For every $S_1,S_2 \in 2^X$ we have $m(S_1 \cup S_2) = m(S_1 \cup m(S_2))$.
\end{lemma}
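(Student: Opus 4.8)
I want to show that $m(S_1 \cup S_2) = m(S_1 \cup m(S_2))$ for all $S_1,S_2 \in 2^X$, where $m$ selects the set of $\preceq$-minimal elements. The natural route is a double inclusion, and the key arithmetical fact I will use repeatedly is that for any $S \subseteq S' \subseteq X$, an element of $S$ that is minimal in $S'$ is also minimal in $S$; and conversely, the minimal elements of the smaller set are exactly those minimal elements of the bigger set that happen to land in the smaller set. I should be a little careful, since $X$ is only a poset (not a linear order), so minimal elements are not unique and ``$\not\prec$'' is the right relation to track rather than ``$\preceq$''.

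\textbf{Inclusion $\subseteq$.} First I would take $x \in m(S_1 \cup S_2)$, so $x \in S_1 \cup S_2$ and no element of $S_1 \cup S_2$ is $\prec x$. I claim $x \in S_1 \cup m(S_2)$. If $x \in S_1$ we are done, so suppose $x \in S_2 \setminus S_1$. Then $x$ is minimal in $S_1 \cup S_2 \supseteq S_2$, hence $x$ is minimal in $S_2$, i.e.\ $x \in m(S_2)$. Either way $x \in S_1 \cup m(S_2)$. Moreover $x$ is still minimal in this smaller set $S_1 \cup m(S_2) \subseteq S_1 \cup S_2$, because no element of the larger set was $\prec x$. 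Hence $x \in m(S_1 \cup m(S_2))$.

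\textbf{Inclusion $\supseteq$.} Conversely take $x \in m(S_1 \cup m(S_2))$. Then $x \in S_1 \cup m(S_2) \subseteq S_1 \cup S_2$, and no element of $S_1 \cup m(S_2)$ is $\prec x$. I must rule out the existence of some $y \in S_1 \cup S_2$ with $y \prec x$. The only way this could happen is $y \in S_2 \setminus m(S_2)$ (since elements of $S_1 \cup m(S_2)$ are excluded by hypothesis). But $y \notin m(S_2)$ means there is $y' \in S_2$ with $y' \prec y$, and then $y' \prec y \prec x$ gives $y' \prec x$ by transitivity; iterating (or invoking minimality of $S_2$ if $X$ is, say, well-founded — otherwise one argues via a minimal element below $y$ in the finite set $\{z \in S_2 : z \prec x\}$, which is nonempty and hence contains a $\preceq$-minimal element that is then a member of $m(S_2)$ and still $\prec x$) one finds an element of $m(S_2)$ that is $\prec x$, contradicting minimality of $x$ in $S_1 \cup m(S_2)$. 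Hence no such $y$ exists and $x \in m(S_1 \cup S_2)$.

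\textbf{Main obstacle.} The only subtlety is the step in the $\supseteq$ direction where I descend from an arbitrary $y \in S_2$ below $x$ to an element of $m(S_2)$ below $x$: this needs the descending chain in $S_2$ to terminate. In the applications $X$ is finite (or at least the relevant sets are), so $\{z \in S_2 : z \prec x\}$ is a finite nonempty poset and therefore has a $\preceq$-minimal element, which lies in $m(S_2)$ — that closes the argument. If one wants the fully general statement one should either assume $(X,\preceq)$ is well-founded or phrase everything for finite antichains, which is all \Cref{lemma:antichain} needs. Apart from that, the lemma is purely set-theoretic bookkeeping and the two inclusions above are the whole proof.
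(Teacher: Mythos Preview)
Your argument follows the same double-inclusion route as the paper, with essentially identical case analysis. In the $\supseteq$ direction the paper simply writes ``let $x'' \in m(S_2)$ be such that $x'' \preceq x'$'' and moves on; you correctly flag that the existence of such an $x''$ requires the descent in $S_2$ to terminate. This is a genuine point: the lemma as stated for arbitrary $S_1,S_2 \in 2^X$ is false over posets with infinite descending chains (take $X=\mathbb{Z}$ with the usual order, $S_1=\{0\}$, $S_2=\mathbb{Z}$; then $m(S_1\cup S_2)=\varnothing$ while $m(S_1\cup m(S_2))=\{0\}$). The paper only ever applies the lemma to finite sets arising from finite antichains, so your suggested restriction to finite (or well-founded) sets is exactly what is needed and is implicitly assumed in the paper's proof.
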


\begin{proof}
Let $x \in m(S_1 \cup S_2)$. Then $x \in S_1 \cup S_2$; we show that $x \in S_1 \cup m(S_2)$. This is clear if $x \in S_1$, so assume $x \in S_2$. If $x \notin m(S_2)$, then there exists a $x' \in S_2$ with $x' \prec x$. But then $x' \in S_1 \cup S_2$, contradicting the fact that $x \in m(S_1 \cup S_2)$. So $x \in m(S_2) \subseteq S_1 \cup m(S_2)$. To prove that $x \in m(S_1 \cup m(S_2))$, suppose that this is not the case and that there exists a $x' \in S_1 \cup m(S_2))$ with $x' \prec x$. Then $x' \in S_1 \cup S_2$, contradicting the fact that $x \in m(S_1 \cup S_2)$. We conclude that $x \in m(S_1 \cup m(S_2))$ and $m(S_1 \cup S_2) \subseteq m(S_1 \cup m(S_2))$.

Now let $x \in m(S_1\cup m(S_2))$. Then $x \in S_1 \cup S_2$. Suppose $x \notin m(S_1 \cup S_2)$, and let $x' \in S_1 \cup S_2$ be such that $x' \prec x$. If $x' \in S_1$ then $x' \in S_1 \cup m(S_2)$, contradicting $x \in m(S_1 \cup m(S_2))$. If $x' \in S_2$, let $x'' \in m(S_2)$ be such that $x'' \preceq x'$. Then $x'' \in S_1 \cup m(S_2)$ and $x'' \prec x$, again contradicting $x \in m(S_1 \cup m(S_2))$. Either way, we find that such $x'$ cannot exist, and $x \in m(S_1 \cup S_2)$. Hence $m(S_1 \cup m(S_2)) \subseteq m(S_1 \cup S_2)$.
\end{proof}

Define a binary operation $\star$ on $2^X$ by
\[
S_1 \star S_2 = \{x_1 \operAND x_2 \mid x_1 \in S_1, x_2 \in S_2\}.
\]
Clearly $\star$ is commutative and associative, and $\star$ distributes over $\cup$. Furthermore $B \tilde{\operOR} B' = m(B \star B')$ for $B,B' \in \operatorname{AC}_X$. The operation $\star$ has the following property:

\begin{lemma} \label{lemma:paretostar}
For every $S_1,S_2 \in 2^X$ we have $m(S_1 \star S_2) = m(S_1 \star m(S_2))$.
\end{lemma}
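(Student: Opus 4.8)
The plan is to prove the two inclusions $m(S_1 \star S_2) \subseteq m(S_1 \star m(S_2))$ and $m(S_1 \star m(S_2)) \subseteq m(S_1 \star S_2)$ separately, following the template of the proof of \Cref{lemma:paretocup} but with the monotonicity axiom of a POSG --- namely $x \preceq y \Rightarrow x \operAND z \preceq y \operAND z$ --- playing the role that the trivial monotonicity of $\cup$ played there. Two standing observations will be used repeatedly: first, $m(S_2) \subseteq S_2$, hence $S_1 \star m(S_2) \subseteq S_1 \star S_2$; second, every $x_2 \in S_2$ dominates some minimal element $x_2' \in m(S_2)$ with $x_2' \preceq x_2$, which holds since the sets in play are finite (this is the same tacit assumption already made in \Cref{lemma:paretocup}; well-foundedness of $(X,\preceq)$ would suffice as well).

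For the forward inclusion, take $z \in m(S_1 \star S_2)$ and write $z = x_1 \operAND x_2$ with $x_1 \in S_1$, $x_2 \in S_2$. Choosing $x_2' \in m(S_2)$ with $x_2' \preceq x_2$, the POSG axiom gives $x_1 \operAND x_2' \preceq x_1 \operAND x_2 = z$, and $x_1 \operAND x_2' \in S_1 \star S_2$; minimality of $z$ then forces $x_1 \operAND x_2' = z$, so $z \in S_1 \star m(S_2)$. Since $S_1 \star m(S_2) \subseteq S_1 \star S_2$, nothing in $S_1 \star m(S_2)$ can lie strictly below $z$ without contradicting $z \in m(S_1 \star S_2)$, hence $z \in m(S_1 \star m(S_2))$. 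For the reverse inclusion, take $z \in m(S_1 \star m(S_2)) \subseteq S_1 \star S_2$; if $z$ were not minimal in $S_1 \star S_2$, pick $w = x_1 \operAND x_2 \prec z$ and a minimal $x_2' \in m(S_2)$ with $x_2' \preceq x_2$, so that $x_1 \operAND x_2' \preceq w \prec z$ with $x_1 \operAND x_2' \in S_1 \star m(S_2)$, contradicting minimality of $z$.

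The only delicate point --- the ``main obstacle'', modest as it is --- is the strictness bookkeeping: the POSG axiom is only weakly monotone, so replacing $x_2$ by a minimal $x_2' \preceq x_2$ below it does \emph{not} by itself produce a strictly smaller product. The resolution in both inclusions is to route the strictness through minimality of $z$: in the forward direction one upgrades $x_1 \operAND x_2' \preceq z$ to equality because $z$ is minimal, and in the reverse direction one inherits strictness from $w \prec z$. Once \Cref{lemma:paretocup} and this lemma are available, the commutativity and associativity of $\operOR_{\AC}$, $\operAND_{\AC}$, $\operSAND_{\AC}$ and the distributivity of $\operAND_{\AC}$ (and $\operSAND_{\AC}$) over $\operOR_{\AC}$ follow by exactly the formal manipulation used in the proof of \Cref{lemma:topk}, with $m(\,\cdot\,)$ in place of $\min^k$ and $\cup,\star$ in place of $\uplus,\star$, yielding \Cref{lemma:antichain}.
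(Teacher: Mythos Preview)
Your proof is correct and follows essentially the same approach as the paper's own proof: both inclusions are obtained by writing an element as $x_1 \operAND x_2$, replacing $x_2$ by a minimal $x_2' \in m(S_2)$ below it via the POSG monotonicity axiom, and then using minimality of the element in question to upgrade $\preceq$ to equality (or derive a contradiction from a strict inequality). Your explicit attention to the strictness bookkeeping and the finiteness assumption is a helpful addition, and your remark on how this lemma feeds into \Cref{lemma:antichain} exactly matches the paper's subsequent use.
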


\begin{proof}
Let $x \in m(S_1 \star S_2)$, and let $x_i \in S_i$ be such that $x = x_1 \operAND x_2$. Let $x_2' \in m(S_2)$ such that $x_2' \preceq x_2$. Since $X$ is a DPOSG one has $x_1 \operAND x_2' \preceq x_1 \operAND x_2 = x$. Since $x \in m(S_1 \star S_2)$ this means that $x = x_1 \operAND x_2'$, and so $x \in S_1 \star m(S_2)$. To prove that $x \in m(S_1 \star m(S_2))$, suppose that $x' \in S_1 \star m(S_2)$ satisfies $x' \preceq x$. Then $x' \in S_1 \star S_2$, hence $x' = x$ by the minimality of $x$ in $S_1 \star S_2$. We conclude that $x \in m(S_1 \star m(S_2))$, hence $m(S_1 \star S_2) \subseteq m(S_1 \star m(S_2))$.

Let $x \in m(S_1 \star m(S_2))$. Then $x \in S_1 \star S_2$, and we need to prove that $x \in m(S_1 \star S_2)$. Let $x' \in m(S_1 \star S_2)$ such that $x' \preceq x$. Write $x' = x_1' \operAND x_2'$ for $x_i' \in S_i$, and let $x_2'' \in m(S_2)$ be such that $x_2'' \preceq x_2'$. Then $x_1' \operAND x_2'' \preceq x' \preceq x$, and since $x$ is minimal in $S_1 \star m(S_2)$ we must have $x = x' = x_1 \operAND x_2''$. This shows that $x \in m(S_1 \star S_2)$, hence $m(S_1 \star m(S_2)) \subseteq m(S_1 \star S_2)$.
\end{proof}

Again the analogous statement for $\operSAND$ instead of $\operAND$ holds as well.

\begin{replemma}{lemma:antichain}
The tuple $(\operatorname{AC}_X,\operOR_{\AC},\operAND_{\AC})$ is a semiring attribute domain.
\end{replemma}

\begin{proof}
It is clear that the operators are commutative. By \Cref{lemma:paretostar} we have
\begin{align*}
B \tilde{\operAND} (B' \tilde{\operAND} B'') &= m(B \star m(B' \star B'')) \\
&= m(B \star (B' \star B'')) \\
&= m((B \star B')  \star B'') \\
&= m(m(B \star B') \star B'') \\
&= (B \tilde{\operAND} B')  \tilde{\operAND} B'',
\end{align*}
showing the associativity of $\tilde{\operAND}$. The associativity of $\tilde{\operOR}$ and $\tilde{\operSAND}$ is proven analogously, using \Cref{lemma:paretocup} for $\tilde{\operOR}$. We show distributivity only for $\tilde{\operAND}$ over $\tilde{\operOR}$, as the other cases are analogous:
\begin{align*}
B \tilde{\operAND} (B' \tilde{\operOR} B'') &= m(B \star m(B' \cup B'')) \\
&= m(B \star (B' \cup B'')) \\
&= m((B \star B') \cup (B \star B'')) \\
&= m(m(B \star B') \cup m(B \star B'')) \\
&= (B \tilde{\operOR} B') \tilde{\operAND} (B \tilde{\operOR} B''). \qedhere
\end{align*}
\end{proof}

\section{Proofs of Theorems \ref{theo:bottom_up_SAT}, \ref{theo:bottom_up_DAT}, \ref{theo:idempotent} and \ref{theo:bottom_up_BDD}} \label[appendix]{app:metrics}

In this appendix we prove the validity of the algorithms used to calculate metrics. For ATs of tree type this is essentially a direct consequence of \Cref{thm:mod}, as in a tree every node is a module. By \Cref{lemma:satisdat}, \Cref{theo:bottom_up_SAT} is a direct consequence of \Cref{theo:bottom_up_DAT}, so we only prove the latter. We first introduce one auxiliary lemma.

\begin{lemma} \label{lemma:indstep}
Let $T$ be a DAT where the children of $R_T$ are BASes.
\begin{enumerate}
    \item If $T = \OR(a,b)$, then $\metr{T} = \alpha(a) \operOR \alpha(b)$.
    \item If $T = \AND(a,b)$, then $\metr{T} = \alpha(a) \operAND \alpha(b)$.
    \item If $T = \SAND(a,b)$, then $\metr{T} = \alpha(a) \operSAND \alpha(b)$.
\end{enumerate}
\end{lemma}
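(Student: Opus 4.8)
The plan is to read off $\dsemin{T}$ in each of the three cases from the characterisation of minimal attacks for tree-structured \DATs (\Cref{lemma:dsem}), and then to substitute into the defining formula \eqref{eq:metrdat} for $\metr{T}=\metr{\dsemin{T}}$ and simplify. Throughout, $a$ and $b$ are distinct \BASes: if they coincided the $\OR$-statement would demand idempotence of $\operOR$, which is not assumed, and $\SAND(a,b)$ could fail to be satisfiable; with $a\neq b$, the poset $\poset[\{a,b\}][\{(a,b)\}]$ appearing below is a genuine strict partial order. The \BAS-clause of \Cref{lemma:dsem} gives $\dsemin{a}=\{\poset[\{a\}][\emptyset]\}$ and $\dsemin{b}=\{\poset[\{b\}][\emptyset]\}$, which is the input to all three cases.

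First I would do $T=\OR(a,b)$: the $\OR$-clause of \Cref{lemma:dsem} gives $\dsemin{T}=\{\poset[\{a\}][\emptyset],\poset[\{b\}][\emptyset]\}$, two distinct singleton attacks; each has a unique maximal chain of length one, so the inner $\bigoperSAND$ and the middle $\bigoperAND$ of \eqref{eq:metrdat} are trivial and $\metrA{\cdot}$ returns the attribute of that \BAS, whence $\metr{T}=\attr{a}\operOR\attr{b}$. Next, $T=\AND(a,b)$: the $\AND$-clause gives the single minimal attack $O=\poset[\{a,b\}][\emptyset]$; since its order is empty, $a$ and $b$ are incomparable and $\MC_O=\{(a),(b)\}$ --- two length-one chains --- so the $\bigoperSAND$ is trivial on each and $\metrA{O}=\attr{a}\operAND\attr{b}$, giving $\metr{T}=\attr{a}\operAND\attr{b}$. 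Finally, $T=\SAND(a,b)$: the $\SAND$-clause gives the single minimal attack $O=\poset[\{a,b\}][\{(a,b)\}]$ with $a\prec b$, so $\MC_O$ is the single chain $(a,b)$, the outer $\bigoperAND$ over maximal chains is trivial, and $\metrA{O}=\attr{a}\operSAND\attr{b}$, so $\metr{T}=\attr{a}\operSAND\attr{b}$.

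This is pure bookkeeping, so I do not expect a genuine obstacle; the only two points needing care are (i) the distinctness of $a$ and $b$, already flagged above, and (ii) correctly evaluating the maximal-chain sets $\MC_O$ --- an empty order contributes one singleton chain per element, whereas a two-element total order contributes a single chain of length two --- and then checking that the nested big-operators in \eqref{eq:metrdat}, indexed over a one- or two-element set in each case, collapse to exactly the three stated expressions. This \namecref{lemma:indstep} will serve as the base step of the induction on the structure of $T$ that proves \Cref{theo:bottom_up_DAT}.
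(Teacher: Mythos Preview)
Your proposal is correct and follows essentially the same approach as the paper: compute $\dsemin{T}$ from \Cref{lemma:dsem} in each of the three cases, then plug into \eqref{eq:metrdat} and simplify. You are more explicit than the paper about identifying the maximal chains and about the implicit assumption $a\neq b$, but the argument is the same.
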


\begin{proof}
Most of this follows directly from the definition of $\widecheck{\alpha}$ and \Cref{lemma:dsem}.
\begin{enumerate}
    \item In this case $\dsemin{T} = \{\poset[\{a\}][\varnothing],\poset[\{b\}][\varnothing]\}$, and from \eqref{eq:metrdat} we get $\metr{T} = \metrA{\poset[\{a\}][\varnothing]} \operOR \metrA{\poset[\{b\}][\varnothing]} = \alpha(a) \operOR \alpha(b)$.
    \item In this case $\dsemin{T} = \{\poset[\{a,b\}][\varnothing]\}$, and from \eqref{eq:metrdat} we get $\metr{T} = \metrA{\poset[\{a,b\}][\varnothing]} = \alpha(a) \operAND \alpha(b)$.
    \item In this case $\dsemin{T} = \{\poset[\{a,b\}][\{(a,b)\}]\}$, and from \eqref{eq:metrdat} we get $\metr{T} = \metrA{\poset[\{a,b\}][\{(a,b)\}]} = \alpha(a) \operSAND \alpha(b)$. \qedhere
\end{enumerate}
\end{proof}

\begin{reptheorem}{theo:bottom_up_SAT}
	Let \T be a static \AT with tree structure,
	$\attrOp$ an attribution on \Vdom,
	and $\domain=(\Vdom,\operOR,\operAND)$ a semiring attribute domain.
	Then $\metr{\T} = \BUSAT(\T,\ATroot,\attrOp,\domain)$.
\end{reptheorem}

\begin{reptheorem}{theo:bottom_up_DAT}
	Let \T be a \DAT with tree structure,
	$\attrOp$ an attribution on \Vdom,
	and $\domain=(\Vdom,\operOR,\operAND,\operSAND)$ a
	semiring dynamic attribute domain.
	Then $\metr{\T} = \BUDAT(\T,\ATroot,\attrOp,\domain)$.
\end{reptheorem}

\begin{proof}
We prove by induction on $v$ that $\metr{T_v} = \BUDAT(T,v,\attrOp,\domain)$. for convenience we assume that $T$ is binary. If $v$ is a BAS, then $\metr{T_v} = \metrA{\poset[\{a\}][\varnothing]} = \alpha(a)$. Now suppose $v = \OR(v_1,v_2)$. Since $T$ is a tree, every inner node is a module, in particular $v$. Let $\tilde{T}$ be the tree obtained by replacing $v_1$ and $v_2$ by BASes $\tilde{v}_i$, and let $\beta$ be an attribution on $\tilde{T}$ given by $\beta(\tilde{v}_i) = \metr{T_{v_i}}$; then by \Cref{thm:mod} one has $\widecheck{\beta}(\tilde{T}) = \metr{T_v}$. By the induction hypothesis and \Cref{lemma:indstep} we now have
\begin{align*}
\metr{T} &= \widecheck{\beta}(\tilde{T}) \\
&= \beta(\tilde{v}_1) \operOR \beta(\tilde{v}_2)\\
&= \metr{T_{v_1}} \operOR \metr{T_{v_2}} \\
&= \BUDAT(T,v_1,\attrOp,\domain) \operOR \BUDAT(T,v_2,\attrOp,\domain) \\
&= \BUDAT(T,v\attrOp,\domain).
\end{align*}
The proof for AND- and SAND-gates is analogous.
\end{proof}

\begin{reptheorem}{theo:idempotent}
Let $T = (N,E)$ be a SAT, let $D = (V,\operOR,\operAND)$ be an attribute domain, and let $\alpha\colon N \rightarrow V$ be an attribution. Suppose that $D$ is a both idempotent and absorbing. Then $\metr{T} = \mathtt{BU_{SAT}}(T,R_T,\alpha,D)$.
\end{reptheorem}

\begin{proof}
For a node $v$, let $\operatorname{Suc}_v \in \allSuites$ be the suite of attacks that activate $v$; then $\dsem{v}$ is the set of minimal elements in $\operatorname{Suc}_v$ under the partial order $\subseteq$. It follows from the absorption axiom that
\begin{equation*}
\bigoperOR_{A \in \dsem{v}} \bigoperAND_{a \in A} \alpha(a) = \bigoperOR_{A \in \operatorname{Suc}_v} \bigoperAND_{a \in A} \alpha(a).
\end{equation*}
We now prove the theorem by showing that $\widecheck{\alpha}(v) = \BUSAT(T,v,\alpha,D)$ by induction over $T$. It is clear for BASes. Suppose $v = \OR(v_1,\ldots,v_n)$; then $\operatorname{Suc}_{v_i} = \bigcup_{i} \operatorname{Suc}_{v_i}$. It follows that
\begin{align}
\widecheck{\alpha}(v) &= \bigoperOR_{A \in \bigcup_i \operatorname{Suc}_{v_i}} \widehat{\alpha}(A) \nonumber\\
&= \bigoperOR_{A \in \bigsqcup_i \operatorname{Suc}_{v_i}} \widehat{\alpha}(A) \label{eq:idempotent1}\\
&= \bigoperOR_{i \leq n} \bigoperOR_{A \in \operatorname{Suc}_{v_i}} \widehat{\alpha}(A) \nonumber\\
&= \bigoperOR_{i \leq n} \BUSAT(T,v_i,\alpha,D)
\label{eq:idempotent2} \\
&= \BUSAT(T,v,\alpha,D).\nonumber
\end{align}
Here \eqref{eq:idempotent1} follows from the idempotence of $\operOR$, and \eqref{eq:idempotent2} follows from the induction hypothesis.

Now suppose that $v = \AND(v_1,\ldots,v_n)$; then $\operatorname{Suc}_v = \bigcap_i \operatorname{Suc}_{v_i}$. We claim that
\begin{equation*}
\bigcap_i \operatorname{Suc}_{v_i} = \left\{\bigcup_i A_i \middle| \forall i\colon A_i \in \operatorname{Suc}_{v_i}\right\}.
\end{equation*}
The claim is proven as follows. On one hand, if $A \in \bigcap_i \operatorname{Suc}_{v_i}$, then we can write $A = \bigcap_i A$, which shows that it is part of the right hand side. On the other hand, each $\operatorname{Suc}_{v_i}$ is upward closed: if $A \in \operatorname{Suc}_{v_i}$ and $A \subseteq A'$, then $A' \in \operatorname{Suc}_{v_i}$. It follows that if $A_i \in \operatorname{Suc}_{v_i}$ for all $i$, then $\bigcup_i A_i \in \bigcap_i \operatorname{Suc}_{v_i}$. This proves the claim. From the claim we find
\begin{align}
\widecheck{\alpha}(v) &= \bigoperOR_{A \in \cap_i \operatorname{Suc}_{v_i}} \bigoperAND_{a \in A} \alpha(a) \nonumber\\
&= \bigoperOR_{\forall i\colon A_i \in \operatorname{Suc}_{v_i}} \bigoperAND_{a \in \bigcup_i A_i} \alpha(a) \nonumber\\
&= \bigoperOR_{\forall i\colon A_i \in \operatorname{Suc}_{v_i}} \bigoperAND_{a \in \bigsqcup_i A_i} \alpha(a) \label{eq:idempotent3}\\
&= \bigoperOR_{\forall i\colon A_i \in \operatorname{Suc}_{v_i}} \bigoperAND_{i \leq n} \bigoperAND_{a \in A_i} \alpha(a) \nonumber\\
&= \bigoperAND_{i \leq n} \bigoperOR_{A_i \in \operatorname{Suc}_{v_i}}  \bigoperAND_{a \in A_i} \alpha(a) \nonumber\\
&= \bigoperAND_{i \leq n} \BUSAT(T,v_i,\alpha,D) \label{eq:idempotent4}\\
&= \BUSAT(T,v,\alpha,D).\nonumber
\end{align}
Here we use the idempotence of $\operAND$ in \eqref{eq:idempotent3} and the induction hypothesis in \eqref{eq:idempotent4}.
\end{proof}

\begingroup

\def\root{\ensuremath{\BDDroot[{\bddT}]}\xspace}
\def\IC#1{\ensuremath{\mathrm{IC}_{#1}}\xspace}

\begin{reptheorem}{theo:bottom_up_BDD}
	Let \T be a static \AT, \bddT its \BDD encoding over \poset[\BAS][{<}],
	$\attrOp$ an attribution on \Vdom,
	and $\ndomain=(\Vdom,\operOR,\operAND,\ntOR,\ntAND)$ an absorbing unital semiring attribute domain.
	Then $\metr{\T} = \BUBDD(\bddT,\root,\attrOp,\ndomain)$.
\end{reptheorem}
\begin{proof}	\label{theo:bottom_up_BDD:proof}

Let $\bddT = (W,\low,\high,\BDDlab)$ be the BDD-encoding of $T$. For $w \in W$ and $p \in P(w)$, define $\metrA{p}$ and $\metr{w}$ as in \ref{eq:bddalg1} and \ref{eq:bddalg2}; we claim that $\metr{R_{B_T}} = \metr{T}$. To see this, note that by \Cref{thm:BDDpaths} we have 
\begin{align*}
	\metr{R_{B_T}}
		~= \bigoperOR_{p \in P(R_{B_T})} \metrA{p} 
		~= \bigoperOR_{A \in \pi(P(R_{B_T}))} \metrA{A}.
\end{align*}
Each element of $\pi(P(R_{B_T}))$ is succesful, and hence contains a minimal attack. Furthermore, $\ssem{T} \subseteq \pi(P(R_{B_T}))$. Since $D$ is absorbing, this implies 
\begin{equation*}
	\bigoperOR_{A \in \pi(P(R_{B_T}))} \metrA{A}
		~= \bigoperOR_{A \in \ssem{T}} \metrA{A}
		~=~ \metr{T},
\end{equation*}
completing the proof of our claim.

We complete the proof of the theorem by showing by induction that $\BUBDD(\bddT,w,\attrOp,\ndomain) = \metr{w}$ for all $w \in W$. It is certainly true for $w = \top$ for which $P(\top)$ has precisely one element, namely a path of length 0. For this path $p$ we find that $\metrA{p}$ is the empty $\operAND$-ation, so $\metrA{p} = \ntAND$ and $\metr{\top} = \ntAND$.  For the other base case we have $P(\bot) = \varnothing$, so $\metr{\bot} = \ntOR$. 

Now let us consider a non-terminal node $w$, and assume the induction hypothesis has been proven for its children $w_l = \low(w),w_h = \high(w)$. Then
\begin{align*}
P(w)
	&= \{p \cup \{(w,w_l)\} \mid p \in P(w_l)\}\\
	& \quad \quad \cup \{p \cup \{(w,w_h)\} \mid p \in P(w_h)\}.
\end{align*}
It follows that
\begin{align*}
\metr{w}
	&= \bigoperOR_{p \in P(w)} \metrA{p} \\
	&= \left(
		\bigoperOR_{p \in P(w_l)}\!\!\!\! \metrA{p\cup\{(w,w_l)\}}
	  \right)\!\!{\operOR}\!\!\left(
		\bigoperOR_{p \in P(w_h)}\!\!\!\! \metrA{p\cup\{(w,w_h)\}}
	  \right)\\
	&= \bigoperOR_{p \in P(w_l)} \bigoperAND_{\substack{w' \in W_n\colon\\ (w',\high(w')) \in p}}\\
	&\quad \quad \operOR \bigoperOR_{p \in P(w_h)}\left(\attr{\BDDlab(w)} \operAND \bigoperAND_{\substack{w' \in W_n\colon\\ (w',\high(w')) \in p}}\right)\\
	&= \metr{w_l} \operOR (\attr{\BDDlab(w)} \operAND \metr{w_h}) \\
	&= \BUBDD(\bddT,w_l,\attrOp,\ndomain) \\
		&\qquad \operOR (\attr{\BDDlab(w)} \operAND \BUBDD(\bddT,w_h,\attrOp,\ndomain)) \\
	&= \BUBDD(\bddT,w,\attrOp,\ndomain).
\qedhere
\end{align*}

\end{proof}
\endgroup  

\section{Proof of Theorem \ref{theo:NP_hard}} \label[appendix]{app:hard}

\begin{reptheorem}{theo:NP_hard}
	Given a DAG-structured static AT, the problem of computing any successful attack of minimal size is NP-hard.
\end{reptheorem}
\begin{proof}	\label{theo:NP_hard:proof}
	\def\minSAT{\textit{minSAT}\xspace}
	\def\CNFSAT{\textit{CNFSAT}\xspace}
	\def\posvar#1{\ensuremath{\tilde{#1}}\xspace}
	\def\posphi{\ensuremath{\widehat{\varphi}}\xspace}
	\def\minmap{\operatorname{\mathit{g}}}
	A static attack tree \T is equivalent to a logical formula whose atoms are the elements of \BAS: denote this formula \LT and note that none of its atoms appears negated.
	The problem of finding the smallest minimal attack in \T can thus be reformulated as finding the smallest $A\subseteq\BAS$ whose elements must evaluate to $\top$ to satisfy \LT.
	Denote this problem \minSAT.
	We now reduce \CNFSAT, the satisfiability problem for arbitrary logic formulae in conjunctive normal form, to solving \minSAT.
	Let $\varphi=\bigwedge_{i=1}^n c_i=\bigwedge_{i=1}^n\bigvee_{j=1}^{N_i}\ell_i^j$ be one such arbitrary formula with atoms $\mathrm{A}=\{a_k\}_{k=1}^m$.
	Define $\pos(\ell)\doteq\posvar{a}$ if the literal $\ell=\neg a$, and $\pos(\ell)\doteq a$ otherwise, where $\posvar{a}$ is a fresh non-negated (``positive'') atom.
	Now let $\hat{c}_i\doteq\bigvee_{j=1}^{N_i}\pos(\ell_i^j)$ and $\hat{a}\doteq(a\lor\posvar{a})$, and consider the formula $\posphi=\bigwedge_{i=1}^n\hat{c}_i\,\bigwedge_{k=1}^m\hat{a}_k$.
	Since no atom of \posphi is negated, by \minSAT we can find some $\minmap\from\mathrm{A}\cup\posvar{\mathrm{A}}\to\BB$ that satisfies \posphi, mapping to $\top$ a \emph{minimum amount of atoms} from $\mathrm{A}\cup\posvar{\mathrm{A}}$.
	Now consider the second part of the conjunction in \posphi: satisfying $\bigwedge_{k=1}^m\hat{a}_k$ requires, minimally, mapping $m$ atoms to $\top$, e.g.\ all the $\mathrm{A}$, or all the $\posvar{\mathrm{A}}$.
	But then:
	\begin{itemize}[topsep=0pt,parsep=.5ex,leftmargin=1em]
	\item	if $\minmap$ maps \emph{exactly $m$ atoms} to $\top$, then for every $\hat{a}_k$ it mapped either $a_k$ or $\posvar{a}_k$ to $\top$\ \;$\therefore$\;\ $\varphi$ is satisfiable;
	\item	else $\exists~\hat{c}_i,a_k$ s.t.\ $a_k\in\hat{c}_i$ and $\posvar{a}_k\in\hat{c}_i$ and $\minmap(a_k)=\minmap(\posvar{a}_k)=\top$\ \;$\therefore$\;\ $\varphi$ is unsatisfiable. \qedhere
	\end{itemize}

\end{proof}

\endgroup

\section{Proof of Lemma \ref{lem:metrcomp}} \label{app:metrcomp}

\begin{replemma}{lem:metrcomp}
Let $(\Vdom,\operOR,\operAND)$ be a semiring attribute domain, and $\attrOp\from \BAS_T \to \mathfrak{M}(\BAS_{\T})$ be given by $\attr{a} = \ldb a \rdb$. Then the multiset \metr{\T} is a set, and it is a succesful attack of minimal size.
\end{replemma}

\begin{proof}
It is clear that $\operOR$ and $\operAND$ are commutative and associative. Let $a,b,c \in V = \mathbb{Z}_{\geq 0}^n$; for distributivity we need to show that $\min(a+b,a+c) = a+\min(b,c)$. Assume $b \prec c$; then either $\sum_{i=1}^n b_i < \sum_{i=1}^n c_i$, or $\sum_{i=1}^n b_i = \sum_{i=1}^n c_i$ and $\mu(b) < \mu(c)$. In the former case one has
\begin{align*}
\sum_{i=1}^n (a+b)_i &= \sum_{i=1}^n a_i + \sum_{i=1}^n b_i \\
&< \sum_{i=1}^n a_i + \sum_{i=1}^n c_i \\
&= \sum_{i=1}^n (a+c)_i 
\end{align*}
and so $a+b \prec a+c$. In the latter case one has $\sum_{i=1}^n (a+b)_i = \sum_{i=1}^n (a+c)_i$. Furthermore, $\mu$ is multiplicative, and so 
\begin{align*}
\mu(a+b)
\,=\, \mu(a)\mu(b) 
\,<\, \mu(a)\mu(c)
\,=\, \mu(a+c).
\end{align*}
Again we conclude $a+b \prec a+c$; hence $\min(a+b,a+c) = a+b = a+\min(b,c)$ in both cases, and $(V,\operOR,\operAND)$ is a semiring domain. Furthermore, for $A \in \dsem{T}$ one has $\metrA{A} = A \in \mathscr{M}(\BAS_T)$, since $\operAND$ is just $\cup$ on disjoint sets. Hence $\bigoperOR_{A \in \dsem{T}} \metrA{A} = \min\{A \in \dsem{T}\}$. By definition of $\preceq$ this is an element of $\dsem{T}$ of minimal size.
\end{proof}

\end{document}